\definecolor{grey}{gray}{0.7}
\newtheorem{example}{Example}[section]
\newtheorem{theorem}[example]{Theorem}
\newtheorem{notation}[example]{Notation}
\newtheorem{observation}[example]{Observation}
\newtheorem{definition}[example]{Definition}
\newtheorem{proposition}[example]{Proposition}
\newtheorem{corollary}[example]{Corollary}
\newtheorem{lemma}[example]{Lemma}
\newcommand{\RR}{\mathbb{R}}
\newcommand{\bigO}{\mathcal{O}}
\newcommand{\ie}{i.\,e.,\xspace}
\newcommand{\eg}{e.\,g.,\xspace}
\newcommand{\etal}{et al.\xspace}
\newcommand{\wrt}{w.\,r.\,t.\xspace}
\newcommand{\NN}{\mbox{\rm I$\!$N}}
\newcommand{\closu}[1]{\overline{#1}}
\newcommand{\djoko}{Djokovi\'{c} relation\xspace}
\newcommand{\subE}[1]{{#1}_{\mathcal{E}}}
\newsavebox{\Proofsym}
\def\zeit{\number\shorthour:\ifnum\shortminute<10 0\number\shortminute
\else\number\shortminute\fi}
\begin{document}


\title{Finding all convex cuts of a plane graph in polynomial
  time\footnote{Parts of this paper have been published in a
    preliminary form in \emph{Proceedings of the 8th International
      Conference on Algorithms and Complexity (CIAC'13)~\cite{Glantz2013c}.}}}

\author{Roland Glantz \and Henning Meyerhenke}

\author{{\sc Roland Glantz\thanks{Homepage: \url{http://parco.iti.kit.edu/glantz/}}, \, Henning Meyerhenke\thanks{Homepage: \url{http://parco.iti.kit.edu/henningm/}}} \\ 
{\small\em Institute of Theoretical Informatics} \\ 
{\small \em Karlsruhe Institute of Technology (KIT)} \\
{\small\em Karlsruhe, Germany}}

\maketitle

\begin{abstract}
Convexity is a notion that has been defined for subsets of $\RR^n$ and
for subsets of general graphs. A convex cut of a graph $G=(V, E)$ is a
$2$-partition $V_1 \dot{\cup} V_2=V$ such that both $V_1$ and $V_2$
are convex, \ie shortest paths between vertices in $V_i$ never leave
$V_i$, $i \in \{1, 2\}$. Finding convex cuts is $\mathcal{NP}$-hard for
general graphs. To characterize convex cuts, we employ the \djoko,
a reflexive and symmetric relation on the edges of a graph that is
based on shortest paths between the edges' end vertices.

It is known
for a long time that, if $G$ is bipartite and the \djoko is transitive
on $G$, \ie $G$ is a partial cube, then the cut-sets of $G$'s convex
cuts are precisely the equivalence classes of the \djoko. In
particular, any edge of $G$ is contained in the cut-set of exactly one
convex cut. We first characterize a class of plane graphs that we call {\em
  well-arranged}. These graphs are not necessarily partial cubes, but
any edge of a well-arranged graph is contained in the cut-set(s) of at
least one convex cut. Moreover, the cuts can be embedded into the
plane such that they form an arrangement of pseudolines, or a slight
generalization thereof. Although a well-arranged graph $G$ is not
necessarily a partial cube, there always exists a partial cube that
contains a subdivision of $G$.

We also present an algorithm that uses the \djoko for computing all
convex cuts of a (not necessarily plane) bipartite graph in
$\bigO(|E|^3)$ time. Specifically, a cut-set is the cut-set of a
convex cut if and only if the \djoko holds for any pair of edges in
the cut-set.

We then characterize the cut-sets of the convex cuts of a general
graph $H$ using two binary relations on edges: (i) the \djoko on the
edges of a subdivision of $H$, where any edge of $H$ is subdivided
into exactly two edges and (ii) a relation on the edges of $H$ itself
that is not the \djoko. Finally, we use this characterization to present
the first algorithm for finding all convex cuts of a plane graph in 
polynomial time.\\

\noindent \textbf{Keywords:} Convex cuts, \djoko, partial cubes, plane
graphs, bipartite graphs
\end{abstract}

%
%
%
\section{Introduction}
A {\em convex $k$-partition} of an undirected graph $G=(V,E)$ is a
partition $(V_1, \dots, V_k)$ of $V$ such that the subgraphs of $G$
induced by $V_1, \dots V_k$ are all convex. A convex subgraph of $G$,
in turn, is a subgraph $S$ of $G$ such that for any pair of vertices
$v,w$ in $S$ all shortest paths from $v$ to $w$ in $G$ are fully
contained in $S$. The vertex set of a convex subgraph is called {\em
  convex set}.

A {\em convex cut} of $G$ is a convex $2$-partition of $G$. If $G$ has
a convex $k$-partition, then $G$ is said to be {\em
  $k$-convex}. Artigas \etal~\cite{Artigas20111968} showed that, for a
given $k \geq 2$, it is $\mathcal{NP}$-complete to decide whether a
(general) graph is $k$-convex. Moreover, given a bipartite graph
$G=(V,E)$ and an integer $l < \vert V \vert$, it is
$\mathcal{NP}$-complete to decide whether there exists a convex set
with at least $l$ vertices~\cite{DouradoPRS12convexity}.

There exists a different notion of convexity for plane graphs. A plane
graph is called convex if all of its faces are convex polygons. This
second notion is different and {\em not} object of our investigation.
The notion of convexity in acyclic directed graphs, motivated by
embedded processor technology, is also
different~\cite{Balister2009509}. There, a subgraph $S$ is called
convex if there is no {\em directed} path between any pair $v, w$ in
$S$ that leaves $S$. In addition to being directed, these paths do not
have to be {\em shortest} paths as in our case.

Applications that potentially benefit from convex cuts include
data-parallel numerical simulations on graphs. Here the graph is
partitioned into parts that have nearly the same number of
vertices~\cite{bichot2011graph,BulucMSSS13recent}. For some linear solvers used in these
simulations, the {\em shape} of the parts, in particular short
boundaries, small aspect ratios, but also connectedness and smooth
boundaries, plays a significant role~\cite{MeyerhenkeMS09new}. Convex
subgraphs of meshes typically admit these properties. Another example
is the preprocessing of road networks for shortest path queries by
partitioning according to natural
cuts~\cite{DBLP:conf/ipps/DellingGRW11}. The definition of a natural
cut is not as strict as that of a convex cut, but they have a related
motivation.

Due to the importance of graph partitions in theory and
practice~\cite{bichot2011graph}, it is natural to ask whether the time
complexity of finding convex cuts is polynomial for certain types of
input graphs. In this paper, we will see that polynomial-time
algorithms exist for a sub-class of plane graphs and for bipartite
graphs. Specifically, a cut-set is the cut-set of a convex cut of a
bipartite graph if and only if the \djoko holds for any pair of edges
in the cut-set.

We also characterize the cut-sets of the convex cuts of a general
graph $H$ in terms of two binary relations, each on a different kind of
edges: the edges of a subdivision of $H$, where any edge of $H$
is subdivided into two edges and (ii) the edges of $H$ itself. The
relation on the first kind of edges is the \djoko (see
Section~\ref{sec:prelim}), and the relation on the second kind of
edges, denoted by $\tau$, is such that $e~\tau~f$ iff the distance
between any end vertex of $e$ to any end vertex of $f$ is the same.

\subsection{Related work}
\label{sub:rel-work}
%
Artigas \etal~\cite{Artigas20111968} show that every connected chordal
graph $G=(V,E)$ is $k$-convex, for $1 \leq k \leq |V|$. They also
establish conditions on $|V|$ and $p$ to decide whether the $p$th
power of a cycle is $k$-convex. Moreover, they present a linear-time
algorithm that decides whether a cograph is $k$-convex.

Our methods for characterizing and finding convex cuts of a plane
graph $G$ are motivated by the work in Chepoi \etal~\cite{Chepoi97a}
who defined alternating cuts and specified conditions under which
alternating cuts are convex cuts and vice versa. Our approach is more
myopic, though. We call a face $F$ of $G$ even [odd] if $\vert E(F)
\vert$ is even, and an alternating path is one that cuts an even face
$F$ such that the number of vertices in the two parts of $E(F)$ is
equal. In an odd face the alternating path makes a slight left or
right turn so that the number of vertices in the two parts of $E(F)$
differ by one. As in~\cite{Chepoi97a}, when following an alternating
path through the faces of $G$, a left [right] turn must be compensated
by a right [left] turn as soon as this is possible.

Plane graphs usually have alternating cuts that are not convex and
convex cuts that are not alternating. Proposition 2
in~\cite{Chepoi97a} characterizes the set of plane graphs for which
the alternating cuts coincide with the convex cuts in terms of a
condition on the boundary of {\em any} alternating cut. In this paper
we represent the alternating cuts as plane curves that we call
embedded alternating paths (EAPs)---an EAP partitions $G$ exactly like
the alternating cut it represents. In contrast to~\cite{Chepoi97a},
however, we focus on the {\em intersections} of the EAPs (\ie
alternating cuts).

If any pair of EAPs intersects at most once, we have a slight
generalization of what is known as an {\em arrangement of
  pseudolines}. The latter arise in discrete geometry, computational
geometry, and in the theory of matroids~\cite{Bjoerner99a}. Duals of
arrangements of pseudolines are known to be partial cubes (see
Section~\ref{sec:prelim}), a fact that has been applied to graphs
before by Eppstein~\cite{Eppstein2006a}, for example. For basics on
partial cubes we rely on Ovchinnikov's
survey~\cite{Ovchinnikov2008a}. The following basic fact about partial
cubes is crucial for our method to find convex cuts: partial cubes are
precisely the graphs that are bipartite and on which the
\djoko~\cite{Djokovic73a} (defined in Section~\ref{sec:prelim}) is
transitive. For a characterization of planar partial cubes see
Peterin~\cite{Peterin2008a}.
%
\subsection{Paper outline and contribution}
%
In Section~\ref{sec:ALT_EMBED} we first represent (myopic versions of)
the alternating cuts of a plane graph $G=(V,E)$, as defined
in~\cite{Chepoi97a}, by EAPs. The main work here is on the proper
embedding. We then study the case of $G$ being {\em well-arranged}, as
we call it, \ie the case in which the EAPs form an arrangement of
pseudolines, or a slight generalization thereof. We show that the dual
$\subE{G}$ of such an arrangement is a partial cube and reveal a
one-to-one correspondence between the EAPs of $G$ and the convex cuts
of $\subE{G}$. Specifically, the edges of $\subE{G}$ intersected by an
EAP form the cut-set of a convex cut of $\subE{G}$. Conversely, the
cut-set of any convex cut of $\subE{G}$ is the set of edges
intersected by a unique EAP of $G$. From (i) the one-to-one
correspondence between the EAPs of $G$ and the convex cuts of
$\subE{G}$ and (ii) the construction of $\subE{G}$ we derive that the
EAPs also define convex cuts of $G$.


In Section~\ref{sec:bipartite} we specify an $\bigO(|E|^3)$-time
algorithm to find all convex cuts of a not necessarily plane bipartite
graph. The fact that we can compute all convex cuts in bipartite
graphs in polynomial time is no contradiction to the
$\mathcal{NP}$-completeness of the decision problem whether the
largest convex set in a bipartite graph has a certain
size~\cite{DouradoPRS12convexity}. Indeed, for a cut to be convex,
{\em both} subgraphs have to be convex, whereas the complement of a
convex set is not required to be a convex set itself.

In Section~\ref{sec:general} we characterize the cut-sets of the
convex cuts of a general graph $H$ in terms of the \djoko and
$\tau$. The results of Section~\ref{sec:general} are then used in
Section~\ref{sec:general_plane} to derive new necessary conditions for
convexity of cuts of a {\em plane} graph $G$. As in the case of
well-arranged graphs, we iteratively proceed from an edge on the
boundary of a face $F$ of $G$ to another edge on the boundary
''opposite`` of $F$. This time, however, ``opposite'' is with respect
to the \djoko on a subdivision of $G$. Thus we arrive at a polynomial-time
algorithm that finds all convex cuts of $G$. We correct an error in a
preliminary version~\cite{Glantz2013c} of this paper. The running time
is now $\bigO(\vert V \vert^7)$ instead of $\bigO(\vert V \vert^3)$.

%
%
%
\section{Preliminaries}
\label{sec:prelim}
\label{sub:notation}
Unless stated otherwise, $G = (V, E)$ is a finite, undirected,
unweighted, and two-connected graph that is free of
self-loops. Two-connectedness is not a limitation for the problem of
finding convex cuts because a convex cut cannot cut through more than
one block of $G$, and self-loops have no impact on the convex
cuts. For $e \in E$ with end points $u,v$ ($u \not= v$) we sometimes
write $e=\{u,v\}$ even when $e$ is not necessarily determined by $u$
and $v$ due to parallel edges. We use the term {\em path} in the
general sense: a path does not have to simple, and it can be a cycle.

If $G$ is plane, $V$ is a set of points in $\RR^2$, and $E$ is a set
of plane curves that intersect only at their end points which, in
turn, make up $V$. The unbounded face of $G$ is denoted by
$F_{\infty}$. For a face $F$ of $G$, we write $E(F)$ for the set
of edges that bound $F$. Our definitions and results on plane graphs
are invariant to topological isomorphism~\cite{Diestel2006a} which, in
conjunction with two-connectedness, is equivalent to combinatorial
isomorphism~\cite{Diestel2006a}. Since any plane graph is
combinatorially isomorphic to a plane graph whose edges are line
segments~\cite{Diestel2006a}, we can always resort to the case of
straight edges without loss of generality. We do so especially in our
illustrations.

We denote the standard metric on $G$ by $d_G(\cdot,\cdot)$. In this
metric the distance between $u,v \in V$ amounts to the number of edges
on a shortest path from $u$ to $v$.
A subgraph $S=(V_S,E_S)$ of a (not necessarily plane) graph $H$ is an
{\em isometric} subgraph of $H$ if $d_S(u,v)=d_H(u,v)$ for all $u,v
\in V_S$.

Following Djokovi\'{c}~\cite{Djokovic73a} and using Ovchinnikov's 
notation~\cite{Ovchinnikov2008a}, we set
\begin{equation}
W_{xy}=\{w \in V : d_{G}(w,x)<d_{G}(w,y)\} \quad \forall\{x,y\}\in E.
\end{equation}

Let $e=\{x,y\}$ and $f=\{u,v\}$ be two edges of $G$. The \djoko
$\theta$ on $G$'s edges is defined as follows:
\begin{equation}
e~\theta~f\Leftrightarrow f \mbox{~has one end vertex in $W_{xy}$ and
  one in $W_{yx}$}.
\end{equation}

The \djoko is reflexive, symmetric~\cite{Wilkeit90a}, but not
necessarily transitive. The vertex set $V$ of $G$ is partitioned by
$W_{ab}$ and $W_{ba}$ if and only if $G$ is bipartite.

A {\em partial cube} $G_q=(V_{q},E_{q})$ is an isometric subgraph of a
hypercube. Interested readers find more details on partial cubes in
Ovchinnikov's survey~\cite{Ovchinnikov2008a}; we state a few important results
for the sake of self-containment.
Partial cubes and $\theta$ are related in that a graph is a partial
cube if and only if it is bipartite and $\theta$ is
transitive. Thus, $\theta$ is an equivalence
hypercube. For a survey on partial cubes see
Ovchinnikov~\cite{Ovchinnikov2008a}.  Partial cubes and $\theta$ are
related in that a graph is a partial cube if and only if it is
bipartite and $\theta$ is transitive. Thus, $\theta$ is an equivalence
relation on $E_q$, and the equivalence classes are cut-sets of
$G_q$. Moreover, the cuts defined by these cut-sets are precisely the
convex cuts of $G_q$. If $(V_q^1,V_q^2)$ is a
convex cut, the (convex) subgraphs induced by $V_q^1$ and $V_q^2$ are
called \emph{semicubes}.  If $\theta$ gives rise to $k$ equivalence
classes $E_q^{1},\dots E_q^{k}$, and thus $k$ pairs
$(S_{a}^{i},S_{b}^{i})$ of semicubes, where the ordering of the
semicubes in the pair is arbitrary, one can derive a \emph{Hamming
  labeling} $b:V_q \mapsto \{0,1\}^{k}$ by setting

\begin{equation}
b(v)_{i}=\left\{\begin{array}{ll}
0 & \quad \mbox{if $v\in S_{a}^{i}$}\\
1 & \quad \mbox{if $v\in S_{b}^{i}$}
\end{array}\right.
\end{equation}

In particular, $d_{G_q}(u,v)$ amounts to the Hamming distance between
$b(u)$ and $b(v)$ for all $u,v \in V_q$. This is a consequence of the
fact that the corners of a hypercube have such a labeling and that
$G_q$ is an isometric subgraph of a hypercube.

%
%
%
\section{Partial cubes from embedding alternating paths}
\label{sec:ALT_EMBED}
In Section~\ref{sub:alternating} we define a multiset of (not yet
embedded) alternating paths of a graph $G$. Section~\ref{sub:embed} is
devoted to embedding the alternating paths into $\RR^2$ and to the
definition of well-arranged graphs. In Section~\ref{sub:ell_one} we
study the dual of an embedding of alternating paths and show that it
is a partial cube whenever $G$ is well-arranged.
 %
%
\subsection{Alternating paths}
\label{sub:alternating}
%
Intuitively, an embedded alternating path $P$ runs through a face $F$
of $G$ such that the edges through which $P$ enters and leaves $F$ are
opposite---or nearly opposite because, if $\vert E(F) \vert$ is odd,
there is no opposite edge, and $P$ has to make a slight turn to the
left or to the right. The exact definitions leading up to (not yet
embedded) alternating paths are as follows.
\begin{definition}[Opposite edges, left, right, unique opposite edge]
Let $F \not= F_{\infty}$ be a face of $G$, and let $e, f \in
E(F)$. Then $e$ and $f$ are called {\em opposite edges} of $F$ if the
lengths of the two paths induced by $E(F) \setminus \{e,f\}$ differ by
at most one. If the two paths have different lengths, $f$ is called
the {\em left [right] opposite edge} of $e$ if starting on $e$ and
running clockwise around $F$, the shorter [longer] path comes
first. Otherwise, $e$ and $f$ are called {\em unique opposite edges}.
\label{def:opposite_edges}
\end{definition}
\begin{definition}[Alternating path graph $A(G)=(V_A,E_A)$]
The {\em alternating path graph} $A(G)=(V_A,E_A)$ of $G=(V,E)$ is the
(non-plane) graph with $V_A=E$ and $E_A$ consisting of all two-element
subsets $\{e,f\}$ of $E$ such that $e$ and $f$ are opposite edges of
some face $F \not= F_{\infty}$.
\label{def:alternating_path_graph}
\end{definition}
The alternating path graph defined above will provide the edges for
the multiset of alternating paths defined next. We resort to a {\em
  multiset} for the sake of uniformity, i.e., to ensure that any edge
of $G$ is contained in exactly two alternating paths (see
Figure~\ref{fig:saws}a).

\begin{figure}
\begin{center}
(a) \includegraphics[width=0.40\columnwidth]{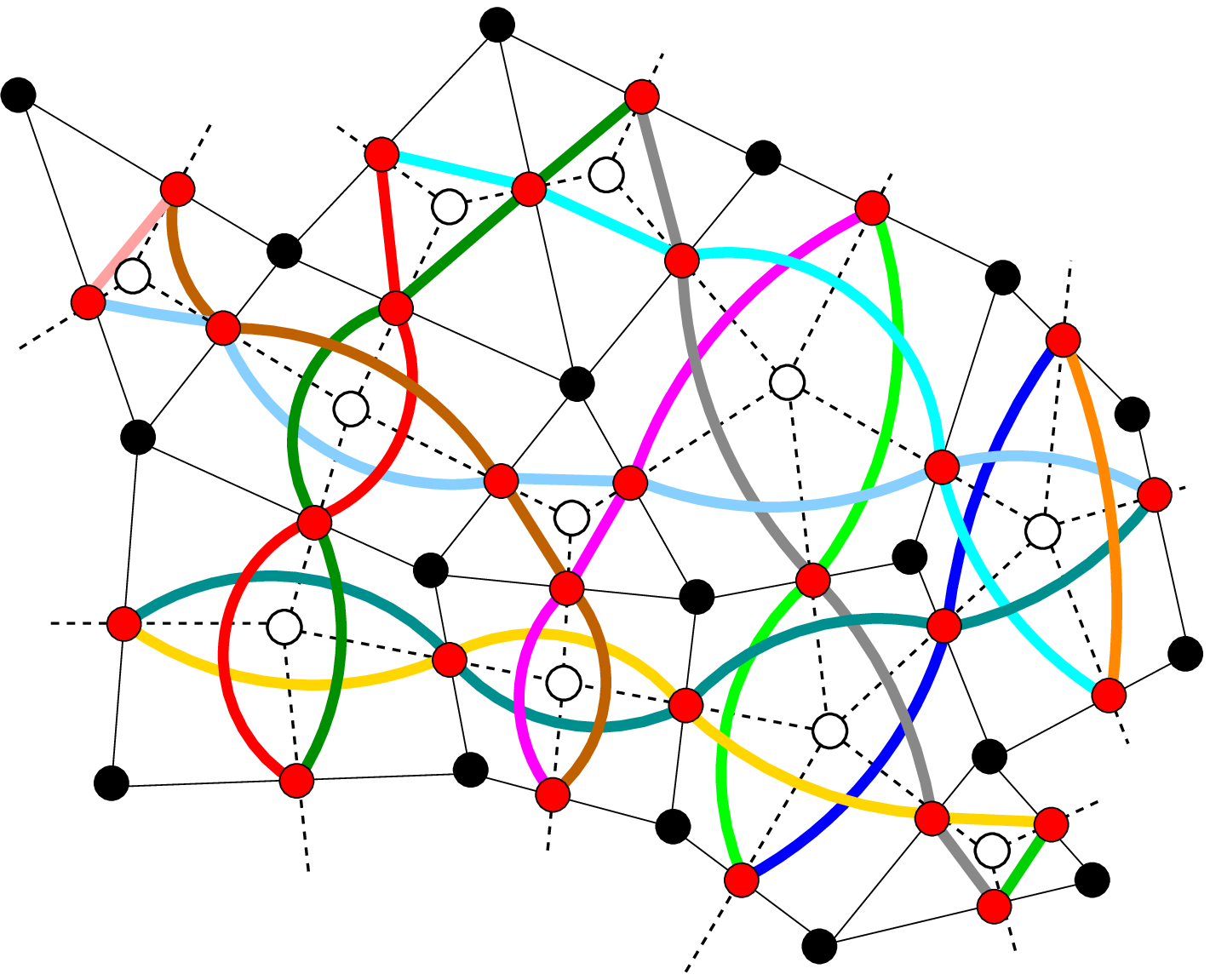}\qquad
(b) \includegraphics[width=0.40\columnwidth]{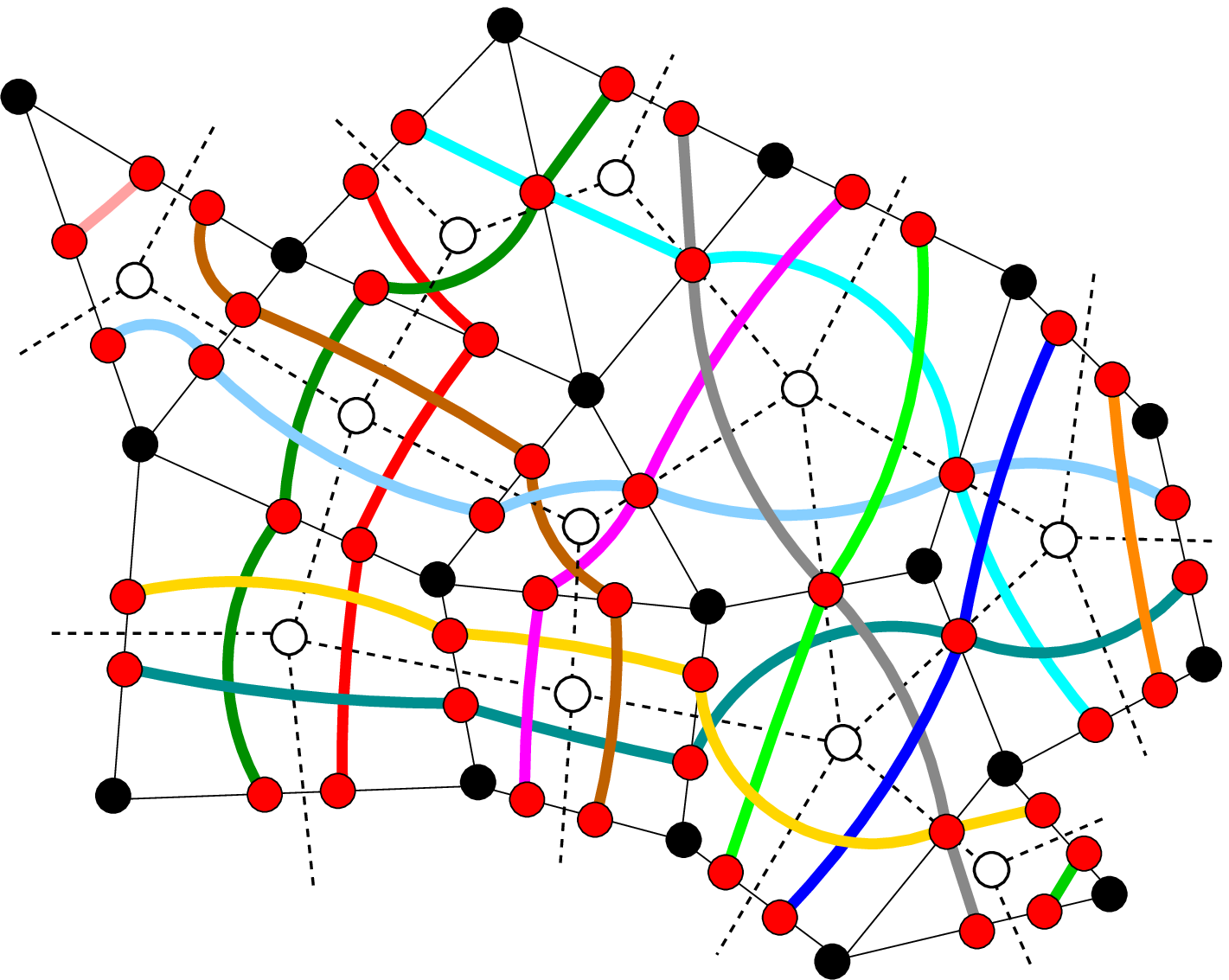}
\caption{Primal graph: black vertices, thin solid edges. Dual graph:
  white vertices, dashed edges. (a) Multiset $\mathcal{P}(G)$ of
  alternating paths: Red vertices, thick solid lines. The paths in
  $\mathcal{P}(G)$ are colored. In this ad-hoc drawing the two
  alternating paths that share a vertex, \ie an edge of $G$, go
  through the same (red) point on the edge. (b) Collection
  $\mathcal{E}(G)$ of EAPs: Red vertices, thick solid colored lines.}
\label{fig:saws}
\end{center}
\end{figure}

\begin{definition}[(Multiset $\mathcal{P}(G)$ of) alternating paths in $A(G)$]
A {\em maximal} path $P=(v_A^1, e_A^1, v_A^2, \dots e_A^{n-1},
v_A^n)$ in $A(G)=(V_A,E_A)$ is called {\em alternating} if
\begin{itemize}
\item[(i)] $v_A^i$ and $v_A^{i+1}$ are opposite for all $1 \leq i \leq n-1$ and
\item[(ii)] if $v_A^{i+1}$ is the left [right] opposite of $v_A^i$,
  and if $j$ is the minimal index greater than $i$ such that $v_A^j$
  and $v_A^{j+1}$ are not unique opposites (and $j$ exists at all),
  then $v_A^{j+1}$ is the right [left] opposite of $v_A^j$.
\end{itemize}

We (arbitrarily) select one path from each pair formed by an
alternating path $P$ and the reverse of $P$. The multiset
$\mathcal{P}(G)$ contains all selected paths: the
multiplicity of $P$ in $\mathcal{P}(G)$ is two if $v_A^{i+1}$ is a
unique opposite of $v_A^i$ for all $1 \leq i \leq n-1$, and one
otherwise.
\label{def:multiset}
\end{definition}

\subsection{Embedding of alternating paths}
\label{sub:embed}
%
In this section we embed the alternating paths of a plane graph $G$
into $\RR^2$. We may assume that the edges of $G$ are straight line
segments (see Section~\ref{sec:prelim}). An edge $\{e,f\}$ of an
alternating path turns into a non-self-intersecting plane curve with
one end point on $e$ and the other end point on $f$. An alternating
path with multiplicity $m \in \{1,2\}$ gives rise to $m$ embedded
paths. Visually, we go from Figure~\ref{fig:saws}a to
Figure~\ref{fig:saws}b.

Note that any edge $e$ of $G$ is contained in exactly two alternating
paths. For any $e$ that separates two bounded odd faces we
predetermine a point $s$ on $e$'s interior and require that both
alternating paths containing $e$ must run through $s$. If $e$ does not
separate two odd faces, we predetermine two points, $s_1$ and $s_2$,
on $e$'s interior and require that one alternating path runs through
$s_1$ and the other one runs through $s_2$. We refer to $s$, $s_1$ and
$s_2$ as {\em slots} of $e$. If $P=(v_A^1, e_A^1, v_A^2, \dots
e_A^{n-1}, v_A^n)$ is an alternating path, let $F_i = F_i(P)$ be the
$i$th (bounded) face of $G$ that will be traversed by embedded $P$,
\ie the (unique) face with $v_A^i, v_A^{i+1} \in E(F_i)$. Since we
required that $G$ is two-connected, we have $v_A^i \neq
v_A^{i+1}$. Thus, if $v_A^i$ has two slots, there exists a
well-defined left and right slot from the perspective of standing on
$v_A^i$ and looking into $F_i$, $1 \leq i < n$. Finally, left and
right on $v_A^n$ is from the perspective of looking into $F_{\infty}$.

The overall scenario is that we embed the alternating paths one
after the other, where the order of the paths is arbitrary. The
following rules for an individual path $P$ then determine which slots
are occupied by which alternating paths. For an example of slot choice
see Figure~\ref{fig:crossings}a,b. The variable $a(P)$ is zero if and
only if $P$ makes no left and no right turn. Otherwise, $a(P)$
indicates the preference for the next slot at any time.

\begin{enumerate}
\item If $P$ has no left turn and no right turn, set $a(P)$ to
  $0$. Otherwise, if the first turn of $P$ is a left [right] turn, set
  $a(P)$ to $l$ [$r$].
\item Let $s_l$ [$s_r$] be the left [right] slot on $v_A^1$. If $a(P)
  = 0$, choose a vacant slot (arbitrarily if both slots are
  vacant). If $a(P) = l$ [$a(P) = r$], occupy the left [right] slot
  if that slot is still vacant. Otherwise, occupy the alternative slot
  and set $a(P) = r$ [$a(P) = l$].
\item Assume that we have found slots for $v_A^1, \dots v_A^i$.
 
\begin{itemize}
\item If $a(P) = 0$ and the slot occupied on $v_A^i$ was the left
  [right] slot, then occupy the left [right] slot on $v_A^{i+1}$.
\item If $v_A^{i+1}$ has only one slot, then occupy it (single slots can be
  occupied by two paths). If $(a(P) = l)$ [$(a(P) = r)$], then set $(a(P) =
  r)$, [$(a(P) = l)$].
\item If $v_A^{i+1}$ has two slots and $a(P) = l$ [$a(P) = l$],
  then occupy the left [right] slot, if vacant. Otherwise, occupy the
  alternative slot and set $a(P) = r$ [$a(P) = l$].
\end{itemize}
\end{enumerate}

The embedding of the alternating paths will be such that two paths
that share a point $p$ will always cross at $p$, and not just touch
(see Proposition~\ref{prop:share} and Figure~\ref{fig:crossings}c). We
are not interested in the exact course of the embedded alternating
paths (EAPs), but only in their {\em intersection pattern}, \ie
whether certain pairs of EAPs cross in a certain face or on a certain
edge. The intersection pattern is not going to be unique, but our
central definition, \ie that of a well-arranged graph, will be
invariant to ambiguities of intersection patterns (see
Proposition~\ref{prop:egal}).

Next we formulate rules for embedding a single edge $\{e, f\}$ of an
alternating path into $\RR^2$. If $F$ is the unique face of $G$ with
$\{e, f\} \subset E(F)$, we embed $\{e, f\}$ into $\closu{F} = F \cup
E(F)$. To this end, we first represent $\closu{F}$ by a {\em
  regular} filled polygon $\closu{F_r}$ with the same number of
sides. We then embed $\{e, f\}$ into $\closu{F_r}$ as a line segment
$L$ between two points on the sides of $\closu{F_r}$. Due to the
Jordan-Sch\"{o}nflies theorem~\cite{Thomassen92a}, there exists a
homeomorphism $\hslash:\closu{F_r} \mapsto \closu{F}$. The embedding
of $\{e, f\}$ into $\closu{F}$ is then given by $\hslash(L)$. Since
$\hslash$ is a homeomorphism, the intersection pattern of the line
segments in $\closu{F}$ coincides with that in $\closu{F_r}$ (see
Figures~\ref{fig:intersection_patterns}a and~\ref{fig:homeo}a).

\begin{figure}[t]
\begin{center}
(a) \includegraphics[height=40mm]{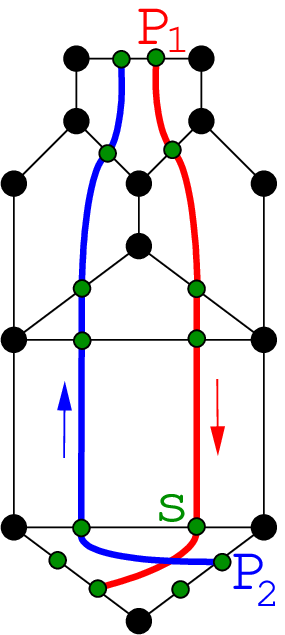}\qquad
(b) \includegraphics[height=40mm]{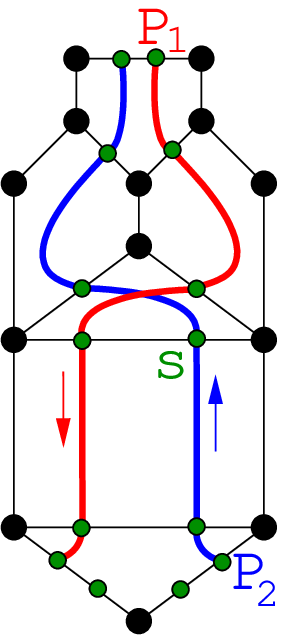}\qquad
(c) \includegraphics[height=35mm]{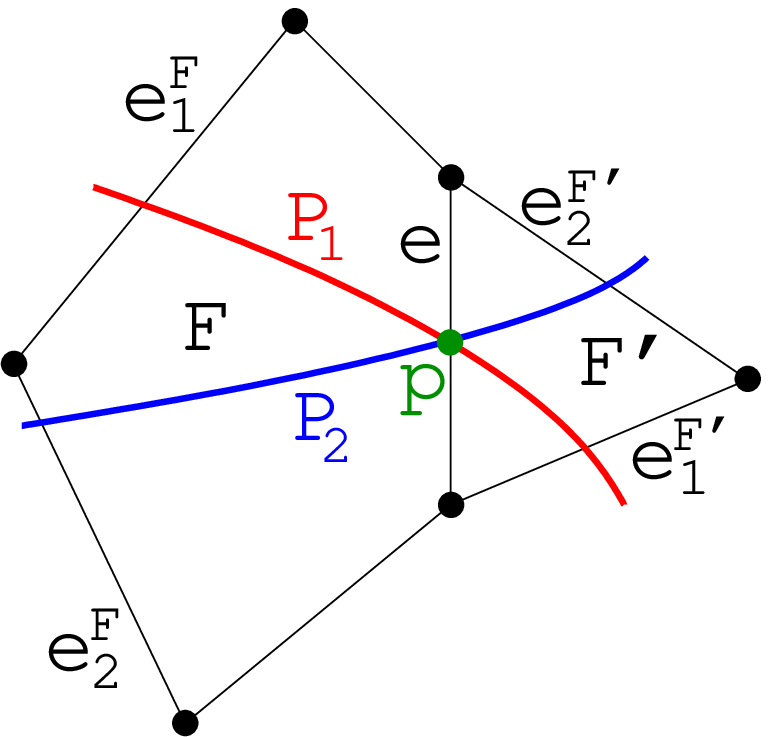}
\caption{(a,b) Slots are colored in green, and slot conflicts occur at
  $s$.(a) $P_1$ picked the slots before $P_2$. (b) $P_2$ picked the
  slots before $P_1$. (c) Illustration to proof of
  Proposition~\ref{prop:share}.}
\label{fig:crossings}
\end{center}
\end{figure}

\begin{figure}[b]
\begin{center}
(a) \includegraphics[height=40mm]{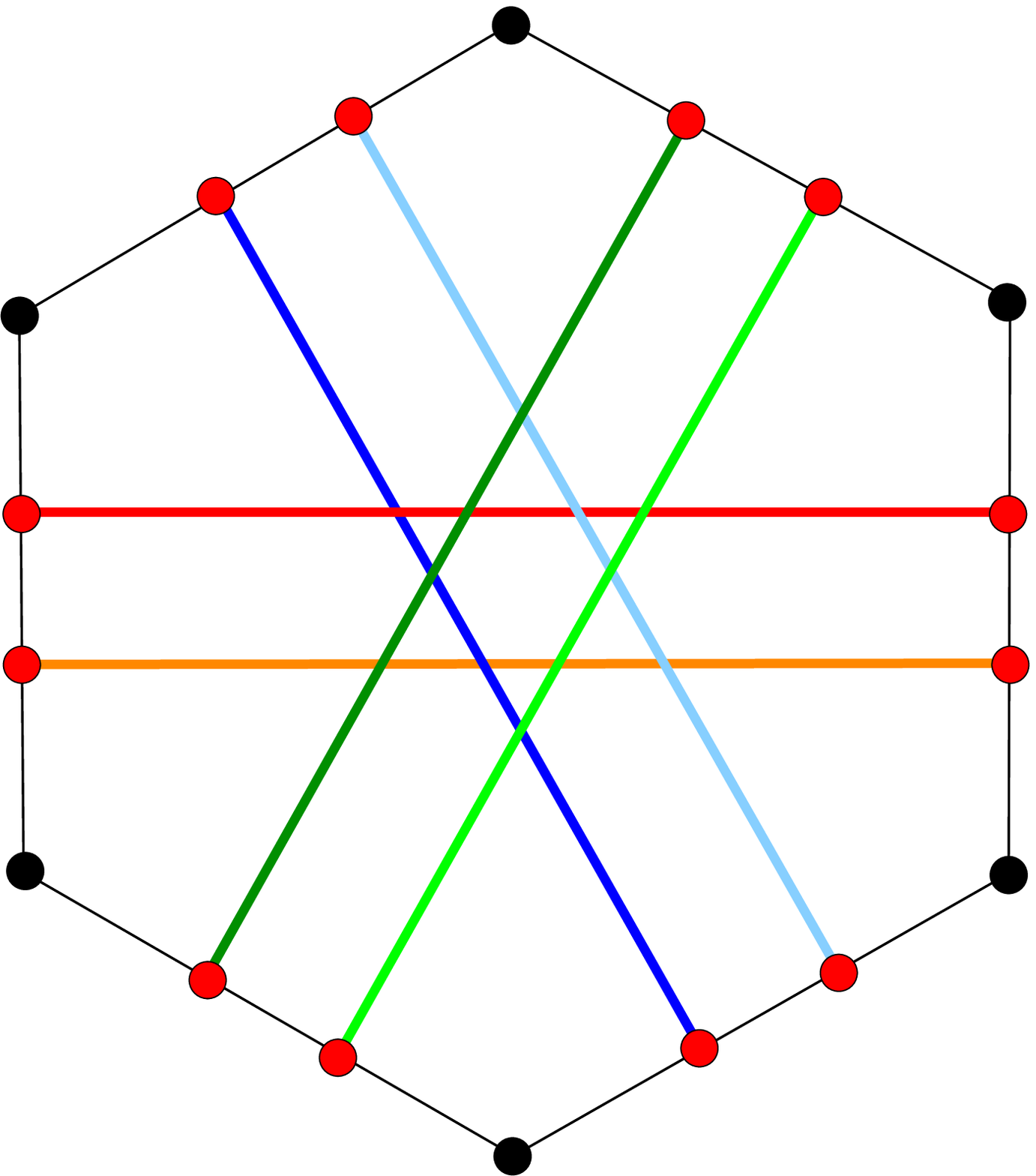}
(b) \includegraphics[height=40mm]{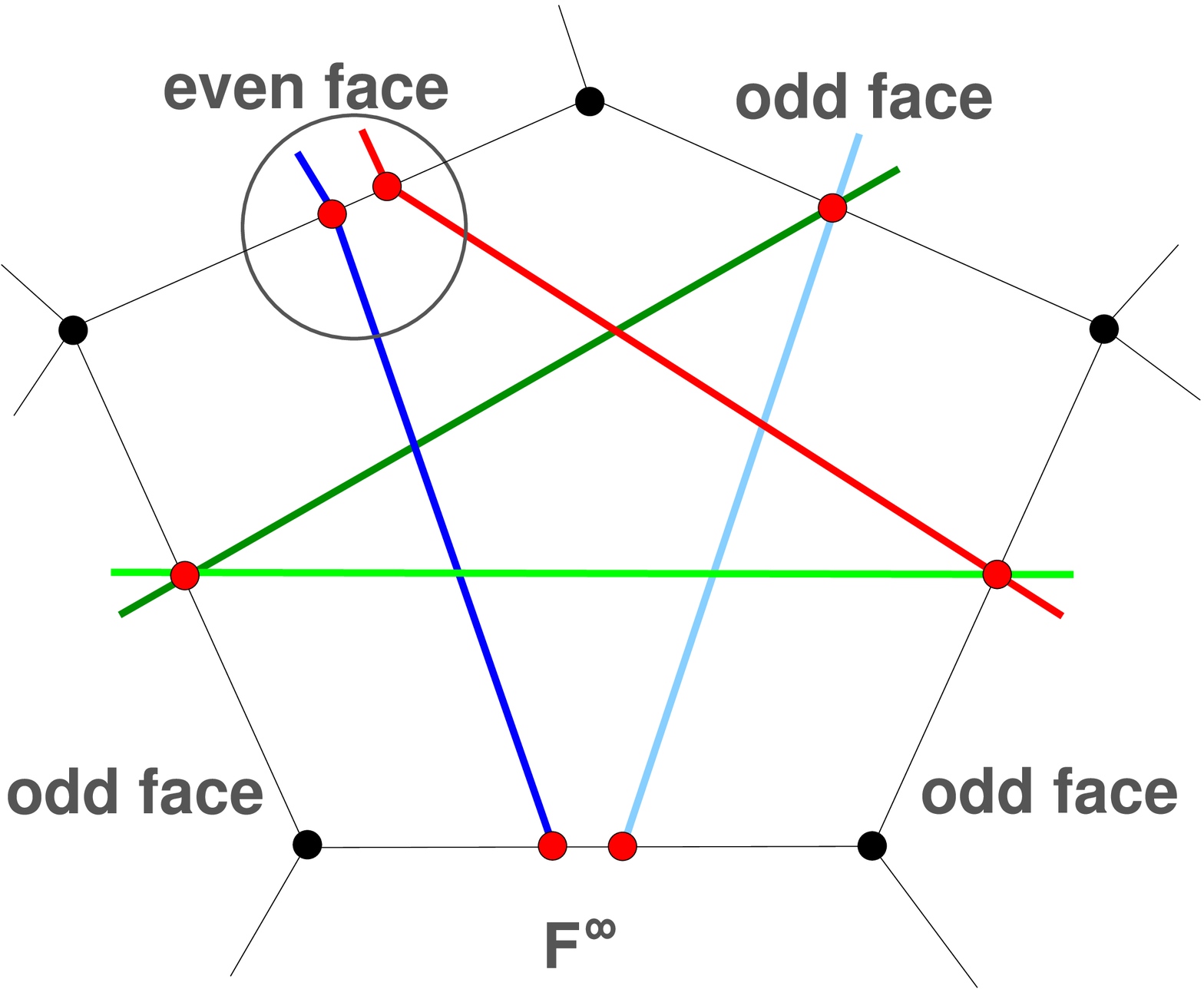}
(c) \includegraphics[height=40mm]{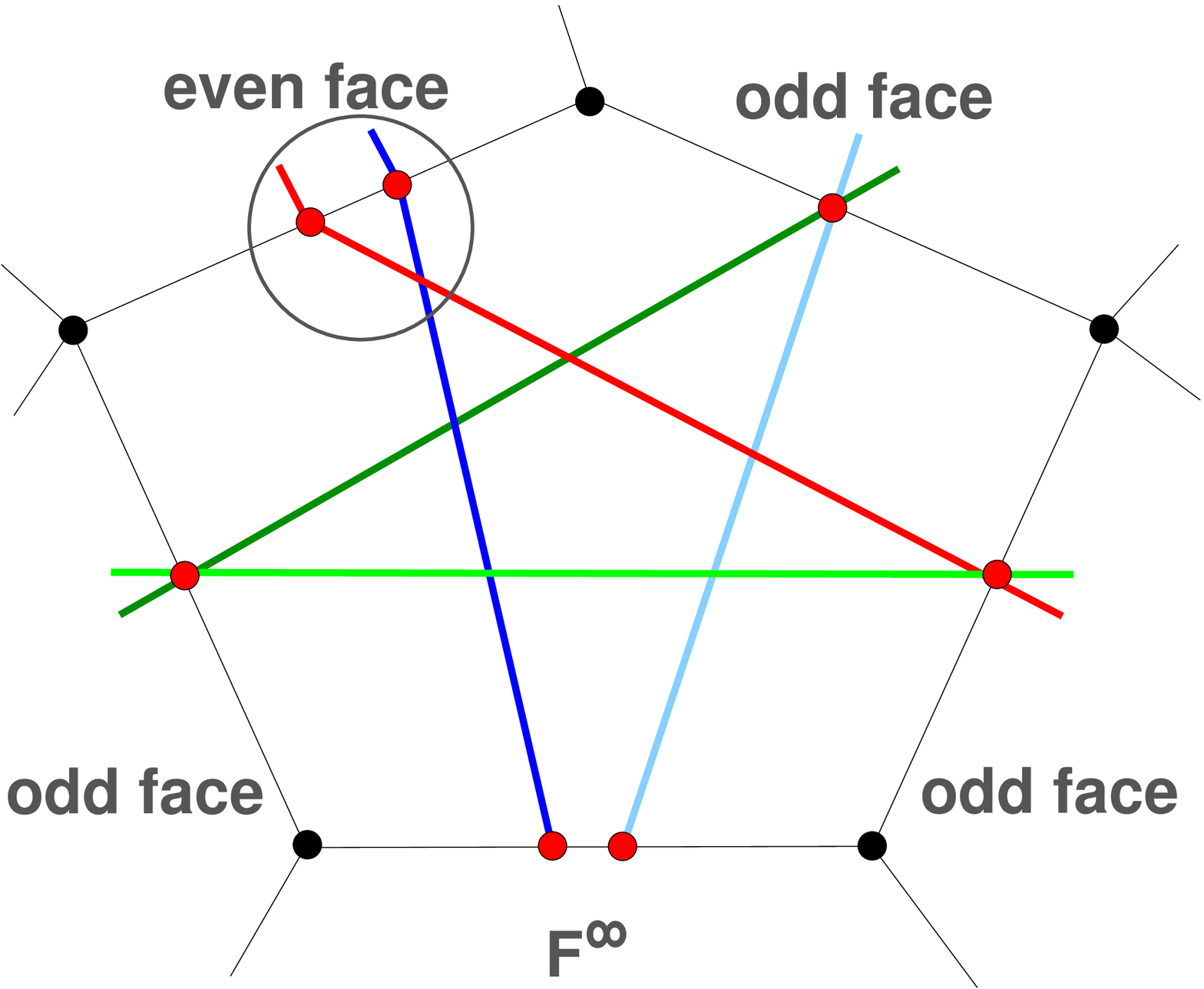}
\caption{(a) Intersection pattern in a regular hexagon (b,c) Two
  intersection patterns in a regular pentagon (see gray circle for
  difference). In (b) we have no slot conflict on the upper left edge
  of the hexagon, and in (c) we have a slot conflict on the upper left
  edge.}
\label{fig:intersection_patterns}
\end{center}
\end{figure}

\begin{figure}[t]
\begin{center}
(a) \includegraphics[height=35mm]{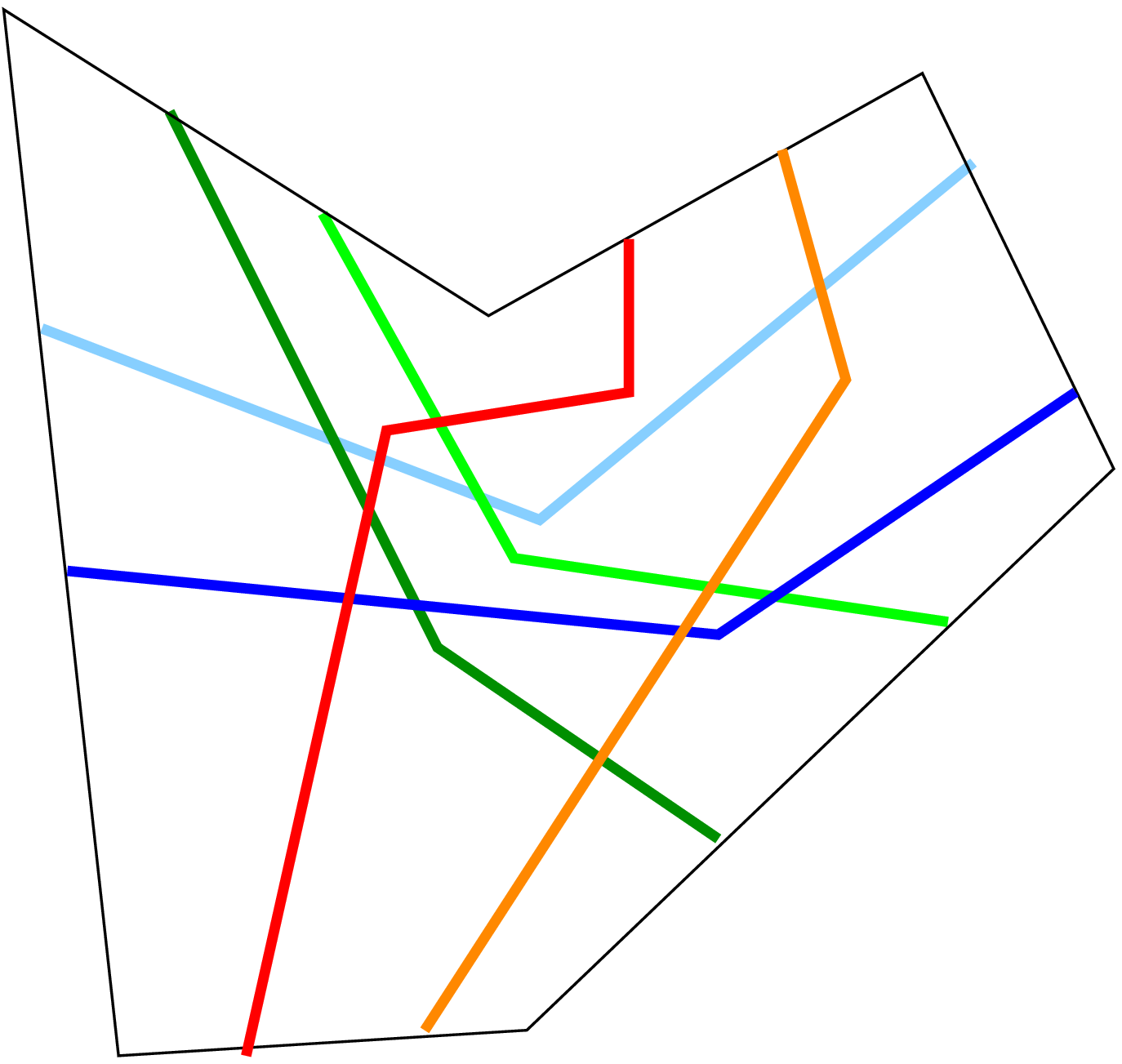}
(b) \includegraphics[height=35mm]{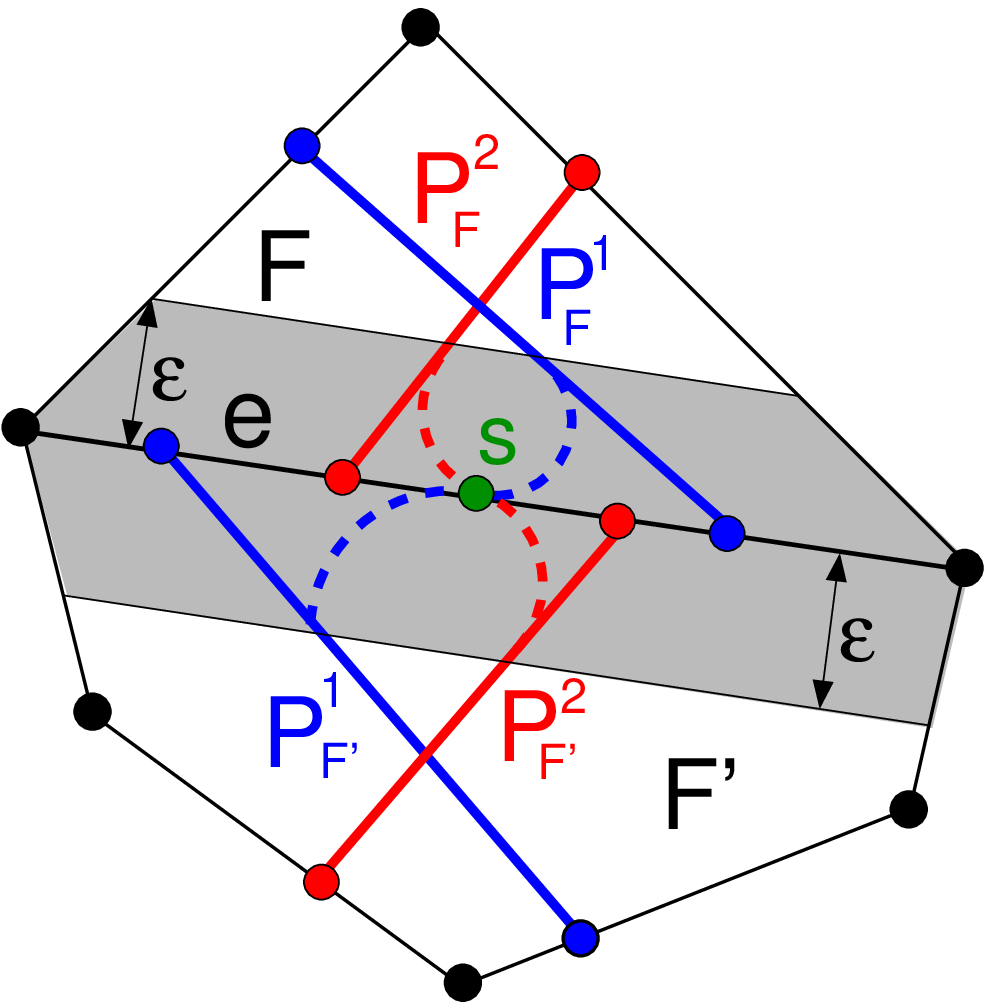}
(c) \includegraphics[height=35mm]{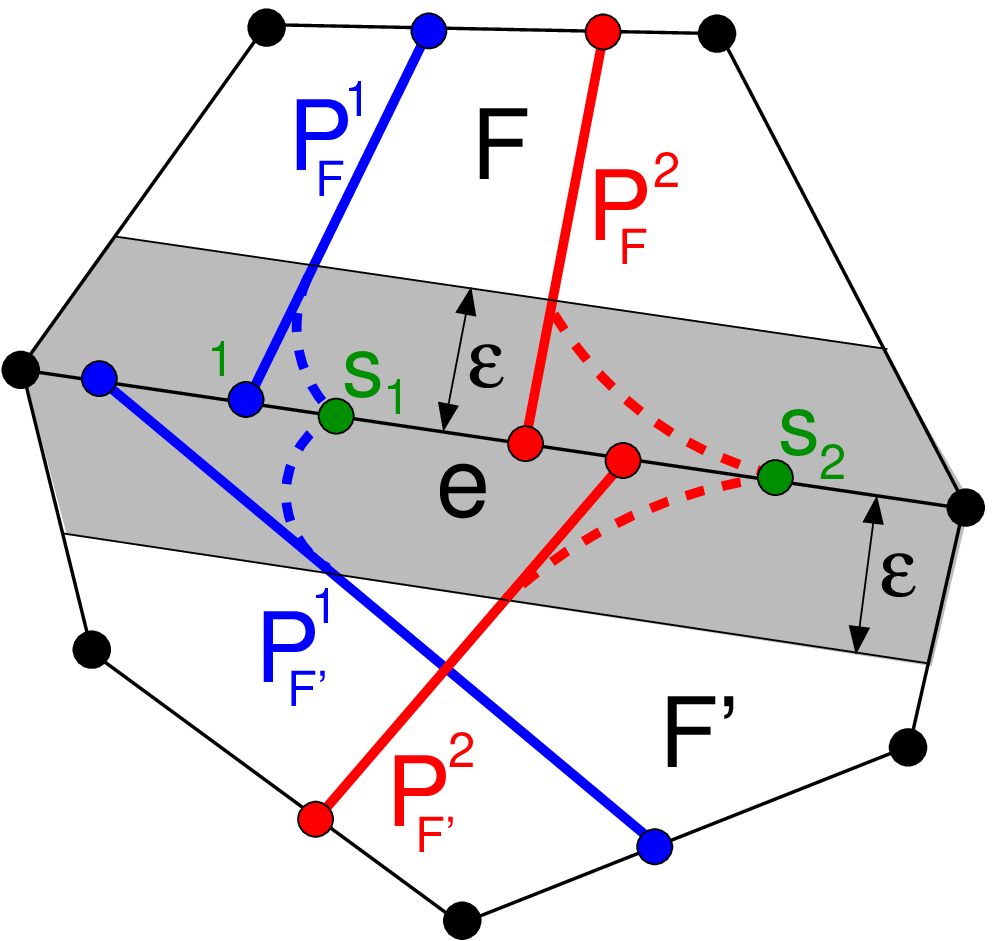}
\caption{(a) Hexagonal face with the same intersection pattern as in
  Figure~\ref{fig:intersection_patterns}a. (b,c) Bending of
  alternating paths is indicated by the dashed colored lines. For the
  notation see the text. (b) Bending toward a single slot $s$ on
  $e$. (c) Bending toward two slots $s_1, s_2$ on $e$.}
\label{fig:homeo}
\end{center}
\end{figure}

\noindent \textbf{Local embedding rules.} The local rules for
embedding an alternating path into $\closu{F_r}$ are as follows.
\begin{enumerate}
\item The part of an EAP that runs through $F_r$ is a line segment,
  and EAPs cannot coincide in $F_r$.
\item An EAP can intersect $e \in E(F_r)$ only in $e$'s relative
  interior, \ie not at $e$'s end vertices.
\item Let $F_r \not= F_{\infty}$ be an even face of $G$, let $e, f$ be
  unique opposite edges in $E(F_r)$, and let $P_1, P_2$ be the two
  alternating paths that contain the edge $\{e,f\}$ ($P_1=P_2$ if and
  only if the multiplicity of $P_1$ is two). Then the parts of
  embedded $P_1$ and $P_2$ that run through $\closu{F_r}$ must form a
  pair of distinct parallel line segments (see
  Figure~\ref{fig:intersection_patterns}a).
\item Let $F \not=F_{\infty}$ be an odd face of $G$, let $e \in
  E(F_r)$, and let $P_1, P_2$ be the two alternating paths that
  contain the vertex $e$. If $e$ also bounds an even bounded face or
  $F_{\infty}$, embedded $P_1, P_2$ must intersect $e$ at two distinct
  points (see the upper left edge of the hexagon in
  Figures~\ref{fig:intersection_patterns}b,c). If the other face is a
  bounded odd face, embedded $P_1, P_2$ must cross at a point on $e$
  (see the upper right edge of the hexagon in
  Figures~\ref{fig:intersection_patterns}b,c).
\item A slot conflict on an edge $e$ of $G$ can occur only if $e$
  separates a bounded odd face $F$ from a face that is not both
  bounded and odd. Let $\{e, f\}$, $\{e, f'\}$ be the two edges (of
  alternating paths) occupying the two slots of $e$, and let $F_r$ be
  the regular polygon that represents $F$. Then $\{e, f\}$ and $\{e,
  f'\}$ cross inside $F_r$ if and only if there is a slot conflict on
  $e$ (see Figures~\ref{fig:intersection_patterns}b,c).
\end{enumerate}

We map the embedded edges (of alternating paths) from any
$\closu{F_r}$ into $\closu{F}$ using a homeomorphism from
$\closu{F_r}$ onto $\closu{F}$. The following is about tying the loose
ends of the locally embedded paths, which all sit on edges of $G$, so
as to arrive at a {\em global} embedding of the alternating paths (see
Figures~\ref{fig:homeo}b,c).  Let $e\in E(G)$, and let $F, F'$ be the
faces of $G$ that are bounded by $e$. We have two locally embedded
paths $P_F^1, P_F^2$ in $\closu{F}$ and two locally embedded paths
$P_{F'}^1, P_{F'}^2$ in $\closu{F'}$ that all hit $e$. We bend the
four paths toward their predetermined slots. The bending operations
can be done such that the intersection patterns do not change in the
interiors of $F$ and $F'$. Indeed, recall that the paths are
homeomorphic to straight line segments. Thus, there exists $\epsilon >
0$ such that all locally embedded paths in $F$ and $F'$ other than
$P_F^1$, $P_F^2$, $P_{F'}^1$, and $P_{F'}^2$ keep a distance greater
than $\epsilon$ to $e$.  The bending, in turn, can be done such that
it affects $P_F^1$, $P_F^2$, $P_{F'}^1$, and $P_{F'}^2$ only in an
$\epsilon$-neighborhood of $e$.

\begin{proposition}
\label{prop:share}
If two EAPs share a point $p$, they cross at $p$ and not just touch.
\end{proposition}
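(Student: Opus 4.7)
The plan is to classify the shared point $p$ by its location and then apply the local embedding rules. By rule 2, any EAP meets an edge of $G$ only in its relative interior, so no EAP passes through a vertex of $G$. Hence $p$ lies either in the open interior of some face $F$ of $G$ or in the open interior of some edge $e \in E$.

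For the first case, the parts of the two EAPs inside the regular polygon $\closu{F_r}$ representing $F$ are, by rule 1, straight line segments that do not coincide. Their endpoints lie on $\partial F_r$, so $p$ corresponds to an interior point of two distinct line segments inside a convex region, and any two such segments must intersect transversally at that point. Since the homeomorphism $\hslash : \closu{F_r} \to \closu{F}$ preserves topological crossing---a point $p$ is a crossing of two simple curves iff in every sufficiently small punctured neighborhood of $p$ each curve has points on both sides of the other---the image curves in $\closu{F}$ also cross at $p$ rather than merely touch.

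For the second case, inspection of the slot-selection rules and of rule 5 shows that two EAPs can share a point on $e$ only if that point is a slot used by both EAPs, and this happens precisely when $e$ has a single slot, i.e.\ iff $e$ separates two bounded odd faces. In that situation, rule 4 explicitly forces the two EAPs to cross at the common slot on $e$. The bending operations are carried out only within an $\epsilon$-neighborhood of $e$ chosen to avoid all other locally embedded paths, so they do not turn an existing crossing into a mere touch.

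The main obstacle is not the combinatorial case analysis, which is driven mechanically by the local embedding rules and by the slot conventions, but rather the topological statement that ``crossing'' (as opposed to ``touching'') is invariant under the homeomorphisms $\hslash$ and under the subsequent bending in an $\epsilon$-neighborhood of each edge. Once this invariance is formalised, the elementary fact that two distinct line segments sharing an interior point of a convex polygon must intersect transversally handles the interior-of-face case, and rule 4 handles the boundary case directly.
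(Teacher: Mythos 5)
Your first case (the shared point lies in the interior of a face) is exactly the paper's argument: two distinct chords of the regular polygon $\closu{F_r}$ sharing an interior point must cross, and the homeomorphism $\hslash$ preserves crossing versus touching. That part is fine.

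Your second case has a genuine gap. You dispose of it by saying that rule~4 ``explicitly forces the two EAPs to cross at the common slot on $e$,'' but this is circular: whether the two curves cross or merely touch at the shared slot is not something the embedding rules can freely stipulate. Once $P_1$ enters $F$ through some edge $e_F^1$, passes through the single slot on $e$, and leaves $F'$ through some edge $e_{F'}^1$ (and likewise $P_2$ through $e_F^2$ and $e_{F'}^2$), the topological type of the intersection at that slot is already determined by which four boundary edges these are --- inside each face the curves are homeomorphic images of chords with prescribed endpoints. The substantive content of the proposition in this case is precisely to verify that the combinatorics force a crossing rather than a touching, and rule~4's wording is only consistent because of that verification. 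The paper's proof supplies the missing argument: since $F$ and $F'$ are both bounded and odd, $e_F^1 \neq e_F^2$ and $e_{F'}^1 \neq e_{F'}^2$; if $e$ is the left opposite of $e_F^1$ and the right opposite of $e_F^2$, then the turn-compensation condition (item~(ii) of Definition~\ref{def:multiset}) forces $e_{F'}^1$ to be the right opposite of $e$ and $e_{F'}^2$ the left opposite, so $P_1$ runs to the left of $P_2$ before the slot and to the right of it afterwards --- a genuine crossing. You need to make this combinatorial step explicit; citing rule~4 assumes exactly what is to be proved.
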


\begin{proof}
Consider two EAPs $P_1$ and $P_2$ that share a point $p$. If $p$ sits
in a face $F$ of $G$, $P_1$ and $P_2$ cross at $p$ because (i) $P_1 =
\hslash(L_1)$ and $P_2 = \hslash(L_2)$ for a homeomorphism $\hslash:
F_r \mapsto F$, (ii) $L_1 \not= L_2$ cannot touch without crossing,
and (iii) homeomorphisms preserve crossings and non-crossings of
curves.

If $p$ sits on (the interior of an edge) $e \in E$, the two faces
separated by $e$, and denoted by $F$ and $F'$, must be finite and
odd. As illustrated in Figure~\ref{fig:crossings}c, $P_1$ [$P_2$]
enters $F$ through an edge $e_F^1$ [$e_F^2$] in $E(F) \setminus
E(F')$, runs from $F$ into $F'$ via $e$, and then leaves $F'$ via
an edge $e_{F'}^1$ [$e_{F'}^2$] in $E(F') \setminus E(F)$. Since $F$
and $F'$ are finite and odd, we have $e_F^1 \neq e_F^2$ and $e_{F'}^1
\neq e_{F'}^2$. Without loss of generality, $e$ is the left [right]
opposite of $e_F^1$ [$e_F^2$]. Then, due to item (ii) in
Definition~\ref{def:multiset}, $e_{F'}^1$ [$e_{F'}^2$] is the right
[left] opposite of $e$. Thus, before reaching point $p$ on $e$, $P_1$
is left of $P_2$, and after leaving $p$, $P_1$ is right of $P_2$. In
other words, $P_1$ and $P_2$ cross at $p$.
\end{proof}

EAPs like the ones in Figure~\ref{fig:saws}b are special in that they
form an arrangement in the following sense.

\begin{definition}[Arrangement of alternating paths]
A collection of all EAPs in $G$ is called an arrangement of embedded
alternating paths if (i) none of the EAPs crosses itself and (ii) no
pair of EAPs crosses twice.
\label{def:arrangement}
\end{definition}

We will now see that Definition~\ref{def:arrangement} does not depend
on the particular collection of EAPs, \ie that it is actually a
definition for $G$.

\begin{proposition}
If one collection of EAPs in $G$ is an arrangement of alternating
paths, then any collection of EAPs in $G$ is an arrangement of
alternating paths.
\label{prop:egal}
\end{proposition}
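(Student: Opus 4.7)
The plan is to show that the arrangement property is intrinsic to $G$ by enumerating the sources of non-uniqueness in the construction of an EAP collection and verifying that each preserves both conditions of Definition~\ref{def:arrangement}.

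First, I would isolate the embedding choices. By Definition~\ref{def:multiset}, the multiset $\mathcal{P}(G)$ of abstract alternating paths is uniquely determined by $G$; only the embedding into $\RR^2$ admits freedom. Inspecting the slot-selection rules and the local embedding rules yields exactly two kinds of choices: (a) the order in which paths are embedded and the initial slot choice for each path, which together decide who wins a slot conflict on an edge $e$ of $G$ separating a bounded odd face from a face that is not both bounded and odd; and (b) the combinatorial intersection pattern inside each regular polygon $\closu{F_r}$ representing an odd face, \emph{cf.} Figures~\ref{fig:intersection_patterns}b,c.

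Second, I would connect any two valid collections $\mathcal{E}_1, \mathcal{E}_2$ of EAPs by a finite sequence of \emph{local modifications}, each confined either to an $\epsilon$-neighborhood of a single edge of $G$ or to a single face. A type-(a) modification exchanges which of two EAPs uses the left slot versus the right slot on a single edge and can be realized by an isotopy in an $\epsilon$-neighborhood of that edge. A type-(b) modification replaces one intersection pattern inside an odd face by the other, an isotopy confined to the interior and boundary of a single face. For each local modification I would show that the number of self-crossings of every EAP and the number of crossings between every pair of EAPs are both preserved: for type-(a), Proposition~\ref{prop:share} ensures that any shared point is a transversal crossing, and swapping the slot assignment merely slides a possible crossing along $e$, neither creating nor destroying any transversal intersection with any third EAP; for type-(b), the two patterns of Figure~\ref{fig:intersection_patterns}b,c differ by moving a single crossing from inside the polygon onto the boundary edge (or vice versa), which by local embedding rule~(5) leaves every pairwise crossing count unchanged.

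From this invariance the proposition follows: if in $\mathcal{E}_1$ no EAP crosses itself and no pair crosses twice, the same is true in $\mathcal{E}_2$, since each local modification in a connecting sequence preserves both counts.

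The main obstacle is the second step, in particular verifying that type-(b) modifications cannot convert a single pairwise crossing into a pair of crossings, \emph{i.e.}, cannot act as a Reidemeister-II-type move. The key tool is a case analysis inside odd faces: every EAP enters and leaves a face through a pair of edges determined by its abstract alternating path, so the endpoints of the chord on $\partial \closu{F_r}$ are fixed up to the two-way slot choice on each boundary edge; the number of crossings between two chords in a convex polygon is then determined by whether their endpoints alternate on the boundary, a combinatorial condition that the local modification leaves intact. Combining this with the isotopy argument for type-(a) modifications establishes the proposition.
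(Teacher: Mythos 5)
There is a genuine gap, and it sits exactly at the step you flag as the ``main obstacle.'' Your strategy---connect two collections of EAPs by local modifications and show each modification preserves all self- and pairwise crossing counts---is a reasonable skeleton, but the claimed invariance is false \emph{locally}. When a slot conflict on an edge $e$ bounding an odd face $F$ is resolved the other way, the two chords of $P_1$ and $P_2$ inside $F_r$ swap their endpoints on $e$ (the two slots are adjacent points of $e$), and this \emph{does} flip the condition of whether the four chord endpoints alternate on $\partial\closu{F_r}$: by local embedding rule~(5) the chords cross inside $F_r$ if and only if there is a slot conflict. So the crossing does not ``slide along $e$'' or ``move onto the boundary edge'' (rule~(4) forces two distinct intersection points with $e$ in both resolutions); it appears or disappears from $F$ altogether. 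Restricted to an $\epsilon$-neighborhood of $e$ or to the single face $F$, the pairwise crossing count changes by one, so your per-modification invariant does not hold in the form you state it, and the concluding sentence of your argument collapses.

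What actually saves the statement---and what the paper proves---is a \emph{global} compensation: if the resolution of the conflict on $e$ removes the crossing of $P_1$ and $P_2$ in $F$, then item~(ii) of Definition~\ref{def:multiset} (a left turn must be compensated by a right turn at the next non-unique-opposite step) forces the two paths to cross instead in the \emph{next} bounded odd face $\hat F$ that both traverse, which may be far from $F$; if no such $\hat F$ exists, one argues separately that $P_1$ and $P_2$ do not cross in $F$ under either resolution. Hence the total number of crossings of the pair is unchanged even though its distribution over faces is not. Your proof contains no trace of this turn-compensation argument, which is the mathematical core of the proposition, so the attempt cannot be repaired without importing it. (Two smaller issues: the intersection pattern inside an odd face is not an independent degree of freedom---it is determined by the slot assignment via rule~(5), so your types (a) and (b) are one and the same ambiguity; and the preservation of \emph{self}-crossings needs the separate observation that no EAP can have a slot conflict with itself.)
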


\begin{proof}
Definition~\ref{def:arrangement} depends only on the intersection
pattern of the EAPs. Different intersection patterns, in turn, can
only arise from different solutions of slot conflicts.

No EAP $P$ can have a slot conflict with itself, because this would
mean that $P$ would traverse a face twice, a contradiction because
then all the other EAPs that traverse the face would cross $P$ twice.

Thus, it suffices to consider slot conflicts between different
EAPs. Due to (i) $P_1$ and $P_2$ being alternating paths and (ii) the
way we assigned the slots, the intersection pattern of two EAPs $P_1$
and $P_2$ that cross edges of $G$ from the same side is unique. If,
however, an edge $e$ of $G$ is crossed by $P_1$ in one direction, and
by $P_2$ in the opposite direction, and if $e \in E(F)$ for a bounded
odd face $F$, the intersection pattern of $P_1$ and $P_2$ depends on
whether $P_1$ or $P_2$ was embedded first. This case is illustrated in
Figures~\ref{fig:crossings}a,b.

It remains to show that the above ambiguity in intersection patterns
does not turn an arrangement into a non-arrangement or vice
versa. Indeed, if $F$ is the only bounded odd face traversed by $P_1$
and $P_2$, then $P_1$ and $P_2$ do not cross in $F$, anyway.

Now assume that there exists another bounded odd face $\hat{F}$
traversed by $P_1$ and $P_2$. Examples for $F$, $\hat{F}$ are the
lower and central triangular faces in
Figures~\ref{fig:crossings}a,b. We denote by $P_1^*$ the reverse of
$P_1$. Without loss of generality we assume that $P_1^*$ [$P_2$] turns
left [right] on $F$. Due to the slot conflict on $E(F)$ (resulting in
the crossing of $P_1^*$ and $P_2$ in $F$), $P_1^*$ then runs on the
right side of $P_2^*$. Since $P_1^*, P_2$ are alternating paths,
$P_1^*$ [$P_2$] has to take a right [left] turn in $\hat{F}$. Thus,
$P_1^*$ and $P_2$ diverge into different faces behind $\hat{F}$
without crossing in $\hat{F} \cup E(\hat{F})$.

If the slot conflict on $E(F)$ had been avoided, there would be no
crossing in $F$, and $P_1^*$ would run on the left side of $P_2^*$
(see Figure~\ref{fig:crossings}b). Then $P_1^*$ and $P_2$ would cross
in $\hat{F}$. Since the intersection pattern before $F$ and behind
$\hat{F}$ is not affected by the ambiguity in $F$, the total number of
crossings between $P_1$ and $P_2$ is not affected, either.
\end{proof}

Proposition~\ref{prop:egal} justifies the following definition.

\begin{definition}[Well-arranged graph]
We call a plane graph $G$ {\em well-arranged} if its collections of
EAPs are arrangements of alternating paths.
\label{def:well-arranged}
\end{definition}

\subsection{Partial cubes from well-arranged graphs}
\label{sub:ell_one}
Arrangements of alternating paths are generalizations of arrangements
of pseudolines~\cite{Bjoerner99a}. The latter are known to have duals
that are partial cubes~\cite{Eppstein2006a}. In this section we will
see that the dual of an arrangement of alternating paths is a partial
cube, too.

\begin{notation}
From now on $\mathcal{E}(G)$ denotes a collection of EAPs.
\label{def:multiset_embedded}
\end{notation}

The purpose of the following is to prepare the definition of
$\mathcal{E}(G)$'s dual (see Definition~\ref{def:DEAP}).

\begin{definition}[Domain $D(G)$ of $G$, facet of $\mathcal{E}(G)$, adjacent facets]
The domain $D(G)$ of $G$ is the set of points covered by the vertices,
edges and bounded faces of $G$. A facet of $\mathcal{E}(G)$ is a
(bounded) connected component (in $\RR^2$) of $D(G) \setminus
(\bigcup_{e \in E(G)} e \cup \bigcup_{v \in V(G)} v)$.  Two facets of
$\mathcal{E}(G)$ are adjacent if their boundaries share more than one
point.
\label{def:domain}  
\end{definition}

In the following, DEAP stands for Dual of Embedded Alternating Paths. 

\begin{definition}[DEAP graph $\subE{G}$ of $G$]
A DEAP graph $\subE{G}$ of $G$ is a plane graph that we obtain from
$G$ by placing one vertex into each facet of $\mathcal{E}(G)$ and
drawing edges between a pair $(u,v)$ of these vertices if the facets
containing $u$ and $v$ are adjacent in the sense of
Definition~\ref{def:domain}.  A vertex of $\subE{G}$ can also sit on
the boundary of a face as long as it does not sit on an EAP from
$\mathcal{E}(G)$ (for an example see the black vertex on the upper
left in Figure~\ref{fig:DEAP}a).
\label{def:DEAP}  
\end{definition}

Due to the intersection pattern of the EAPs in $G$'s bounded faces, as
specified in Section~\ref{sub:embed} and illustrated in
Figure~\ref{fig:intersection_patterns}, there are the following three
kinds of vertices in $V(\subE{G})$.

\begin{definition}[Primal, intermediate and star vertex of $\subE{G}$]~
\label{def:vertex_classes}
\begin{itemize}
\item \textbf{Primal vertices:} Vertices which represent a facet that
  contains a (unique) vertex $v$ of $G$ in its interior or on its
  boundary. As we do not care about the exact location of $\subE{G}$'s
  vertices, we may assume that the primal vertices of $\subE{G}$ are
  precisely the vertices of $G$.
\item \textbf{Intermediate vertices}: The neighbors of the primal
  vertices in $\subE{G}$.
\item \textbf{Star vertices} The remaining vertices in $\subE{G}$.
\end{itemize}
\end{definition}

For an example of a DEAP graph see Figure~\ref{fig:DEAP}, where the
black, gray and white vertices correspond to the primal, intermediate,
and star vertices, respectively.

\begin{figure}[]
\begin{center}
(a) \includegraphics[width=0.42\columnwidth]{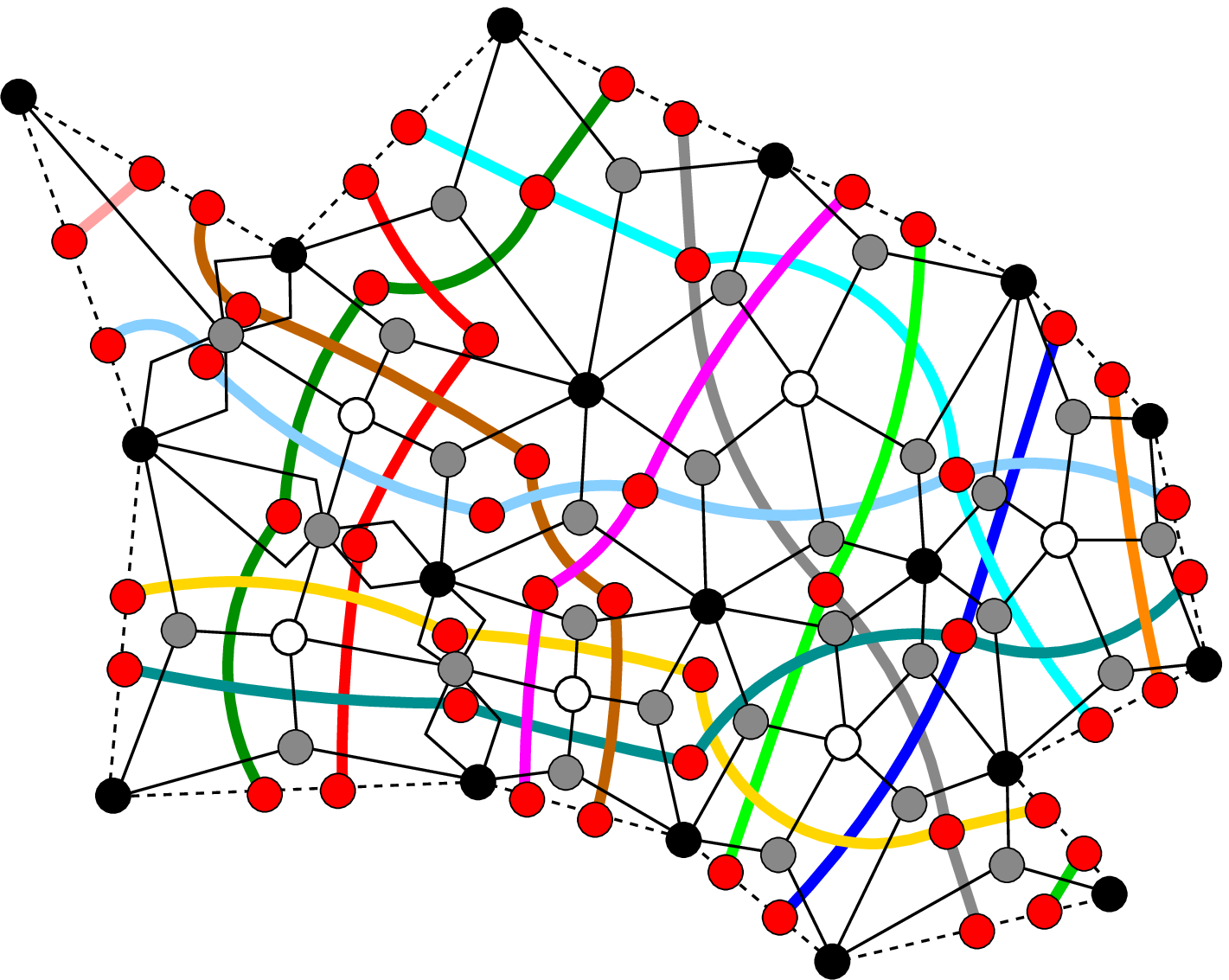}
(b) \includegraphics[width=0.42\columnwidth]{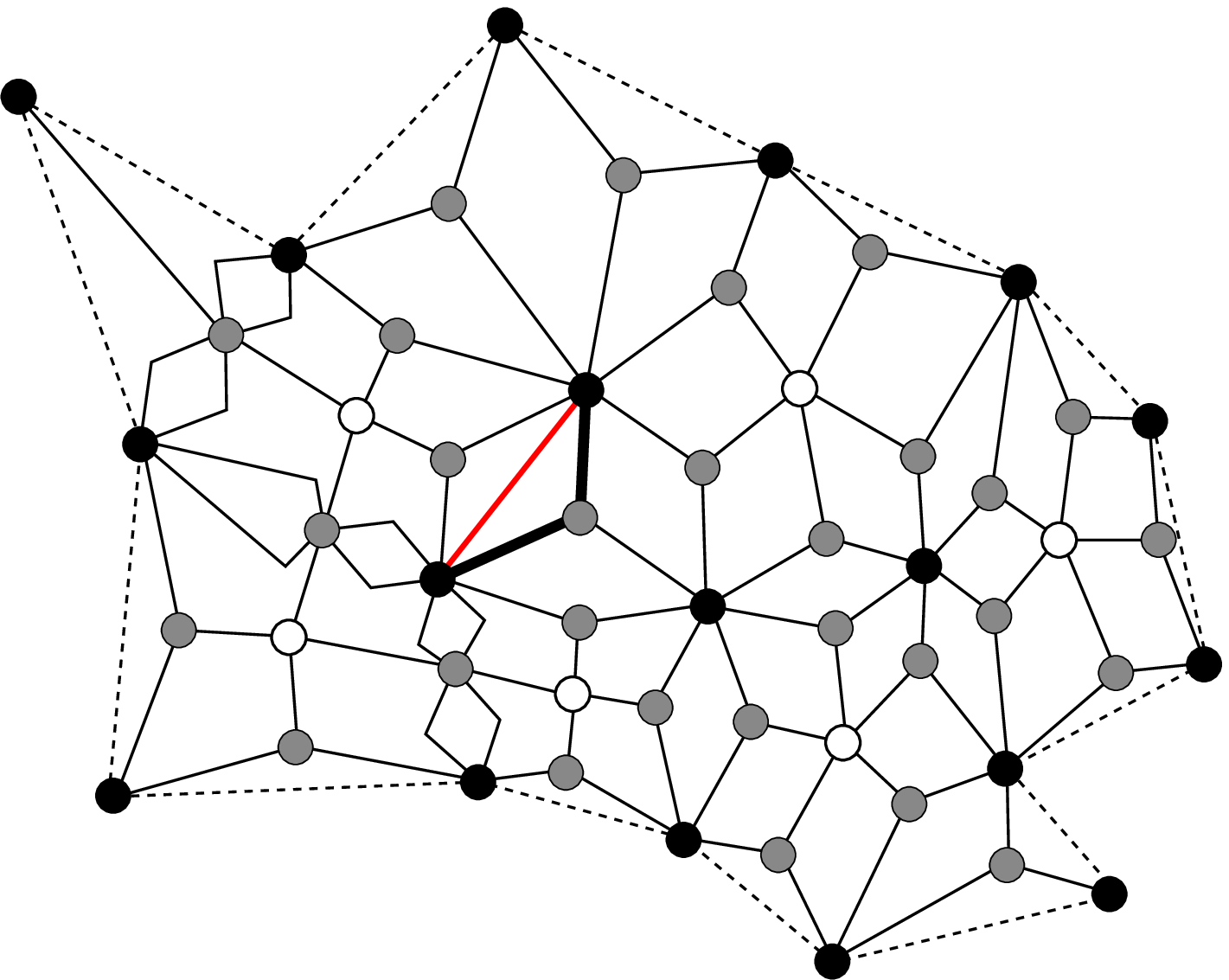}
\end{center}
\caption{DEAP graph $\subE{G}$ of the primal graph $G$ shown in
  Figure~\ref{fig:saws}a.  (a) Collection $\mathcal{E}(G)$ of EAPs:
  Red vertices, thick solid colored lines. DEAP graph $\subE{G}$:
  Black, gray and white vertices, thin black solid lines. The black,
  gray and white vertices are the primal, intermediate and star
  vertices, respectively. The dashed polygonal line delimits $D(G)$.
  (b) $\subE{G}$ only.  The red edge, however, is an edge of $G$. The
  path formed by the two bold black edges is an example of a path in
  $\subE{G}$ of length two that connects two primal vertices that are
  adjacent in $G$ via an intermediate vertex in $\subE{G}$.}
\label{fig:DEAP}
\end{figure}

\begin{definition}
Let $P$ be an EAP of $G$, and let $D_1$, $D_2$ denote the two
connected components $D(G) \setminus P$. The cut of $G = (V, E)$ given
by $P$ is $(V_1, V_2)$ with $V_i = \{v \in V \mid v \in D_i\}$, and
the cut of $\subE{G} = (\subE{V}, \subE{E})$ given by $P$ is $(V_1,
V_2)$ with $V_i = \{v \in \subE{V} \mid v \in D_i\}$.
\end{definition}

\begin{theorem}
\label{theo:arr2pc}
The DEAP graph $\subE{G}$ of a well-arranged plane graph $G$ is a
partial cube, the convex cuts of which are precisely the cuts given by
the EAPs of $G$.
\end{theorem}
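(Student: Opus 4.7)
The plan is to exhibit a Hamming labeling of $\subE{G}$ in which each EAP $P \in \mathcal{E}(G)$ contributes one coordinate. Well-arrangement says (Definition~\ref{def:arrangement}(i)) that no EAP self-crosses, so every EAP is a simple curve in $D(G)$; since the endpoints of an open EAP lie on $\partial F_{\infty}$, the Jordan curve/arc theorem yields two open connected components of $D(G)\setminus P$. I would define $b:V(\subE{G})\to\{0,1\}^{|\mathcal{E}(G)|}$ by setting $b(v)_P$ according to which of these components contains the facet of $v$; this is well-defined because vertices of $\subE{G}$ sit in facets and therefore avoid all EAPs.

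Proving that $\subE{G}$ is a partial cube then amounts to checking that $b$ is an isometric embedding into the hypercube $\{0,1\}^{|\mathcal{E}(G)|}$. For this I would verify (i) that every edge of $\subE{G}$ flips exactly one coordinate of $b$, and (ii) that graph distance in $\subE{G}$ equals Hamming distance. For (i), two adjacent facets share a boundary segment that sits either on a single EAP---flipping exactly that coordinate---or on a piece of an edge of $G$ between consecutive slots, in which case the local embedding rules of Section~\ref{sub:embed} identify a unique EAP whose slot bounds the segment. Item (ii) has an easy lower bound: any walk from $u$ to $v$ traces a curve through facets that must cross each EAP separating $u$ and $v$ at least once. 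The upper bound is the heart of the argument: by induction on the Hamming distance between $u$ and $v$, I would show that for any $u'\neq v$ there is a $\subE{G}$-neighbor $u''$ of $u'$ for which the flipped coordinate is one on which $u'$ and $v$ disagree. The arrangement property (no two EAPs cross twice) is crucial here: it guarantees that within the local face fan around $u'$ one can always step across an EAP that separates $u'$ from $v$ without first being forced to cross an EAP that does not.

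The second half of the theorem is then almost immediate. Since $\subE{G}$ is a partial cube, the Djokovi\'{c} relation $\theta$ is an equivalence on $E(\subE{G})$ whose classes are precisely the cut-sets of the convex cuts (recalled in Section~\ref{sec:prelim}); two edges of $\subE{G}$ are $\theta$-related iff they flip the same coordinate of $b$, iff they cross the same EAP. The cut of $\subE{G}$ associated with EAP $P$ is $(\{v:b(v)_P=0\},\{v:b(v)_P=1\})$, which by construction is exactly the cut of $\subE{G}$ given by $P$. The principal obstacle is the upper bound in (ii): one has to port the standard pseudoline-arrangement argument (cf.\ Eppstein~\cite{Eppstein2006a}) to our slightly more general setting, verifying both that closed EAPs and EAPs with endpoints on $\partial F_{\infty}$ behave like Jordan curves/arcs, and that the slot-conflict resolutions inside odd faces of $G$ do not introduce extra non-monotone crossings into the face-paths.
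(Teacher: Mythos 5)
Your proposal follows essentially the same route as the paper: the identical Hamming labeling with one coordinate per EAP, the easy lower bound $d_{\subE{G}}(u,v)\ge h(l(u),l(v))$, the reduction of the upper bound to finding a neighbor of $u$ whose label is one Hamming step closer to $l(v)$, and the identification of the convex cuts with the coordinate cuts afterwards. The step you flag as ``the principal obstacle'' is exactly where the paper's effort goes---a case analysis on the EAPs bounding $u$'s facet (a single bounding EAP; several pairwise non-crossing bounding EAPs, which pin $u$ down by its label entries at those coordinates; bounding EAPs that do cross, where the at-most-one-crossing property splits $D(G)$ into four label-distinguished regions)---so your outline is correct and matches the paper modulo carrying out that verification.
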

\begin{proof}
We denote the Hamming distance by $h(\cdot,\cdot)$. To show that
$\subE{G}=(\subE{V},\subE{E})$ is a partial cube, it suffices to
specify a labeling $l : \subE{V} \mapsto \{0,1\}^{n}$ for some $n \in
\NN$ such that $d_{\subE{G}}(u,v) = h(l(u),l(v))$ for all $u, v \in
\subE{V}$ (see Section~\ref{sec:prelim}).

Let $\mathcal{E}(G) = \{P_1, \dots , P_n\}$. The length of the binary
vectors will be $n$. The entry of $l(v)$ indicates $v$'s position with
respect to the paths in $\mathcal{E}(G)$. Specifically, we arbitrarily
select one component of $D(G) \setminus P_i$ and set the $i$th entry
of $l(v)$ to one if the face represented by $v$ is part of the
selected component (zero otherwise).

It remains to show that $d_{\subE{G}}(u,v) = h(l(u),l(v))$ for any
pair $u\neq v \in V$. Since on any path of length $k$ from $u$ to $v$
in $\subE{G}$ it holds that $h(l(u),l(v)) \leq k$, we have
$d_{\subE{G}}(u,v) \geq h(l(u),l(v))$.

We assume $u \not= v$ (the case $u=v$ is trivial). To see that
$d_{\subE{G}}(u,v) = h(l(u),l(v))$, it suffices to show that $u$ has a
neighbor $u'$ such that $h(l(u'),l(v)) < h(l(u),l(v))$ (because then
there also exists $u''$ such that $h(l(u''),l(v)) < h(l(u'),l(v))$ and
so on until $v$ is reached in exactly $h(l(u),l(v))$ steps).

Let $F_u$ denote the facet of $\mathcal{E}(G)$ that is represented by
$u$, and $I(u)$ denote the set of indices of EAPs in $\mathcal{E}(G)$
that bound $F_u$.

\begin{enumerate}
\item If $u$ has only one neighbor $u'$, then $I(u)=\{k\}$ for some
  $k$, and the only vertex in one of the components of $D(G) \setminus
  P_k$ is $u$. For an example of $u'$ see the black vertex in the
  upper left corner of Figure~\ref{fig:DEAP}b. Since $l(u)$ and
  $l(u')$ differ only at position $k$, it must hold that
  $h(l(u'),l(v)) < h(l(u),l(v))$.
\item If $u$ has at least two neighbors, we first assume that none of
  the $P_k$ with $k \in I(u)$ cross each other (see
  Figure~\ref{fig:cases}a). Then $u$ is uniquely determined by the
  entries of $l(u)$ at the positions given by $I(u)$. Indeed, $F_u$ is
  then bounded by non-intersecting paths EAPs that run from a point on
  the border of $D(G)$ to another point on the border of $D(G)$, and
  only a vertex inside $F_u$ can have the same entries in $l(\cdot)$
  as $l(u)$ at the positions given by $I(u)$. Thus, since $u$ is the
  only vertex in $F_u$ and since $u \not= v$, $l(u)$ and $l(v)$ must
  differ at a position $k^*$ in $I(u)$ and, from the perspective of
  $u$, we find $u'$ in the face on the other side of $P_{k^*}$.
\item The remaining case is that $u$ has at least two neighbors and
  there exists at least one pair $(i,j) \in I(u) \times I(u)$, $i
  \not=j$, such that $P_i$ crosses $P_j$. Let $C$ denote the set of
  all such pairs. For any pair $(i,j) \in C$ the path $P_i$ crosses
  the path $P_j$ exactly once, because $\mathcal{E}(G)$ is an
  arrangement of alternating paths. Thus $P_i$ and $P_j$ subdivide
  $D(G)$ into four regions (see Figure~\ref{fig:cases}b), each of
  which is characterized by one of the four 0/1 combinations of vertex
  label entries at $i$ and at $j$. We may assume that $v$ is contained
  in the same region as $u$ for each pair $(i,j) \in C$ (otherwise we
  choose $u'$ on the other side of $P_i$ or $P_j$ and are done). Let
  $R$ be the intersection of all these regions, one region per pair in
  $C$.

If all $i \in I(u)$ are contained in at least one pair of $C$, we are
done. Indeed this means that $R=F_u$ and thus that $u$ is uniquely
determined by the entries of $l(u)$ at the positions given by
$I(u)$. We can then proceed as above. The remaining case is that there
exist $k \in I(u)$ such that $P_k$ does not intersect any $P_j$ with
$j \in I(u), j \not= k$ (see Figure~\ref{fig:cases}c). Recall that we
assumed $u \not= v$, $u, v \in R$, \ie $u$ and $v$ are separated by
$P_k$. Hence, the entries of $l(u)$ and $l(v)$ differ at a position $k
\in I(u)$, and $u'$ with $h(l(u'),l(v)) < h(l(u),l(v))$ can be reached
from $u$ by crossing $P_k$.
\end{enumerate}

So far we have shown that $\subE{G}$ is a partial cube and that the
cut of $\subE{G}$ given by any $P_i$ is precisely the cut that
determines the $i$-th entry of $l(u)$ for all $u \in \subE{V}$. Thus,
the cuts of $\subE{G}$ given by the $P_i$, $1 \leq i \leq n$, are
precisely the $n$ convex cuts of $\subE{G}$.
\end{proof}

\begin{figure}[h]
\begin{center}
(a) \includegraphics[height=0.2\columnwidth]{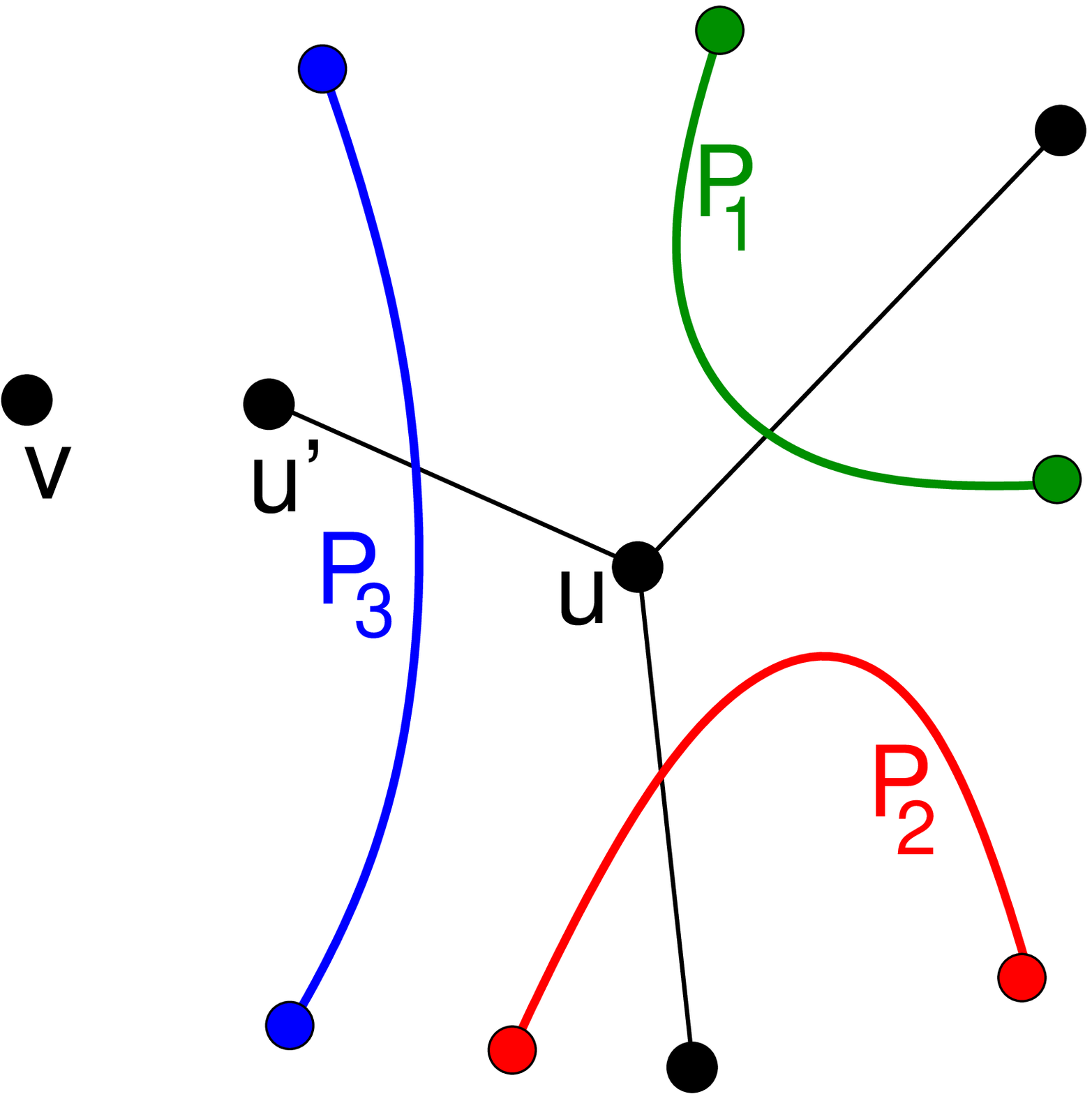}\qquad
(b) \includegraphics[height=0.2\columnwidth]{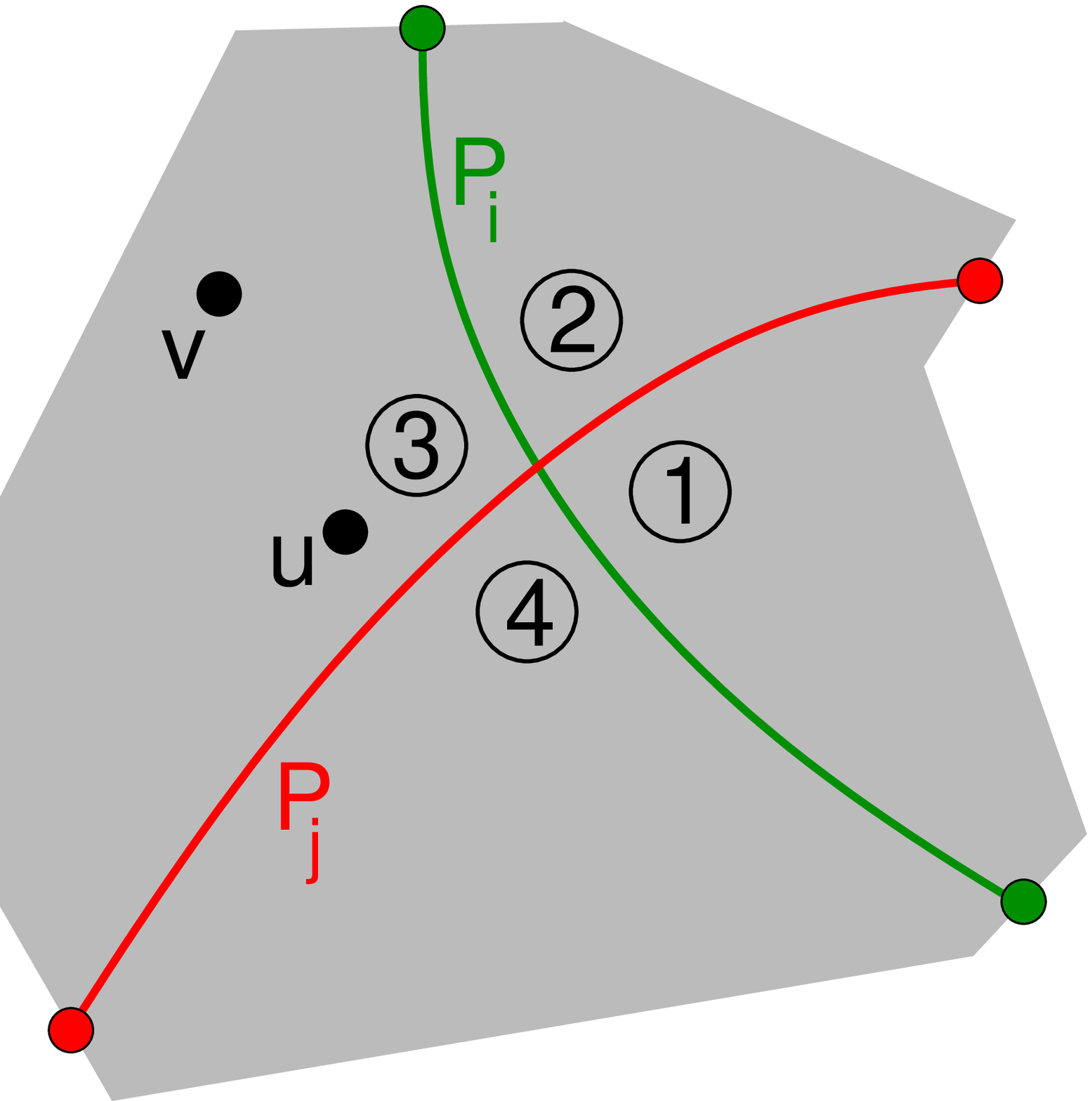}\qquad
(c) \includegraphics[height=0.2\columnwidth]{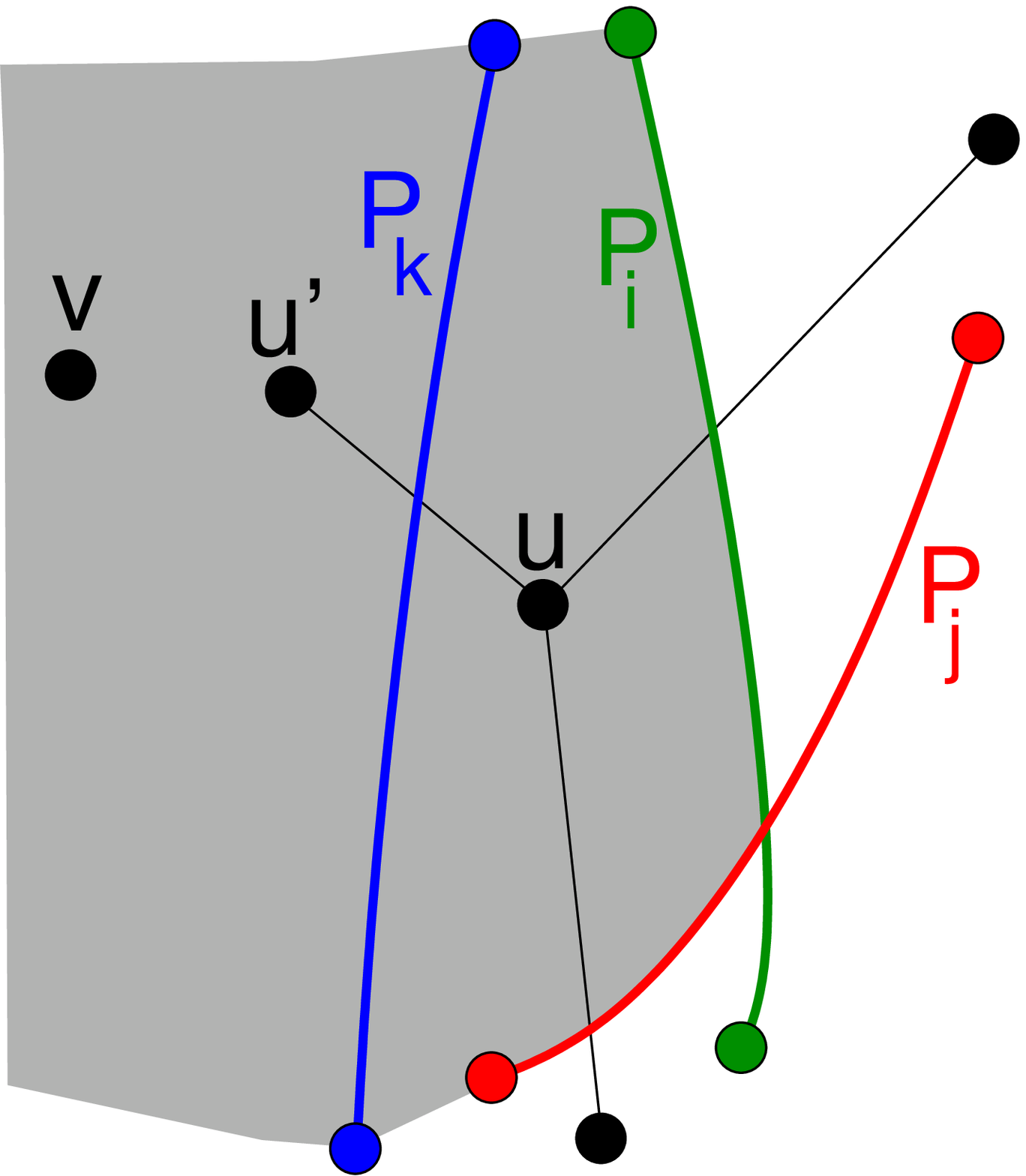}
\caption{\label{fig:cases} Subgraph of $\subE{G}$: black vertices and
  black edges. Subset of EAPs: colored vertices and edges. (a)
  Illustration to item 2 in proof of Theorem~\ref{theo:arr2pc}. (b,c)
  Illustrations to item 3 in proof of Theorem~\ref{theo:arr2pc}. The
  shaded regions in (b) and (c) indicate $D(G)$ and $R$,
  respectively.}
\end{center}
\end{figure}


Recall that the cut-sets of the convex cuts of a partial cube are
precisely the equivalence classes of the \djoko (see
Section~\ref{sec:prelim}). Thus, Theorem~\ref{theo:arr2pc} yields the
following.

\begin{corollary}
\label{coro:pc2}
If $G$ is well-arranged, there exists a one-to-one correspondence
between the EAPs of $G$ and the equivalence classes of the \djoko on
$\subE{G}$. Specifically, an equivalence class of the \djoko on
$\subE{G}$ is given by the set of edges intersected by an EAP of $G$,
and vice versa.
\end{corollary}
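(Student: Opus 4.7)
The plan is to chain together three correspondences: (i) between EAPs of $G$ and convex cuts of $\subE{G}$, which is already delivered by Theorem~\ref{theo:arr2pc}; (ii) between convex cuts of $\subE{G}$ and their cut-sets, which is tautological; and (iii) between cut-sets of convex cuts of $\subE{G}$ and $\theta$-equivalence classes on $E(\subE{G})$, which is the standard characterization of partial cubes recalled in Section~\ref{sec:prelim}.

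First I would recall from Section~\ref{sec:prelim} that a graph is a partial cube if and only if it is bipartite and the \djoko $\theta$ is transitive; in that case $\theta$ partitions the edge set into equivalence classes $E_q^1,\dots,E_q^k$, and these equivalence classes are precisely the cut-sets of the convex cuts of the partial cube (the pairs of semicubes). Next I would invoke Theorem~\ref{theo:arr2pc}: since $G$ is well-arranged, $\subE{G}$ is a partial cube whose convex cuts are exactly the cuts of $\subE{G}$ given by the EAPs of $G$. Composing these two facts, the map sending an EAP $P \in \mathcal{E}(G)$ to the cut-set of the cut of $\subE{G}$ given by $P$ is a bijection between $\mathcal{E}(G)$ and the equivalence classes of $\theta$ on $\subE{G}$.

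It then remains to identify the cut-set of the cut of $\subE{G}$ given by $P$ with the set of edges of $\subE{G}$ intersected by $P$. For an edge $e=\{u,v\}\in E(\subE{G})$, the endpoints $u,v$ sit in adjacent facets of $\mathcal{E}(G)$ by Definition~\ref{def:DEAP}; hence $e$ can be drawn as a simple curve inside $D(G)$ joining $u$ to $v$. This curve meets $P$ (at an interior point of $e$) if and only if $u$ and $v$ lie in different components of $D(G)\setminus P$, i.e.\ if and only if $e$ belongs to the cut-set of the cut given by $P$. This identification combined with the bijection established above yields the claim.

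The main obstacle, if there is any, is the small topological verification in the last paragraph; everything else is a direct consequence of Theorem~\ref{theo:arr2pc} together with the partial cube facts from Section~\ref{sec:prelim}. Since an edge of $\subE{G}$ is by construction realized as a curve between the centers of two adjacent facets and the facets are precisely the connected components of $D(G)$ minus the EAPs, this verification is immediate and no further work is needed.
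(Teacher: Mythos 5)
Your proposal is correct and follows essentially the same route as the paper, which obtains the corollary immediately by combining Theorem~\ref{theo:arr2pc} with the fact recalled in Section~\ref{sec:prelim} that the cut-sets of the convex cuts of a partial cube are precisely the equivalence classes of the \djoko. Your additional topological verification that the cut-set of the cut given by an EAP $P$ coincides with the set of edges of $\subE{G}$ intersected by $P$ is a detail the paper leaves implicit, and it is handled correctly.
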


\subsection{Convex Cuts from Alternating Paths}
\label{sub:shortcut}
In case of $G$ being well-arranged, the subgraph defined below will
serve as a stepping stone for relating the convex cuts of $\subE{G}$
to those of $G$ (see Lemma~\ref{lemma:iso_sub}).

\begin{definition}[Subgraph $\widetilde{\subE{G}}$ of $\subE{G}$]
$\widetilde{\subE{G}}$ is the graph obtained from $\subE{G}$ by
  deleting all star vertices and replacing parallel edges by single
  edges.
\label{def:iso_sub}  
\end{definition}


In the following recall that we identified the primary vertices in
$\subE{G}$ with the vertices of $G$.

\begin{lemma}
\label{lemma:iso_sub}
If $G$ is well-arranged, then $d_G(u, v) = \frac{1}{2}
d_{\widetilde{\subE{G}}}(u, v) = \frac{1}{2} d_{\subE{G}}(u, v)$.
\end{lemma}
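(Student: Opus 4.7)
The plan is to prove the chain
\[
d_{\subE{G}}(u,v)\;\le\;d_{\widetilde{\subE{G}}}(u,v)\;\le\;2\,d_G(u,v)\;\le\;d_{\subE{G}}(u,v),
\]
which forces all three quantities to coincide up to the factor~$2$ and yields the lemma. The first inequality is immediate: $\widetilde{\subE{G}}$ is obtained from $\subE{G}$ by deleting star vertices and identifying parallel edges (Definition~\ref{def:iso_sub}), so every $\widetilde{\subE{G}}$-walk is a $\subE{G}$-walk of the same length.

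For the middle inequality $d_{\widetilde{\subE{G}}}(u,v)\le 2\,d_G(u,v)$, I would show that every $G$-edge $\{u',v'\}$ admits a length-$2$ path $u'\!-\!w\!-\!v'$ in $\widetilde{\subE{G}}$ through some intermediate vertex~$w$; concatenating these gadgets along any shortest $G$-path from~$u$ to~$v$ then gives a $\widetilde{\subE{G}}$-walk of length $2\,d_G(u,v)$. Such a~$w$ corresponds to a facet of~$\mathcal{E}(G)$ lying in one of the two bounded $G$-faces incident to~$\{u',v'\}$ whose boundary contains both endpoints. Its existence follows from the local intersection patterns prescribed in Section~\ref{sub:embed}: on at least one side of the edge, the EAPs crossing $\{u',v'\}$ peel off into the incident face, leaving a ``corner'' facet that reaches from~$u'$ along~$\{u',v'\}$ to~$v'$.

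For the third inequality $2\,d_G(u,v)\le d_{\subE{G}}(u,v)$, I would invoke Theorem~\ref{theo:arr2pc}: since $\subE{G}$ is a partial cube, $d_{\subE{G}}(u,v)=h(l(u),l(v))$ equals the number of EAPs in $\mathcal{E}(G)$ that separate~$u$ from~$v$ in~$\RR^2$. A parity argument---each $G$-edge is crossed by exactly two EAPs, so any $G$-walk flips an even total number of bits of~$l$---shows that this Hamming distance is always even between primal vertices; in particular, distinct primal vertices are never $\subE{G}$-adjacent. Then, by induction on $d_{\subE{G}}(u,v)$, I would extract from a suitably chosen shortest $\subE{G}$-path between~$u$ and~$v$ the subsequence of primal vertices $u=p_0,p_1,\dots,p_k=v$. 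Consecutive primals are at $\subE{G}$-distance at least~$2$, and by the inductive hypothesis their $G$-distance equals half their $\subE{G}$-distance; summing gives $d_G(u,v)\le\sum_i d_G(p_i,p_{i+1}) = d_{\subE{G}}(u,v)/2$.

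The main obstacle is the inductive step: one must guarantee that some shortest $\subE{G}$-path from~$u$ to~$v$ visits enough intermediate primals for the induction to make progress. This is where the length-$2$ gadgets constructed for the middle inequality come back into play: by concatenating them along a shortest $G$-path, one obtains a $\subE{G}$-walk of length exactly $2\,d_G(u,v)$ that passes through all intermediate $G$-vertices of that $G$-path, and the partial-cube structure of $\subE{G}$ ensures that this walk is a shortest $\subE{G}$-path. On this canonical shortest path the inductive hypothesis applies to every pair $(p_i,p_{i+1})$ of consecutive primals, which closes the argument.
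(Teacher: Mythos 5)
Your sandwich $d_{\subE{G}}(u,v)\le d_{\widetilde{\subE{G}}}(u,v)\le 2\,d_G(u,v)\le d_{\subE{G}}(u,v)$ is a reasonable skeleton, and the first two inequalities are fine (the length-$2$ gadget through an intermediate vertex is exactly the structural fact the paper also uses). The genuine gap is in the last inequality. Your induction stalls precisely when a shortest $\subE{G}$-path from $u$ to $v$ contains no intermediate primal vertex, and the patch you offer for that case is not a proof: asserting that the subdivided shortest $G$-path ``is a shortest $\subE{G}$-path by the partial-cube structure'' is exactly the claim that has to be established. In a partial cube a walk is a shortest path if and only if it crosses each equivalence class of the \djoko --- here, by Corollary~\ref{coro:pc2}, each EAP --- at most once; so you must show that some shortest $G$-path, once subdivided, crosses every EAP at most once. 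Nothing in the partial-cube axioms alone gives you this: it is essentially the convexity of the cuts induced by EAPs (Proposition~\ref{prop:well-arr}), which the paper \emph{derives from} this lemma, so invoking it here would be circular. Note also that once this crossing property is proved, the subdivided walk has length $2\,d_G(u,v)$ and is shortest, so all the claimed equalities follow at once and your induction and parity argument become unnecessary.

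The paper closes this hole with a face-local argument that your proposal would need to reproduce: for $u,v$ on the boundary of a single bounded face $F$, the definition of alternating paths via opposite edges (Definitions~\ref{def:opposite_edges} and~\ref{def:multiset}) guarantees a path along $E(F)$ from $u$ to $v$ that crosses each EAP traversing $F$ at most once; its subdivision is therefore a shortest path in the partial cube $\subE{G}^F$, of length $2\,d_G(u,v)$, and these local shortest paths are then concatenated to handle arbitrary $u,v$. Some argument of this kind, tying the combinatorial definition of alternating paths to the fact that a suitable shortest $G$-path never crosses an EAP twice, is indispensable and is missing from your proposal.
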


\begin{proof}~\vspace{-0.5cm}\\
\begin{itemize}
\item For any face $F$ of $G$, let $G^F$ [$\subE{G}^F$] denote the
  subgraph of $G$ [$\subE{G}$] that is contained in $\closu{F}$. From
  $G$ being well-arranged follows that $G^F$ is well arranged, and
  Theorem~\ref{theo:arr2pc} yields that $\subE{G}^F$ is a partial
  cube.
\item Let $u$, $v$ be vertices of $G^F$. Then, due to the EAPs being
  embedded alternating paths, there exists a path $P$ from $u$ to $v$
  in $G$ that (i) contains only edges from $E(F)$ and (ii) crosses any
  of the EAPs through $F$ at most once. Hence, there exists a path
  $\widetilde{\subE{P}}$ from $u$ to $v$ in $\widetilde{\subE{G}}$
  (see Definition~\ref{def:iso_sub}) that (i) is a subdivision of $P$
  with every other vertex being an intermediate vertex and (ii) also
  crosses any of the EAPs through $F$ at most once. Since $\subE{G}^F$
  is a partial cube, $\widetilde{\subE{P}}$ is a shortest path in
  $\subE{G}^F$. Its length is $2 d_G(u, v)$.
\item Let $u$, $v$ be vertices of $G$. Using the previous item
  repeatedly, we get that there exists a shortest path
  $\widetilde{\subE{P}}^*$ from $u$ to $v$ in $\widetilde{\subE{G}}$
  whose length is $d_{\subE{G}}(u, v) = d_{\widetilde{\subE{G}}}(u,
  v)$. On $\widetilde{\subE{P}}^*$ the vertices in $G$ alternate with
  intermediate vertices, and for any edge $\{x, y\}$ in $G$ there
  exists a path of length two between $x$ and $y$ with the vertex in
  the middle being an intermediate vertex. Thus, $d_G(u, v) =
  \frac{1}{2} d_{\widetilde{\subE{G}}}(u, v) = \frac{1}{2}
  d_{\subE{G}}(u, v)$.
\end{itemize}
\end{proof}

\begin{proposition}
\label{prop:well-arr}
If $G$ is well-arranged, any cut given by an EAP of $G$ is convex.
\end{proposition}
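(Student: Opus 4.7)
The plan is to reduce the statement to the convexity of the corresponding cut in $\subE{G}$, which we already know from Theorem~\ref{theo:arr2pc}. Fix an EAP $P$ and let $(V_1,V_2)$ be the cut of $G$ it defines. Pick $u,v$ in the same part, say $V_1$, and let $\pi=(u=w_0,w_1,\dots,w_k=v)$ be any shortest $u$-$v$ path in $G$; we must show every $w_j$ lies in $V_1$.

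The first step is to lift $\pi$ to a shortest path in $\subE{G}$. Since $u$ and $v$ are primal vertices of $\subE{G}$ (Definition~\ref{def:vertex_classes}), Lemma~\ref{lemma:iso_sub} gives $d_{\subE{G}}(u,v)=2d_G(u,v)=2k$. Moreover, the argument of Lemma~\ref{lemma:iso_sub} shows that for every edge $\{w_j,w_{j+1}\}$ of $G$ there exists an intermediate vertex $m_j$ with $w_jm_jw_{j+1}$ a length-$2$ path in $\subE{G}$. Concatenating these yields a walk
\[
\widetilde{\pi}=(w_0,m_0,w_1,m_1,\dots,m_{k-1},w_k)
\]
of length exactly $2k=d_{\subE{G}}(u,v)$, so $\widetilde{\pi}$ is a shortest $u$-$v$ walk in $\subE{G}$. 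Because $\subE{G}$ is a partial cube by Theorem~\ref{theo:arr2pc} and therefore bipartite, any shortest walk is in fact a (simple) shortest path.

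The second step is to invoke convexity in $\subE{G}$. By Theorem~\ref{theo:arr2pc} the cut of $\subE{G}$ given by $P$ is convex, so every vertex of $\widetilde{\pi}$ lies in the same component of $D(G)\setminus P$ as $u$ and $v$. In particular the primal vertices $w_0,\dots,w_k$ on $\widetilde{\pi}$ all lie in that component, which is $V_1$. Hence $\pi\subseteq V_1$, proving convexity of the part. The same argument with roles reversed covers $V_2$.

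The only possible subtlety is making sure that ``same side of $P$'' means the same thing for vertices of $G$ and for primal vertices of $\subE{G}$; this is immediate because Definition~\ref{def:vertex_classes} identifies the primal vertices of $\subE{G}$ with the vertices of $G$, and the partition of $D(G)\setminus P$ into two components is a purely topological notion that does not depend on which graph we view the point as belonging to. Beyond this identification the proof is a clean two-line reduction, so I do not expect any genuine technical obstacle.
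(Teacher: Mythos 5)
Your proof is correct and follows essentially the same route as the paper's: lift the shortest path of $G$ to a length-doubled shortest path of $\subE{G}$ via the intermediate vertices and Lemma~\ref{lemma:iso_sub}, then invoke the convexity of the corresponding cut of $\subE{G}$ from Theorem~\ref{theo:arr2pc}. The only difference is that you argue directly while the paper phrases it as a contradiction (a shortest path in $\subE{G}$ crossing $P$ twice), which is immaterial.
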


\begin{proof}
Let $(V_1, V_2)$ be the cut of $G$ given by an EAP $P$ of $G = \{V,
E\}$. Without loss of generality let $u_1, v_1 \in V_1$. We have to
show that any shortest path from $u_1$ to $v_1$ contains only vertices
from $V_1$.

We assume the opposite, \ie that there exists a shortest path $S$ from
$u_1$ to $v_1$ in $G$ that contains a vertex from $V_2$, \ie a vertex
on the other side of the EAP $P$. This shortest path can be turned
into a path $\subE{S}$ in $\subE{G}$ with twice the length by
inserting a vertex from $v' \setminus V$ between any pair of
consecutive vertices on $S$. Lemma~\ref{lemma:iso_sub} then yields
that $\subE{S}$ is a shortest path in $\subE{G}$ that crosses the EAP
$P$ twice. This is a contradiction to Theorem~\ref{theo:arr2pc}.
\end{proof}

%
%
%
\section{Convex cuts of bipartite graphs}
\label{sec:bipartite}
Let $H'=(V,E)$ be a bipartite but not necessarily plane graph. As
mentioned in Section~\ref{sec:prelim}, any edge $e=\{a,b\}$ of $H'$
gives rise to a cut of $H'$ into $W_{ab}$ and $W_{ba}$.  The cut-set
of this cut is $C_e = \{f \in E \mid e~\theta'~f\}$, where $\theta'$
is the \djoko on $H'$. Note that $C_e = \{f=\{u,v\} \in E \mid
d_{H'}(a,u)=d_{H'}(b,v)\}$. In the following we characterize the
cut-sets of the {\em convex} cuts of $H'$. This characterization is
key to finding all convex cuts of a bipartite graph in $\bigO(\vert E
\vert^3)$ time.

\begin{lemma}
\label{lemma-pairwise-djoko}
Let $H'=(V,E)$ be a bipartite graph, and let $e \in E$. Then $C_e$ is
the cut-set of a convex cut of $H'$ if and only if $f~\theta'~\hat{f}$
for all $f, \hat{f} \in C_e$.
\end{lemma}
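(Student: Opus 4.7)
The plan is to prove both implications via the following standard strengthening of the \djoko in bipartite graphs, which I would state (and briefly justify) at the outset: for two edges $f = \{u,v\}$ and $\hat{f} = \{x,y\}$ of the bipartite graph $H'$, the relation $f~\theta'~\hat{f}$ holds iff, after a suitable labeling of endpoints, $d_{H'}(u,x) = d_{H'}(v,y)$ and $d_{H'}(u,y) = d_{H'}(v,x) = d_{H'}(u,x) + 1$. This follows directly from the definition of $\theta'$ together with the fact that consecutive vertices in a bipartite graph have distances to any fixed vertex differing by exactly one.

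For the ``only if'' direction, I would fix $f = \{u,v\}$ and $\hat{f} = \{x,y\}$ in $C_e$ with $u, x \in W_{ab}$ and $v, y \in W_{ba}$, and drive everything from the single observation that $d_{H'}(u,y) = d_{H'}(u,x) + 1$. The point is that the alternative possibility $d_{H'}(u,y) = d_{H'}(u,x) - 1$ (which is the only other option by bipartite parity) would make the walk $u \to y \to x$ a shortest $u$-$x$ path leaving $W_{ab}$ via $y \in W_{ba}$, contradicting convexity of the subgraph on $W_{ab}$. Once this identity and its symmetric partner $d_{H'}(v,x) = d_{H'}(v,y) + 1$ are in place, two applications of the triangle inequality across the edge $\{u,v\}$ squeeze $d_{H'}(u,x)$ and $d_{H'}(v,y)$ to be equal, and the four distances match the $\theta'$-pattern above.

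For the ``if'' direction, I would argue by contradiction: assume, without loss of generality, that the subgraph on $W_{ab}$ is not convex, take a shortest path $P = p_0, p_1, \dots, p_k$ between two vertices of $W_{ab}$ that visits $W_{ba}$, and let $\{p_{i-1}, p_i\}$ and $\{p_{j-1}, p_j\}$ be the first edge of $P$ leaving $W_{ab}$ and the first edge returning, respectively (so $1 \le i < j \le k$). Both of these edges lie in $C_e$, hence by hypothesis are $\theta'$-related, and the characterization recalled above yields $d_{H'}(p_{i-1}, p_j) = d_{H'}(p_i, p_{j-1}) \le j - i - 1$, since $p_i, \dots, p_{j-1}$ is a subpath of $P$ of length $j - i - 1$. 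Splicing a shortest $p_{i-1}$-$p_j$ path into $P$ in place of the subpath $p_{i-1}, \dots, p_j$ (which has length $j - i + 1$) yields a $p_0$-$p_k$ walk of length at most $k - 2$, contradicting the optimality of $P$. By the obvious symmetry, the same argument rules out non-convexity of $W_{ba}$.

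The main obstacle is really the forward direction's key distance identity $d_{H'}(u,y) = d_{H'}(u,x) + 1$; the backward direction is then a fairly clean ``splice a shortcut'' argument that almost writes itself once the $\theta'$-characterization is in place.
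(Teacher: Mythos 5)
Your proof is correct, and it is worth comparing with the paper's, since both halves are organized differently. The engine of your argument --- the explicit four-distance characterization of $\theta'$ on a bipartite graph, $d_{H'}(u,x)=d_{H'}(v,y)$ and $d_{H'}(u,y)=d_{H'}(v,x)=d_{H'}(u,x)+1$ after suitable labeling --- is the right tool and is never stated explicitly in the paper. For ``pairwise $\theta'$ implies convex'' the paper uses exactly your setup (a shortest path dipping into the other side; its first departing and first returning edges both lie in $C_e$), but then cites Ovchinnikov's Lemma~3.5 (no two edges of a shortest path are $\theta$-related) to conclude; your splice-a-shortcut computation is a self-contained proof of precisely the instance of that lemma that is needed, so you trade a citation for a few lines of distance arithmetic. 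For ``convex implies pairwise $\theta'$'' the paper argues by contraposition, assuming $\neg(f~\theta'~\hat f)$ and (w.l.o.g.) $d_{H'}(u,\hat u)<d_{H'}(v,\hat v)$, and then building a shortest $v$--$\hat v$ path through $W_{ab}$; your direct derivation of the four distance identities from convexity is arguably cleaner, since it never has to argue that failure of $\theta'$ forces the two same-side distances to differ. Two points to make explicit in a full write-up: (i) the forward direction of your $\theta'$-characterization needs the one-line triangle-inequality case elimination ruling out $d_{H'}(u,y)=d_{H'}(u,x)-1$; and (ii) in the ``if'' direction you invoke the characterization with the pairing $p_{i-1}\leftrightarrow p_j$, $p_i\leftrightarrow p_{j-1}$ without knowing that this is the ``suitable'' labeling --- you should remark that $d_{H'}(p_{i-1},p_j)=d_{H'}(p_i,p_{j-1})$ holds under either of the two possible labelings, because in each case the two cross distances coincide. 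Neither point is a gap; both are one-sentence fixes.
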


\begin{proof}
''$\Leftarrow$`` Let $e=\{a,b\}$, and assume that the cut with
cut-set $C_e$ is not convex. Then there exists a shortest path
$P=\{v_1, \dots v_n\}$ with both end vertices in, say, $W_{ab}$ such
that $P$ has a vertex in $W_{ba}$. Let $i$ be the smallest index such
that $v_i \in W_{ba}$, and let $j$ be the smallest index greater than
$i$ such that $v_j \in W_{ab}$. Then $f=\{v_{i-1}, v_i\}$ and
$\hat{f}=\{v_{j-1}, v_j\}$ are contained in $C_e$. We use now a result
by Ovchinnikov~\cite[Lemma~3.5]{Ovchinnikov2008a}, which states that
no pair of edges on a shortest path are related by $\theta'$, \ie
$f~\theta'~\hat{f}$ does {\em not} hold.

''$\Rightarrow$`` Let $f=\{u,v\}$ and $\hat{f}=\{\hat{u},\hat{v}\}$ be
edges in $C_e$ such that $f~\theta'~\hat{f}$ does not hold. Without
loss of generality assume $u, \hat{u} \in W_{ab}$, $v, \hat{v} \in
W_{ba}$ and $d_{H'}(u,\hat{u}) < d_{H'}(v,\hat{v})$. Due to $H'$ being
bipartite, both distances are even or both are odd. Hence,
$d_{H'}(v,\hat{v}) - d_{H'}(u,\hat{u}) \geq 2$. Consider the path
$\hat{P}$ from $v$ via $f$ to $u$, then along a shortest path from $u$
to $\hat{u}$ and finally from $\hat{u}$ to $\hat{v}$ via
$\hat{f}$. This path has length $d_{H'}(u,\hat{u}) + 2 \leq
d_{H'}(v,\hat{v})$. Thus, $\hat{P}$ is a shortest path from $v$ to
$\hat{v}$ (and $d_{H'}(v,\hat{v}) - d_{H'}(u,\hat{u}) = 2$). The path
$\hat{P}$ is a shortest path from $v \in W_{ba}$ via $u, \hat{u} \in
W_{ab}$ to $\hat{v} \in W_{ba}$, so that $C_e$ is not the cut-set of a
convex cut.
\end{proof}

Lemma~\ref{lemma-pairwise-djoko} suggests to determine the convex cuts
of $H'$ as sketched in Algorithm~\ref{algo:bipartite} by checking for each cut-set $C_{e^i}$
if the cut-sets of the contained edges $f^j$ all coincide.

\begin{algorithm}
\caption{Find all cut-sets of convex cuts of a bipartite graph $H'$}
\label{algo:bipartite}
\begin{algorithmic}[1]
\Procedure{EvaluateCutSets}{bipartite graph $H'$} \newline
\Comment{Computes $C_{e^i}$ for each edge $e^i$ and stores
  in $isConvex[i]$ if $C_{e^i}$ is the cut-set of a convex cut}
\State Let $e^1, \dots\, e^m$ denote the edges of $H'$; initialize all $m$
entries of the array $isConvex$ as $true$
\For{$i=1, \dots, m$}
 \State Determine $C_{e^i}=\{f^j~|~e^i~\theta~f^j\}$
 \ForAll{$f^j \neq e^i$}
  \State Determine $C_{f^j} = \{g^k~|~f^j~\theta~g^k \}$
  \If{$C_{f^j} \neq C_{e^i}$}
   \State $isConvex[i] := false$
   \State \textbf{break}
  \EndIf
 \EndFor 
\EndFor
\EndProcedure
\end{algorithmic}
\end{algorithm}

\begin{theorem}
\label{thm:algo-time}
All convex cuts of a bipartite graph can be found using $\bigO(|E|^3)$
time and $\bigO(|E|)$ space.
\end{theorem}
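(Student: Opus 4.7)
The plan is to verify that Algorithm~\ref{algo:bipartite} correctly identifies the cut-sets of convex cuts and then to bound its time and space usage by a careful implementation of the distance computations.

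For correctness, I would invoke Lemma~\ref{lemma-pairwise-djoko} directly: the lemma says that $C_{e^i}$ is the cut-set of a convex cut of $H'$ if and only if every pair of edges in $C_{e^i}$ is related by $\theta'$. But $f^j \in C_{e^i}$ means $f^j~\theta'~e^i$, and pairwise relatedness of all edges in $C_{e^i}$ is equivalent to $C_{f^j} = C_{e^i}$ for every $f^j \in C_{e^i}$ (since both cut-sets are determined by the equivalence-class-like collection of edges Djokovi\'{c}-related to any one of their members, and coincide exactly when the members belong to the same such collection). Hence the check performed in the inner loop of Algorithm~\ref{algo:bipartite} decides convexity correctly, and $isConvex[i] = true$ at termination iff $C_{e^i}$ is the cut-set of a convex cut.

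For the running time I would proceed as follows. Computing $W_{ab}$ and $W_{ba}$ for a single edge $e = \{a,b\}$ requires two breadth-first searches, one from $a$ and one from $b$, each taking $\bigO(|V| + |E|) = \bigO(|E|)$ time (we may assume connectedness, and anyway $H'$ is two-connected by the standing assumption in Section~\ref{sec:prelim}). Given these distance arrays, one pass over all edges determines $C_e$ in $\bigO(|E|)$ additional time, so each cut-set is computed in $\bigO(|E|)$ time. The outer loop runs $m = |E|$ times. For each index $i$, computing $C_{e^i}$ costs $\bigO(|E|)$, and then the inner loop processes at most $|C_{e^i}| \le |E|$ edges $f^j$, each requiring an $\bigO(|E|)$ cut-set computation plus an $\bigO(|E|)$ set comparison (using a boolean characteristic array indexed by edges). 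This yields $\bigO(|E|) + \bigO(|E|) \cdot \bigO(|E|) = \bigO(|E|^2)$ work per outer iteration, and $\bigO(|E|^3)$ in total.

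For the $\bigO(|E|)$ space bound I would emphasize that nothing needs to persist across iterations of the outer loop except the single boolean array $isConvex$ of length $m$. Within one iteration we store the two BFS distance arrays (size $\bigO(|V|)$), the characteristic vector of $C_{e^i}$ (size $\bigO(|E|)$), and, while processing a single $f^j$, the corresponding distance arrays and characteristic vector of $C_{f^j}$ (also $\bigO(|E|)$ total); all of these buffers are reused. Since $|V| \le |E|+1$ in a connected graph, the total working memory is $\bigO(|E|)$. The main subtlety I would watch out for is precisely this: one must resist precomputing an all-pairs distance table (which would blow space up to $\bigO(|V|^2)$) and instead recompute BFS trees on demand, which is exactly what makes the space bound tight without degrading the asymptotic time bound. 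Combining correctness with these bounds proves the theorem.
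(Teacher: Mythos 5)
Your proposal is correct and follows essentially the same route as the paper: the same algorithm, correctness via Lemma~\ref{lemma-pairwise-djoko} plus symmetry of $\theta'$, per-cut-set BFS in $\bigO(|E|)$ time giving $\bigO(|E|^3)$ overall, and buffer reuse for the $\bigO(|E|)$ space bound. The one point where the paper is more explicit is the direction ``pairwise $\theta'$-relatedness of $C_{e^i}$ implies $C_{f^j}=C_{e^i}$'' (you only get $C_{e^i}\subseteq C_{f^j}$ for free), which the paper grounds in the fact that no proper subset of such a cut-set can itself be a cut-set because $W_{ab}$ and $W_{ba}$ induce connected subgraphs; your parenthetical about the cut-sets ``coinciding when members belong to the same collection'' is gesturing at exactly this fact and should be stated that way.
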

\begin{proof}
We use Algorithm~\ref{algo:bipartite}. The correctness follows from
Lemma~\ref{lemma-pairwise-djoko} and the symmetry of the
\djoko. Regarding the running time, observe that for any edge $e^i$,
the (not necessarily convex) cut-set $C_{e^i}$ can be determined using
breadth-first search to compute the distances of any vertex to the end
vertices of $e^i$. Thus, any $C_{e^i}$ can be determined in time
$\bigO(\vert E \vert)$. We mark [unmark] the edges in $C_{e^i}$ when
entering [exiting] the inner for-loop, which has $\bigO(|E|))$
iterations.

The time complexity $\bigO(|E|))$ of an iteration of the inner loop is
due to the calculation of $C_{f^j}$. The test whether $C_{f^j} =
C_{e^i}$ is done on the fly: any time a new edge of $C_{f^j}$ is
found, we only check if the edge has a mark. We have $C_{f^j} =
C_{e^i}$ if and only if all new edges are marked. This follows from
the fact that no proper subset of $C_{e^i}$ can be a cut-set (the
subgraphs induced by $W_{ab}$ and $W_{ba}$ are connected). Hence,
Algorithm~\ref{algo:bipartite} runs in $\bigO(\vert E \vert^3)$ time.
Since no more than two cut-sets (with $\bigO(|E|)$ edges each) have to
be stored at the same time, the space complexity of
Algorithm~\ref{algo:bipartite} is $\bigO(|E|)$.
\end{proof}
A simple loop-parallelization over the edges in line $3$ leads to a
parallel running time of $\bigO(|E|^2)$ with $\bigO(|E|)$
processors. If one is willing to spend more processors and a quadratic
amount of memory, then even faster parallelizations are
possible. Since they use standard PRAM results, we forgo their
description.

\section{Convex cuts of general graphs}
\label{sec:general}
In Theorem~\ref{theorem_subdiv} of this section we characterize the
cut-sets of the convex cuts of a general graph $H$ in terms of two
binary relations on edges: the \djoko and the relation $\tau$ (for
$\tau$ see Definition~\ref{def:tau}). We will use
Theorem~\ref{theorem_subdiv} in Section~\ref{sec:general_plane} to
find all convex cuts of a {\em plane} graph.

While the relation $\tau$ is applied to the edges of $H$, the \djoko is
applied to the edges of a bipartite subdivision $H'$ of
$H$. Specifically, $H'$ is the graph that one obtains from $H$ by
subdividing each edge of $H$ into two edges. An edge $e$ in $H$ that
is subdivided into edges $e_1, e_2$ of $H'$ is called {\em parent} of
$e_1, e_2$, and $e_1, e_2$ are called {\em children} of $e$.

\begin{definition}[Relation $\tau$]
\label{def:tau}
Let $e=\{u_e, v_e\}$ and $f=\{u_f, v_f\}$ be edges of $H$. Then,
$e~\tau~f$ iff $d_H(u_e, u_f) = d_H(v_e, v_f) = d_H(u_e, v_f) =
d_H(v_e,u_f)$.
\end{definition}

The next lemma follows directly from the definition of $\theta'$ and
$\tau$.

\begin{lemma}
\label{lemma:theta_tau}
If $e~\tau~f$, then none of the children of $e$ is
$\theta'$-related to a child of $f$.
\end{lemma}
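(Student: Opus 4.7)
\textbf{Proof plan for Lemma~\ref{lemma:theta_tau}.}

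The plan is to do a direct distance calculation in $H'$, exploiting two facts: (i)~any edge of $H$ becomes a path of length two in $H'$, so for the original vertices $x, y \in V(H)$ we have $d_{H'}(x, y) = 2\, d_H(x, y)$, and (ii)~for a subdivision vertex $m_g$ of an edge $g = \{u_g, v_g\}$ of $H$ and any $x \in V(H)$, a shortest $H'$-path from $m_g$ to $x$ must leave $m_g$ along one of its two incident edges, so $d_{H'}(m_g, x) = 1 + 2\min(d_H(u_g, x), d_H(v_g, x))$. I would state these two identities as a preliminary observation and verify them in one line each.

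Write $e = \{u_e, v_e\}$ with children $e_1 = \{u_e, m_e\}$, $e_2 = \{m_e, v_e\}$ in $H'$, and analogously $f = \{u_f, v_f\}$ with children $f_1 = \{u_f, m_f\}$, $f_2 = \{m_f, v_f\}$. Set $k := d_H(u_e, u_f) = d_H(u_e, v_f) = d_H(v_e, u_f) = d_H(v_e, v_f)$, which is the common value guaranteed by $e~\tau~f$. Since $H'$ is bipartite (each cycle of $H$ is doubled in length), the sets $W_{u_e m_e}$ and $W_{m_e u_e}$ partition $V(H')$, so $e_1~\theta'~f_j$ fails precisely when both endpoints of $f_j$ lie on the same side of that partition.

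The core calculation is then routine. Using (i) and (ii):
\begin{align*}
d_{H'}(u_f, u_e) &= 2k, & d_{H'}(u_f, m_e) &= 1 + 2k, \\
d_{H'}(v_f, u_e) &= 2k, & d_{H'}(v_f, m_e) &= 1 + 2k, \\
d_{H'}(m_f, u_e) &= 1 + 2k, & d_{H'}(m_f, m_e) &= 2 + 2k.
\end{align*}
Hence each of $u_f, v_f, m_f$ lies in $W_{u_e m_e}$, so both endpoints of $f_1$ and both endpoints of $f_2$ lie on the same side of the partition induced by $e_1$; therefore neither $f_1$ nor $f_2$ is $\theta'$-related to $e_1$. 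For $e_2 = \{m_e, v_e\}$ the same table applies with $u_e$ replaced by $v_e$ (because $d_H(v_e, \cdot) = k$ for $\cdot \in \{u_f, v_f\}$ as well), giving the same distances and thus the same conclusion.

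I do not anticipate a real obstacle: the content is essentially a bookkeeping exercise once the two distance identities are in place, and the bipartiteness of $H'$ ensures that "same side" really implies "not $\theta'$-related." The only small care is to remember that $W_{xy}$ and $W_{yx}$ partition $V(H')$ (requiring bipartiteness), which is why one must note explicitly that $H'$ is bipartite before invoking the partition argument.
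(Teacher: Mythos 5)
Your proof is correct and matches the paper's intent: the paper states this lemma "follows directly from the definition of $\theta'$ and $\tau$" and omits the verification, and your distance table in $H'$ is exactly that verification written out. All six distances check out, and the observation that both endpoints of each child of $f$ land in $W_{u_e m_e}$ (resp.\ $W_{v_e m_e}$) immediately rules out the \djokoRelated relation.
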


\begin{theorem}
\label{theorem_subdiv}
A cut of $H$ with cut-set $C$ is convex if and only if for all $e, f
\in C$ it holds either $e~\tau~f$ or that there exists a child $e'$ of
$e$ and a child $f'$ of $f$ such that $e'~\theta'~f'$.
\end{theorem}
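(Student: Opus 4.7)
The plan is to reduce both directions to the bipartite subdivision $H'$, where Lemma~\ref{lemma-pairwise-djoko} applies, and exploit a set of explicit distance formulas. For original $x,y \in V(H)$ we have $d_{H'}(x,y) = 2\,d_H(x,y)$; for an original vertex $w$ and a midpoint $m_e$,
\[
d_{H'}(m_e, w) = 1 + 2\min\bigl(d_H(u_e, w),\, d_H(v_e, w)\bigr);
\]
and $d_{H'}(m_e, m_f) = 2 + 2\min(a, b, c, d)$, where I abbreviate $a = d_H(u_e, u_f)$, $b = d_H(v_e, v_f)$, $c = d_H(u_e, v_f)$, $d = d_H(v_e, u_f)$. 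Since $d_{H'}(w, u_e)$ is even and $d_{H'}(w, m_e)$ is odd, a short parity argument yields clean membership tests: an original $w$ lies in $W^{H'}_{u_e, m_e}$ iff $d_H(w, u_e) \le d_H(w, v_e)$, and $m_f$ lies in $W^{H'}_{u_e, m_e}$ iff $\min(a, c) \le \min(b, d)$. Thus $\theta'$-relatedness of any specific child pair becomes a small arithmetic test on $(a, b, c, d)$.

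For the $(\Leftarrow)$ direction I argue by contrapositive. Suppose $(V_1, V_2)$ is not convex, so some shortest $u$-$v$ path $P$ in $H$ with $u,v \in V_1$ visits $V_2$. Let $e = \{u_e, v_e\}$ be the first cut-crossing on $P$ and $f = \{u_f, v_f\}$ the first return crossing (so $u_e, v_f \in V_1$ and $v_e, u_f \in V_2$); let $p$ be the length of the $V_2$-segment of $P$ between $v_e$ and $u_f$. Since subpaths of shortest paths are shortest, $a = b = p + 1$, $d = p$, $c = p + 2$; in particular $e \not\tau f$. Substituting into the membership tests shows that $u_f$, $v_f$, and $m_f$ all lie in $W^{H'}_{m_e, u_e}$ and all lie in $W^{H'}_{m_e, v_e}$, so neither child of $e$ is $\theta'$-related to any child of $f$, contradicting the hypothesis.

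For the $(\Rightarrow)$ direction I fix a convex cut $(V_1, V_2)$ and $e, f \in C$, treating the two orientations of $f$ symmetrically. In the orientation where $u_e, u_f \in V_1$ and $v_e, v_f \in V_2$, convexity is used twice: single-crossing paths through $e$ or $f$ give $c, d \le \min(a, b) + 1$, while the absence of internal shortcuts through the opposite semicube forces $c, d \ge \max(a, b)$ (otherwise, e.g.\ $c = a - 1$ would make $u_e \to v_f \to u_f$ a shortest $u_e$-$u_f$ path leaving $V_1$). These bounds pin $|a - b| \le 1$ and restrict $(a, b, c, d)$ to a handful of configurations. If all four values coincide, then $e \tau f$; otherwise a short case-by-case check verifies that the arithmetic test $\min(\cdot,\cdot) \le \min(\cdot,\cdot)$ is met by at least one of the four child pairs, yielding a $\theta'$-relation. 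The opposite orientation of $f$ is handled by the analogous argument with the roles of $(a,b)$ and $(c,d)$ swapped.

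The main obstacle is the $(\Rightarrow)$ direction: establishing the sharp lower bounds $c, d \ge \max(a, b)$ from convexity, and then mechanically checking that each surviving configuration passes one of the eight arithmetic tests. The key observation is that convexity excludes precisely the ``bad'' configuration surfaced in the $(\Leftarrow)$ direction (and its orientational variants), in which every arithmetic test would fail; every remaining configuration is handled by a small and routine case split.
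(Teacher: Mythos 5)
Your proof is correct, and at its core it runs on the same engine as the paper's: everything is decided by the four distances $a,b,c,d$ between the endpoints of $e$ and $f$, and non-convexity is detected by a shortest path that crosses the cut twice. The packaging is genuinely different, though. The paper first proves an auxiliary equivalence (Lemma~\ref{lem:aux-equiv}) between the theorem's condition and the existence of suitably oriented children $\{a',b'\}$, $\{c',d'\}$ with $d_{H'}(a',c')=d_{H'}(b',d')$, and then in each direction of the theorem exhibits or exploits such children via a somewhat ad hoc case analysis. You instead write down explicit formulas for all $H'$-distances involving midpoints and use parity to convert every $\theta'$-membership question into an inequality among $a,b,c,d$. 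This lets you (i) prove sufficiency by contrapositive, identifying the single bad configuration $(a,b,c,d)=(p+1,p+1,p+2,p)$ at the first exit/re-entry pair and checking that every child-pair test fails there, and (ii) prove necessity by sandwiching $c,d$ between $\max(a,b)$ and $\min(a,b)+1$ --- the lower bound being where convexity of \emph{both} sides enters, a step the paper makes only implicitly inside Lemma~\ref{lem:aux-equiv} --- and then finitely enumerating the surviving configurations (I checked them; each passes). Your route is more mechanical and arguably easier to verify; the paper's is shorter on symbols but hides the same arithmetic inside its auxiliary lemma. One small slip to fix: in the $(\Leftarrow)$ direction you assert that $u_f$, $v_f$, $m_f$ all lie in $W_{m_e,v_e}$; with your own formulas they in fact all lie in $W_{v_e,m_e}$ (e.g.\ $d_{H'}(u_f,v_e)=2p<2p+1=d_{H'}(u_f,m_e)$). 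The conclusion is unaffected, since all that matters is that the three vertices sit on one side of the cut determined by the child $\{v_e,m_e\}$, but the subscript order should be corrected.
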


To simplify the proof of Theorem~\ref{theorem_subdiv}, we first
establish the following result.
\begin{lemma}
\label{lem:aux-equiv}
Let $e = \{u_e, v_e\}$ and $f = \{u_f, v_f\}$ be edges of $H$. Then
the following is equivalent.
\begin{itemize}
\item[(i)] There exists a child $\{a',b'\}$ of $e$ with $a'$ closer to
  $u_e$ than $b'$ and a child $\{c',d'\}$ of $f=\{u_f, v_f\}$ with
  $c'$ closer to $u_f$ than $d'$ such that $d_{H'}(a', c') =
  d_{H'}(b', d')$.
\item[(ii)] $e~\tau~f$ or there exists a child $e'$ of $e$ and a child
  $f'$ of $f$ with $e'~\theta'~f'$.
\end{itemize}
\end{lemma}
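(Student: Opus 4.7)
The plan is to translate both conditions into explicit arithmetic statements about the four $H$-distances
\[
A := d_H(u_e, u_f),\quad B := d_H(u_e, v_f),\quad C := d_H(v_e, u_f),\quad D := d_H(v_e, v_f),
\]
and then verify the equivalence by case analysis. First I would record the three distance formulas in $H'$ that follow directly from the subdivision: (a) $d_{H'}(x,y) = 2\,d_H(x,y)$ for originals $x,y$; (b) $d_{H'}(w, m_g) = 1 + 2\min(d_H(w,u_g), d_H(w,v_g))$ for an original $w$ and the midpoint $m_g$ of an edge $g=\{u_g,v_g\}$; and (c) $d_{H'}(m_g, m_h) = 2 + 2\min_{p \in g,\, q \in h} d_H(p,q)$. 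I would also note the edge constraints $|A-B|, |C-D|, |A-C|, |B-D| \in \{0,1\}$, which severely restrict the possible configurations of $(A,B,C,D)$.

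Next I would enumerate the four sub-cases of (i), one per ordered pair of children. The labeling condition forces $(a',b') \in \{(u_e, m_e),(m_e, v_e)\}$ and $(c',d') \in \{(u_f, m_f),(m_f, v_f)\}$. Plugging into the distance formulas, the equation $d_{H'}(a',c') = d_{H'}(b',d')$ becomes one of the four explicit identities
\[
A = 1 + \min(A,B,C,D),\quad \min(A,B) = \min(B,D),\quad \min(A,C) = \min(C,D),\quad D = 1 + \min(A,B,C,D).
\]

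For direction (ii)$\Rightarrow$(i): if $e\,\tau\,f$ then $A = B = C = D$, so the `mixed' identity $\min(A,B) = \min(B,D)$ holds trivially. If a child $e'$ of $e$ is $\theta'$-related to a child $f'$ of $f$ in the bipartite graph $H'$, I would invoke the standard bipartite fact that $\theta'$-related edges $\{a',b'\}, \{c',d'\}$ (with $a',c'$ sitting on the same side of each other's cut) satisfy $d_{H'}(a',c') = d_{H'}(b',d')$ and $d_{H'}(a',d') = d_{H'}(b',c')$. Checking that the labeling ``$a'$ closer to $u_e$, $c'$ closer to $u_f$'' matches the same-side pairing then produces exactly the identity of the sub-case corresponding to $(e', f')$.

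For direction (i)$\Rightarrow$(ii): I would express $\theta'$-membership directly in terms of $A,B,C,D$ using the $W$-sets of each child edge. From the distance formulas one gets, e.g., $u_f \in W_{u_e m_e}$ iff $A \le C$ and $m_f \in W_{u_e m_e}$ iff $\min(A,B) \le \min(C,D)$, with analogous (and sometimes strictly/non-strictly differing) conditions for the other three children $\{m_e,v_e\}, \{u_f,m_f\}, \{m_f,v_f\}$. For each of the four identities of (i) I would then use the edge constraints to list the finitely many integer matrices $\bigl(\begin{smallmatrix}A & B\\ C & D\end{smallmatrix}\bigr)$ consistent with it, and in each configuration either recognize $A = B = C = D$ (so $\tau$) or verify that the two endpoints of some child of $f$ land in opposite $W$-sets with respect to some child of $e$ (so the corresponding $\theta'$).

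The main obstacle is precisely this last enumeration: Cases A and D involve a four-way minimum and the $W$-membership conditions for midpoints versus originals flip between $\le$ and $<$, so one must handle strict/non-strict transitions carefully. The saving grace is that the constraints $|A-B|, |A-C|, |B-D|, |C-D| \le 1$ leave only a handful of `shapes' of the matrix modulo additive shift, and the symmetries $u_e \leftrightarrow v_e$, $u_f \leftrightarrow v_f$, $e \leftrightarrow f$ permute Cases A/D and Cases B/C, cutting the bookkeeping roughly by a factor of four.
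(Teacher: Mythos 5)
Your setup is correct, and your overall strategy---reduce everything to the four distances $A,B,C,D$ subject to the constraint that the ``adjacent'' pairs differ by at most one, then perform a finite case analysis---is in essence the same as the paper's proof, which likewise distinguishes cases according to whether $d_H(u_e,v_f)$ and $d_H(v_e,u_f)$ equal $d_H(u_e,u_f)$ (under the normalization $d_H(u_e,u_f)\ge d_H(v_e,v_f)$) and argues via explicit shortest paths through the subdivision vertices $w_e'$, $w_f'$. Your four identities for the sub-cases of (i) are correctly derived from the subdivision distance formulas, as are the $W$-set membership criteria, and the bipartite fact you invoke for (ii)$\Rightarrow$(i) does hold: for $\theta'$-related edges in a bipartite graph \emph{both} cross-pairings yield equal distances, so the labeling forced by ``closer to $u_e$'' and ``closer to $u_f$'' is harmless. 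What your arithmetization buys is a fully mechanical verification; what the paper's organization buys is that each case directly exhibits the specific $\theta'$-related pair of children. The one point you must not shortcut when executing the (i)$\Rightarrow$(ii) enumeration---which your proposal leaves as a plan---is that the pair of children witnessing (i) need not itself be the $\theta'$-related pair: for instance $A=2$, $B=C=D=1$ satisfies $\min(A,B)=\min(B,D)$, yet the witnessing pair $\{u_e,w_e'\},\{w_f',v_f\}$ is not $\theta'$-related (both endpoints of $\{w_f',v_f\}$ lie in $W_{u_e w_e'}$), and the relation holds instead for $\{u_e,w_e'\}$ and $\{u_f,w_f'\}$. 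So in each configuration you must scan all four child pairs, as you indicate. With that caveat the enumeration is finite (only a bounded number of difference patterns survive the adjacency constraints) and goes through, so the plan is sound.
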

\begin{proof}
We denote by $w_e'$ [$w_f'$] the vertex of $H'$ that subdivides $e$
[$f$]. Without loss of generality we assume that $d_H(u_e, u_f) \geq
d_H(v_e, v_f)$ (see Figures~\ref{fig:notation-children}a,b).

To prove ''$(i) \Rightarrow (ii)$``, let $d_{H'}(a', c') = d_{H'}(b',
d')$.
\begin{itemize}
\item We first assume $d_H(u_e, v_f) \neq d_H(u_e, u_f)$ and $d_H(v_e,
  u_f) \neq d_H(u_e, u_f)$ (see
  Figure~\ref{fig:notation-children}a). Then our assumption $d_H(u_e,
  v_f) \geq d_H(u_e, u_f)$, in conjunction with the fact that $w_e'$
  and $w_f'$ both have degree two, imply that there exists a shortest
  path from $w_e'$ via $v_e$ and $v_f$ to $w_f'$. The equality
  $d_{H'}(a', c') = d_{H'}(b', d')$ yields $d_H(u_e, u_f) = d_H(v_e,
  v_f)$. Indeed, $d_H(u_e, u_f) > d_H(v_e, v_f)$ would mean that there
  exists a shortest path from $u_e$ via $v_e$ to $u_f$, a
  contradiction to $d_{H'}(a', c') = d_{H'}(b', d')$ (recall that $a'$
  is closer to $u_e$ than $b'$ and that $c'$ is closer to $u_f$ than
  $d'$).

  From $d_H(u_e, v_f) \neq d_H(u_e, u_f)$, $d_H(v_e, u_f) \neq
  d_H(u_e, u_f)$, and $d_{H'}(a', c') = d_{H'}(b', d')$ we conclude
  $a' = u_e$, $b' = w_e'$, $c' = w_f'$ and $d' = v_f$. In particular,
  $\{a', b'\}~\theta~\{c', d'\}$.
\item If $d_H(u_e, v_f) \neq d_H(u_e, u_f)$ and $d_H(v_e, u_f) =
  d_H(u_e, u_f)$ holds (see Figure~\ref{fig:notation-children}b),
  there must exist a shortest path from $v_e$ to $u_f$ via $v_f$ and
  $w_f'$ because otherwise the prerequisite $d_{H'}(a', c') =
  d_{H'}(b', d')$ would not hold. Then $d_H(v_e, v_f) = d_H(v_e, u_f)
  - 1 = d_H(u_e, u_f) -1$. Thus $d_{H'}(a', c') = d_{H'}(b', d')$
  implies $a'=u_e$, $b'=w_e'$, $c'=v_e$ and $d'=w_f'$. In particular,
  $\{a', b'\}~\theta~\{c', d'\}$.

\item If $d_H(u_e, v_f) \neq d_H(u_e, u_f)$ and $d_H(v_e, u_f) =
  d_H(u_e, u_f)$, then we can proceed as in the previous item and conclude
  that we cannot have $d_{H'}(a', c') = d_{H'}(b', d')$ or that
  $d_H(u_e, u_f) = d_H(v_e, v_f) -1$, a contradiction to our
  assumption $d_H(u_e, u_f) \geq d_H(v_e, v_f)$.

\item If $d_H(u_e, v_f) = d_H(u_e, u_f)$ and $d_H(v_e, u_f) = d_H(u_e,
  u_f)$, then $e~\tau~f$. Indeed, due to the definition of $\tau$, it
  suffices to show that $d_H(u_e, u_f) = d_H(v_e, v_f)$. We assume the
  opposite, \ie that $d_H(u_e, u_f) > d_H(v_e, v_f)$ (recall that we
  assume $d_H(u_e, u_f) \geq d_H(v_e, v_f)$). Thus, $d_H(u_e, u_f) >
  d_H(v_e, v_f) \geq d_H(v_e, u_f) - 1$, \ie $d_H(u_e, u_f) \geq
  d_H(v_e, u_f)$. If $d_H(u_e, u_f) = d_H(v_e, u_f)$, and thus
  $d_H(v_e, v_f) < d_H(u_e, u_f) = d_H(v_e, u_f) = d_H(u_e, v_f)$,
  there exists a shortest path from $v_e$ via $v_f$ to $u_f$, a
  contradiction to $d_{H'}(a', c') = d_{H'}(b', d')$. Otherwise,
  $d_H(u_e, u_f) > d_H(v_e, u_f) = d_H(u_e, u_f)$. If $d_H(v_e, v_f) <
  d_H(v_e, u_f)$, there exists a shortest path from $v_e$ via $v_f$ to
  $u_f$ (recall that $d_H(u_e, u_f) > d_H(v_e, v_f)$), a contradiction
  to $d_{H'}(a', c') = d_{H'}(b', d')$. The case $d_H(v_e, v_f) \geq
  d_H(v_e, u_f)$ does not occur since, due to $d_H(v_e, u_f) =
  d_H(u_e, v_f)$, we would get $d_H(v_e, v_f) \geq d_H(u_e, u_f)$.
\end{itemize}

To prove ''$(ii) \Rightarrow (i)$``, assume that $e'~\theta~f'$ for a
child $e'$ of $e$ and a child $f'$ of $f$. Then the end vertices of
$a'$, $b'$ of $e'$ and $c'$, $d'$ of $f'$ fulfill $d_{H'}(a', c') =
d_{H'}(b', d')$ by definition of $\theta'$. If $e~\tau~f$, then we set
$a' = u_e$, $b' = w_e'$, $c' = w_f'$, and $d' = v_f$ as in
Figure~\ref{fig:notation-children}a.

The ''either`` in the claim follows from
Lemma~\ref{lemma:theta_tau}.
\end{proof}

\begin{figure}[bt]
\begin{center}
\includegraphics[height=0.25\columnwidth]{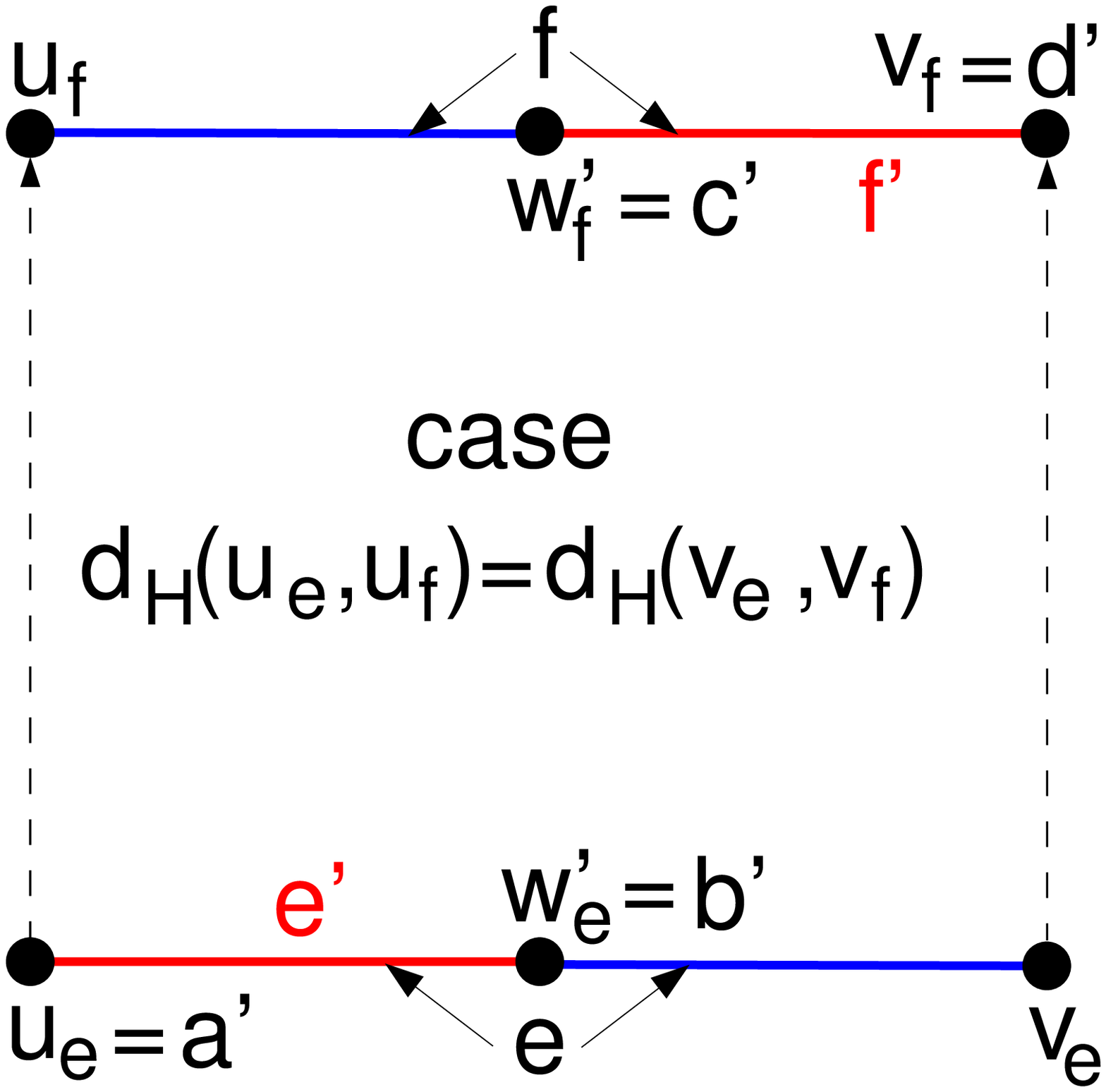}\qquad
\includegraphics[height=0.25\columnwidth]{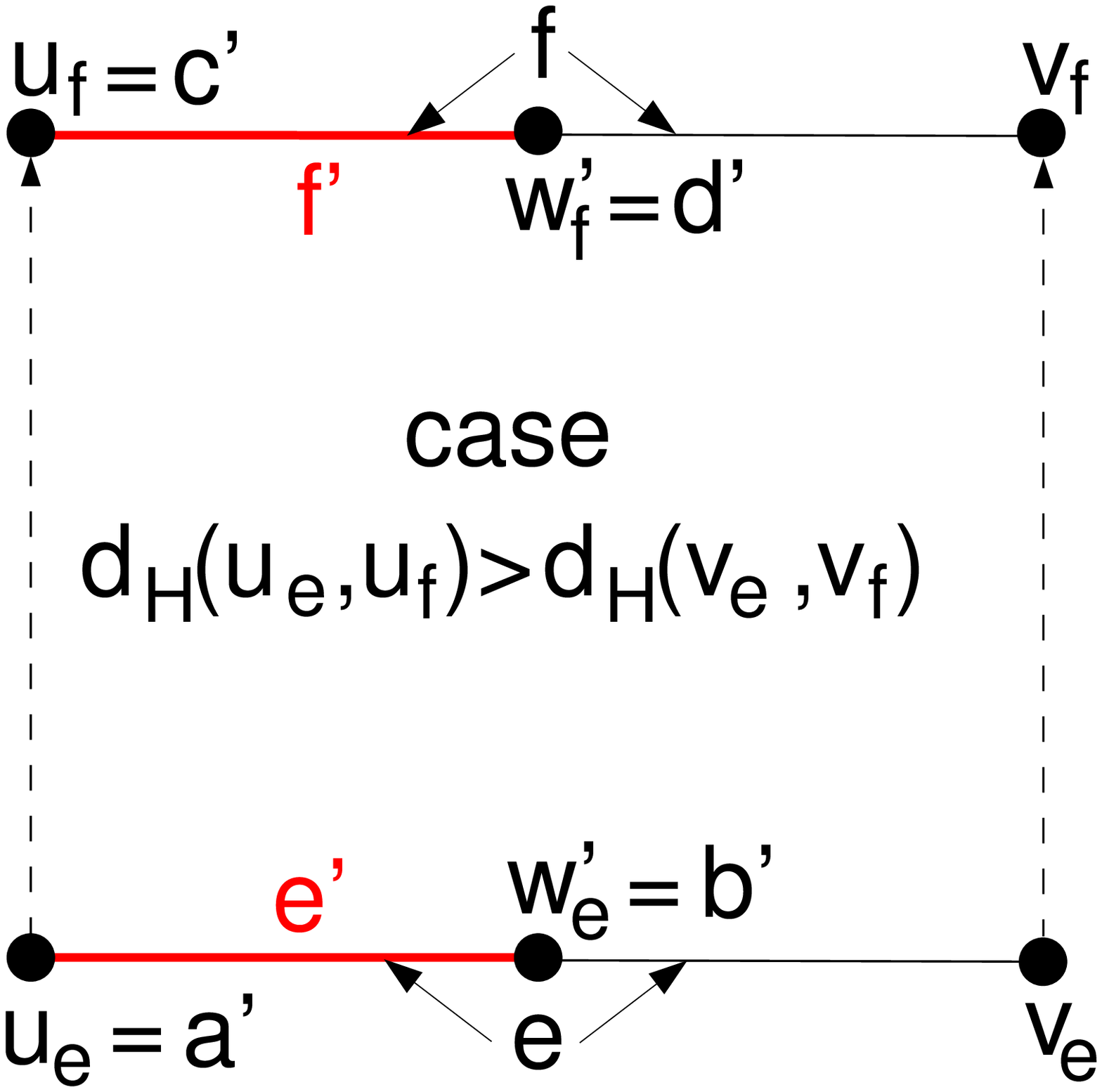}
\caption{\label{fig:notation-children} Illustrations to proof of
  Theorem~\ref{theorem_subdiv}. (a) If $e~\tau~f$ does not hold, at
  least one child of $e$ is $\theta'$-related to a child of $f$, \eg
  $e'~\theta'~f'$. (b) $e~\tau~f$ does not hold, and exactly one child
  of $e$ is related to a child of $f$ (here $e'~\theta'~f'$).}
\end{center}
\end{figure}

\noindent{\em Proof of Theorem~\ref{theorem_subdiv}}.\\
To prove necessity, let $C$ be the cut-set of a {\em convex} cut that
partitions $V$ into $V_1$ and $V_2$, and let $e,f \in C$ (see
Figure~\ref{fig:notation-children}). Thanks to
Lemma~\ref{lem:aux-equiv} it suffices to find a child $\{a',b'\}$ of
$e$ and a child $\{c',d'\}$ of $f$ such that $d_{H'}(a', c') =
d_{H'}(b', d')$.  Let $w_e'$ [$w_f'$] denote the vertex of $H'$ that
subdivides $e$ [$f$]. Without loss of generality we assume $u_e, u_f
\in V_1$ and $v_e, v_f \in V_2$. Since $C$ is the cut-set of a convex
cut, we know that $d_H(u_e, u_f)$ and $d_H(v_e, v_f)$ differ by at
most one.
\begin{enumerate}
\item If $d_H(v_e, v_f) = d_H(u_e,
u_f)$, let $\{a',b'\} = \{u_e, w_e'\}$ and $\{c',d'\} = \{w_f', v_f\}$
(this is the case illustrated in Figure~\ref{fig:notation-children}a).
Then, due to the degrees of $w_e'$ and $w_f'$ being two, $d_{H'}(a',
c') = d_{H'}(u_e, w_f')=d_{H'}(w_e', v_f) = d_{H'}(b', d')$.
\item If $d_H(u_e, u_f)$ and $d_H(v_e, v_f)$ differ by exactly one, we may
assume without loss of generality that $d_H(v_e, v_f) = d_H(u_e, u_f)
+1$.  Set $\{a',b'\} = \{w_e', v_e\}$ and $\{c',d'\} = \{w_f',
v_f\}$. Then, due to the degrees of $w_e'$ and $w_f'$ being two,
$d_{H'}(a', c') =d_{H'}(w_e', w_f') = d_{H'}(v_e, v_f) = d_{H'}(b',
d')$.
\end{enumerate}


Conversely, to prove sufficiency, let $C$ be the cut-set of a cut that
partitions $V$ into $V_1$ and $V_2$. We distinguish the two cases
of the prerequisite.

\begin{itemize}
\item Case 1 ($e~\tau~f$): Then we have $d_{H}(u_e, u_f) = d_{H}(v_e, v_f)$ by
  definition of $\tau$.
\item Case 2 (there exists a child $e'$ of $e$
  and a child $f'$ of $f$ such that $e'~\theta'~f'$): As above we
  assume without loss of generality that $u_e, u_f \in V_1$ and $v_e,
  v_f \in V_2$.  There are four possibilities for the positions of
  $e'$ and $f'$ within $e$ and $f$, only two of which need to be
  considered due to symmetry.

\begin{enumerate}
\item $e'=\{u_e, w_e'\}$ and $f'=\{w_f', v_f\}$. Since the degrees of
  $w_e'$ and $w_f'$ are two, and since $e'~\theta'~f'$, any shortest
  path from $u_e$ to $w_f'$ runs via $u_f$, and any shortest path from
  $w_e'$ to $v_f$ runs via $v_e$. Hence, $d_{H'}(u_e, u_f)=d_{H'}(u_e,
  w_f')-1 = d_{H'}(w_e', v_f)-1 = d_{H'}(v_e, v_f)$.
\item $e'=\{u_e, w_e'\}$ and $f'=\{u_f, w_f'\}$. In this case
  $d_{H'}(u_e, u_f) = d_{H'}(w_e', w_f') = d_{H'}(v_e, v_f) \pm 2$.
\end{enumerate}
\end{itemize}

To summarize Case 1 and Case 2, we always have $d_{H'}(u_e, u_f) =
d_{H'}(v_e, v_f) \pm 2$.

Due to $d_{H'}(u, v) = 2 d_H(u, v)$ for all vertices $u, v$ of $H$, we
have that $d_{H}(u_e, u_f) = d_{H}(v_e, v_f) \pm 1$ for all $e=\{u_e,
v_e\}, f=\{u_f, v_f\}$ in the cut-set $C$. Hence, any shortest path
with end vertices in $V_1$ [$V_2$] stays within $V_1$ [$V_2$], \ie $C$
is the cut-set of a convex cut.

%

%
%
%
\section{Convex cuts of plane graphs}
\label{sec:general_plane}
In this section $G=(V,E)$ is a plane graph with the restrictions
formulated in Section~\ref{sec:prelim}. Recall that the restrictions
are not essential for finding convex cuts.


We search for cut-sets of convex cuts of $G$ by brachiating from an
edge $e_0$ of $G$ via a bounded face $F_0$ of $G$, \ie $e_0 \in
E(F_0)$, to an edge $e_1$ on $E(F_0) \cap E(F_1)$ for some bounded
face $F_1$ of $G$, and so on. Theorem~\ref{theorem_subdiv} in this
paper and Lemma 2 in~\cite{Chepoi97a} restrict and thus guide the
brachiating. The latter lemma says that for the cut-set $C$ of any
convex cut and any bounded face $F$ we have that $\vert C \cap E(F)
\vert$ equals zero or two. Our approach to finding (cut-sets of)
convex cuts through brachiating suggests the following notation.

\begin{notation}
\label{not:C}
$C$ denotes a cut-set of a cut of $G$ and is written as a non-cyclic
or cyclic (simple cycle) sequence $(e_0, \dots, e_{\vert C \vert -
  1})$. If $C$ is non-cyclic, there exist bounded faces $F_0, \dots,
F_{\vert C \vert-2}$ of $G$ such that $e_{i-1} \in E(F_{i-1}) \cap
E(F_i)$. If $C$ is cyclic there exist bounded faces $F_0, \dots,
F_{\vert C \vert-1}$ such that $e_{i-1} \in E(F_{i-1}) \cap E(F_i)$,
and indices are modulo $\vert C \vert$.
\end{notation}

In particular, $C \cap E(F_{\infty}) = \{e_0, e_{\vert C \vert-1}\}$
for non-cyclic $C$ and $C \cap E(F_{\infty}) = \emptyset$ for cyclic
$C$.

Analogous to Section~\ref{sec:general}, $G'=(V',E')$ denotes the
(plane bipartite) graph that one obtains from $G$ by placing a new
vertex into the interior of each edge of $G$.

\begin{definition}[$e_0^l$, $e_0^r$, $C_l'$, $C_r'$, $C_l$, $C_r$, $C_{\tau}$]
\label{def:equiv-cuts}
The left [right] child of $e_0$ when standing on $e_0$ and looking
into $F_0$ is denoted by $e_0^l$ [$e_0^r$]. Furthermore, $C_l'$
[$C_r'$] denotes the set of edges in $E'$ that are $\theta'$-related
to $e_0^l$ [$e_0^r$]. Recall that $C_l'$ and $C_r'$ are cut-sets of
cuts of $G'$. Thus, they induce cut-sets of $G$ denoted by $C_l$ and
$C_r$. Generally ''left`` and ''right`` \wrt an edge $e_i$ is from the
perspective of standing on $e_i$ and looking into $F_i$. Finally,
$C_{\tau} = \{e \in E \mid e_0~\tau~e\}$.
\end{definition}

\subsection{Embedding of cuts}
\label{subsec:plane_embed}
In this section we first represent a cut of $G$ through $e_0$ with
cut-set $C$ by a simple path or simple cycle $\gamma(C)$ in the
\emph{line graph} (sometimes referred to as \emph{edge graph}) $L_G$
of $G$. We then embed the edges of $L_G$ that we need for representing
cuts. In particular, all $\gamma(C)$ turn into simple non-closed or
closed curves.

\begin{definition}[$L_G(V^L, E^L)$, cut $\gamma(C)$]
\label{def:LG}
$L_G=(V^L, E^L)$ denotes the line graph of $G$, \ie $V^L = E$. Using
Notation~\ref{not:C}, we define $\gamma(C)$ to be the path in $L_G$
whose edge set is

\begin{equation}
E^L(C) = \{\{e_{i-1}, e_i\}\}
\end{equation}
\end{definition}

If $C$ is non-cyclic [cyclic], $\gamma(C)$ is a maximal simple path
[simple cycle] in $L_G$.

An edge $\{e, \hat{e}\}$ of $L_G$ can be part of $\gamma(C)$ for some
$C$ only if there exists a face $F$ of $G$ such that $e, \hat{e} \in
E(F)$. To embed such an edge we proceed basically as in
Section~\ref{sub:embed}. Let $p$ and $\hat{p}$ denote the midpoints of
$e$ and $\hat{e}$, respectively. Furthermore, let $F^r$ denote a
regular polygon with the same number of sides as $F$, and let
$\hslash:\closu{F^r} \mapsto \closu{F}$ be a homeomorphism (recall
that $\closu{F} = F \cup E(F)$). We embed $\{e, \hat{e}\}$ as
$\hslash(L)$, where $L$ is the line segment between $\hslash^{-1}(p)$
and $\hslash^{-1}(\hat{p})$. Thus, the vertices of embedded
$\gamma(C)$ are all midpoints of edges of $G$, and embedded
$\gamma(C)$ is a curve that subdivides $D(G)$ into two connected
components (for $D(G)$ see Definition~\ref{def:domain}).

\subsection{Restrictive conditions for convex cuts}
\label{subsec:plane_necessary}
Any cut-set of a convex cut through $e_0$ must be contained in $C_l
\cup C_r \cup C_{\tau}$. This follows from
Theorem~\ref{theorem_subdiv}, \ie the fact that for any $e_i$ in $C$
it must hold that either $e_0~\tau~e_i$ or that there exists a child
of $e_0$ and a child of $e_i$ that are $\theta'$-related.

The next lemma tells us that, on a local level, we have to deal only
with $\theta'$ and not with $\tau$.

\begin{lemma}
\label{lemma:tau1}
If $e_{i-1}~\tau~e_i$, then $C$ cannot be the cut-set of a convex cut.
\end{lemma}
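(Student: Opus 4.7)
I argue by contradiction. Assume that $C$ is the cut-set of a convex cut with sides $V_1, V_2$.

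\emph{Setup.} Write $e_{i-1}=\{u,v\}$ and $e_i=\{x,y\}$. By Lemma~2 of~\cite{Chepoi97a}, $|C\cap E(F_i)|=2$, so the boundary of the bounded face $F_i$ is split into two open arcs $P_A, P_B$ at the edges $e_{i-1}, e_i$. Without loss of generality, $u,y\in V_1$ and $v,x\in V_2$, with $P_A$ a $(v,x)$-path lying in $V_2$ and $P_B$ a $(y,u)$-path lying in $V_1$.

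\emph{Distance consequences of $\tau$ and convexity.} From $e_{i-1}~\tau~e_i$ one obtains $d_G(u,x)=d_G(u,y)=d_G(v,x)=d_G(v,y)=:d$. Convexity of $V_1, V_2$ implies shortest $(u,y)$- and $(v,x)$-paths inside the respective sides, hence $|P_A|,|P_B|\ge d$. Therefore every $(u,x)$-path contained in $E(F_i)$ has length at least $1+d$, whereas $d_G(u,x)=d$. Consequently, any shortest $(u,x)$-path $P$ must leave $E(F_i)$, and the first cut edge on $P$, call it $e^*=\{a^*,b^*\}$ with $a^*\in V_1, b^*\in V_2$, satisfies $e^*\notin\{e_{i-1},e_i\}$.

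\emph{Pinning further distances.} Set $k:=d_G(u,a^*)$; then convexity of $V_2$ yields $d_G(b^*,x)=d-k-1$. Applying Theorem~\ref{theorem_subdiv} to the pairs $(e^*,e_i)$ and $(e^*,e_{i-1})$, together with the $\tau$-equalities and the triangle inequality, pins down $d_G(a^*,y)=d-k$ and $d_G(b^*,v)=k+1$. Hence $a^*$ lies on some shortest $(u,y)$-path $Q_1\subset V_1$ and $b^*$ on some shortest $(v,x)$-path $Q_2\subset V_2$.

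\emph{Main obstacle.} The decisive step is turning this parallel-paths picture into a contradiction using planarity of $G$. My plan is to select $Q_1, Q_2$ so that the closed curve $\gamma=e_{i-1}\cup Q_1\cup e_i\cup Q_2^{\mathrm{rev}}$ encloses the interior of $F_i$ in the plane --- in the tight case $|P_A|=|P_B|=d$ one may take $Q_1=P_B$ and $Q_2=P_A$, so that $\gamma$ coincides with the boundary of $F_i$. The Jordan curve theorem then forces the chord $e^*$ to cross $\gamma$: inside $F_i$ this is impossible because $F_i$ has no interior edges, while outside $F_i$ the crossing must occur on $Q_1$ or $Q_2$, contradicting either planarity of $G$ or the convexity-based separation $V_1\mid V_2$ along $Q_1, Q_2$. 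The delicate part, and where I expect most technical effort, is the case $\max(|P_A|,|P_B|)>d$, where $Q_1, Q_2$ need not run along $\partial F_i$ and one has to argue that the shortest paths through $a^*, b^*$ can nevertheless be chosen to enclose $F_i$ (or, failing that, that the resulting chord forces a different planarity violation).
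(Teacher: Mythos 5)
Your preparatory steps are sound: the two monochromatic boundary arcs, the common distance $d$ between all four endpoint pairs forced by $\tau$, the fact that a shortest $(u,x)$-path crosses the cut at exactly one edge $e^*\notin\{e_{i-1},e_i\}$, and the pinning of $d_G(a^*,y)=d-k$ and $d_G(b^*,v)=k+1$ via Theorem~\ref{theorem_subdiv} all check out. The genuine gap is exactly where you flag it: the contradiction is never derived, and the mechanism you propose cannot deliver it. The edge $e^*$ joins a point $a^*$ of $Q_1$ to a point $b^*$ of $Q_2$, i.e.\ two points lying \emph{on} the closed curve $\gamma$; the Jordan curve theorem does not force such a chord to cross $\gamma$ --- it may simply be drawn in the complement of the region bounded by $\gamma$, and a plane graph happily accommodates an edge between two boundary vertices of a face drawn outside that face (think of a hexagonal face plus one long chord routed around the outside). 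Moreover, even in your ``tight'' case you cannot set $Q_1=P_B$: you only know that $a^*$ lies on \emph{some} shortest $(u,y)$-path, not on the boundary arc itself. So neither branch of your case distinction closes, and the configuration $(Q_1,Q_2,e^*)$ by itself is not obviously contradictory --- the distances you derived are all mutually consistent.

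The paper reaches the contradiction with a different, shorter choice of paths: it takes the two \emph{cross-side} diagonals, i.e.\ a shortest $(u,x)$-path and a shortest $(v,y)$-path (in your labels). Their endpoint pairs interleave in the cyclic order $u,v,\dots,x,y,\dots$ around the common face, and since the open face contains no part of $G$, planarity forces the two paths to share a vertex $w$. Setting $a=d_G(u,w)$, $b=d_G(w,x)$, $a'=d_G(v,w)$, $b'=d_G(w,y)$, one has $a+b=a'+b'=d$ while $d\le a+b'$ and $d\le a'+b$; summing gives equality throughout, hence $a=a'$ and $b=b'$. Therefore $w$ lies simultaneously on a shortest path between $u$ and $y$ (both in $V_1$) and on a shortest path between $v$ and $x$ (both in $V_2$); whichever side $w$ belongs to, one of these two shortest paths starts and ends on the opposite side yet passes through $w$, contradicting convexity. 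If you want to rescue your route, this crossing argument for paths with interleaved endpoints on the face boundary --- not a chord argument for $e^*$ --- is the missing ingredient.
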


\begin{proof}
Let $e_{i-1} = \{u_{i-1}, v_{i-1}\}$, $e_i = \{u_i, v_i\}$. Without
loss of generality we assume that $u_{i-1}$ is on the same side of the
convex cut as $u_i$ and that $v_{i-1}$ is on the same side of the
convex cut as $v_i$ (see Figure~\ref{fig:tau}a). Then
$e_{i-1}~\tau~e_i$ and $e_{i-1}, e_i \in E(F_{i-1})$ imply that any
shortest path from $u_{i-1}$ to $v_i$ intersects any shortest path
from $v_{i-1}$ to $u_i$ at a vertex that we denote by $w$. Without
loss of generality we may assume that $w$ is on the same side of the
convex cut as $u_{i-1}$. Due to $e_{i-1}~\tau~e_i$ we have
$d_G(w, u_i) = d_G(w, v_i)$. Thus, there exists a shortest path from
$v_i$ via $w$ to $v_{i-1}$ that starts and ends on the side of
$v_{i-1}$, but contains the vertex $w$, which is on the side of
$u_{i-1}$. Hence the cut cannot be convex.
\end{proof}

The observation below will lead to more restrictive conditions for
convex cuts.

\begin{observation}
\label{obs:plane}
The case distinction in the proof of Theorem~\ref{theorem_subdiv}
yields the following for plane graphs. If a child $e_i'$ of $e_i$ is
$\theta'$-related to a child $e_j'$ of $e_j$ with $j \neq i$, then
exactly one of the next two cases holds.

\begin{itemize}
\item[a)] $\theta'$ induces a one-to-one correspondence between the
  two children of $e_i$ and the two children of $e_j$ (see
  Figure~\ref{fig:notation-children}a). If $e_i'$ is the left
  [right] child of $e_i$, then $e_j'$ is the right [left] child
  of $e_j$.
\item[b)] The pair $e_i'$, $e_j'$ is the only pair of
  $\theta'$-related children (see
  Figure~\ref{fig:notation-children}b). In particular, $e_i'$
  and $e_j'$ must be on the same side of the cut, and there exists a
  shortest path from the end vertex of $e_i$ that is also the
  end vertex of $e_i'$ via $e_i$ to the end vertex of $e_j$
  that is not an end vertex of $e_j'$. If $e_i'$ is the left
  [right] child of $e_i$, then $e_j'$ is the left [right] child
  of $e_j$.
\end{itemize}
\end{observation}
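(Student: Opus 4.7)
The observation essentially restates, in the plane setting, the case split already carried out in the sufficiency direction of Theorem~\ref{theorem_subdiv}, so my plan is to read the two alternatives off that proof and then use the embedded cut curve from Section~\ref{subsec:plane_embed} to translate the conclusions into the left/right language of Definition~\ref{def:equiv-cuts}.

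First I would recall the subcases from the sufficiency argument of Theorem~\ref{theorem_subdiv}, applied to edges $e_i=\{u_i,v_i\}$ and $e_j=\{u_j,v_j\}$ of the cut-set $C$ with $u_i,u_j$ on one side of the cut and $v_i,v_j$ on the other. In Case~1 there, the $\theta'$-related children $e_i', e_j'$ sit on opposite halves of $e_i$ and $e_j$ (say $e_i'=\{u_i,w_i'\}$ and $e_j'=\{w_j',v_j\}$, where $w_i', w_j'$ are the subdivision vertices), and the chain $d_{H'}(u_i,u_j) = d_{H'}(u_i,w_j')-1 = d_{H'}(w_i',v_j)-1 = d_{H'}(v_i,v_j)$ established in that proof forces the complementary pair $\{w_i',v_i\}, \{u_j,w_j'\}$ to be $\theta'$-related as well; this yields the bijective pairing of case~(a). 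In Case~2, $e_i'$ and $e_j'$ occupy matching halves (say $e_i'=\{u_i,w_i'\}, e_j'=\{u_j,w_j'\}$), one computes $d_{H'}(u_i,u_j) = d_{H'}(w_i',w_j') = d_{H'}(v_i,v_j)\pm 2$, and a parity check then rules out any further $\theta'$-related pair of children; this is case~(b). The shortest path promised in~(b) is the geodesic from $u_i$ through $w_i'$ and $w_j'$ to $v_j$ witnessed by the distance identity above. The two cases are mutually exclusive because $d_{H'}(u_i,u_j)=d_{H'}(v_i,v_j)$ and $d_{H'}(u_i,u_j)=d_{H'}(v_i,v_j)\pm 2$ cannot coexist.

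For the left/right assertion I would invoke planarity through the curve $\gamma(C)$ of Section~\ref{subsec:plane_embed}. Since $\gamma(C)$ is a simple (possibly closed) plane curve that crosses each $e_i$ transversally at its midpoint, its two sides partition the halves of every $e_i$ in a globally consistent manner: walking along $\gamma(C)$, the face $F_i$ stays on one fixed side throughout, so the ``left'' half of $e_i$ in the sense of Definition~\ref{def:equiv-cuts} corresponds to the same side of $\gamma(C)$ for every $i$. Combining this with the placement of $e_i', e_j'$ derived in the previous step, case~(a) with its opposite-halves configuration yields left$\leftrightarrow$right, while case~(b) with its matching-halves configuration yields left$\leftrightarrow$left (and right$\leftrightarrow$right).

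The step I expect to be the main obstacle is the globality of the left/right convention: Definition~\ref{def:equiv-cuts} fixes ``left'' only locally via the face $F_i$, so one must verify that propagating this convention along $\gamma(C)$ is coherent for an arbitrary pair $e_i, e_j$, not just for consecutive indices. I would handle this by induction on the indices along $\gamma(C)$, using that each consecutive pair $e_{i-1}, e_i$ shares the face $F_{i-1}$ and hence inherits a compatible notion of left and right from the traversal direction of $\gamma(C)$; the extension to arbitrary pairs then follows by transitivity, together with the fact that $\gamma(C)$ is simple and therefore does not switch orientation.
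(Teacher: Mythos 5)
The paper never actually proves this observation---it is merely asserted to follow from the case distinction in the proof of Theorem~\ref{theorem_subdiv}---so your instinct to reconstruct that argument is the right one, and your handling of the left/right bookkeeping (all ``left'' children lie on one fixed side of the simple curve $\gamma(C)$, propagated face by face along consecutive indices) is fine. The genuine gap is in the step that is supposed to separate cases a) and b). You claim that in the ``opposite halves'' configuration the distance chain $d_{H'}(u_i,u_j)=d_{H'}(u_i,w_j')-1=d_{H'}(w_i',v_j)-1=d_{H'}(v_i,v_j)$ \emph{forces} the complementary pair $\{w_i',v_i\},\{u_j,w_j'\}$ to be $\theta'$-related, yielding the one-to-one correspondence of case a). That inference is false. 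Take $H=K_4$ minus the edge $\{u_e,v_f\}$, with $e=\{u_e,v_e\}$, $f=\{u_f,v_f\}$ and the convex cut $(\{u_e,u_f\},\{v_e,v_f\})$ whose cut-set is the sequence $(e,\{v_e,u_f\},f)$. Here $d_H(u_e,u_f)=d_H(v_e,v_f)=1$ and the whole distance chain holds, yet computing the $W$-sets in $H'$ shows that $\{u_e,w_e'\}\;\theta'\;\{w_f',v_f\}$ is the \emph{only} $\theta'$-related pair of children; the complementary pair $\{w_e',v_e\},\{u_f,w_f'\}$ is not related. So one does not land in case a), and one does not land in case b) either, since the unique related pair joins children on \emph{opposite} sides of the cut. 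The reason your derivation breaks is that the one-to-one correspondence in the proof of Lemma~\ref{lem:aux-equiv} is obtained only under the \emph{additional} hypotheses $d_H(u_e,v_f)\neq d_H(u_e,u_f)$ and $d_H(v_e,u_f)\neq d_H(u_e,u_f)$, which you silently drop; the equality $d_{H'}(u_i,u_j)=d_{H'}(v_i,v_j)$ alone does not pin down the relation pattern when a ``diagonal'' distance degenerates (here $d_H(v_e,u_f)=d_H(u_e,u_f)$).

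The companion step for case b)---``a parity check then rules out any further $\theta'$-related pair of children''---is likewise unsubstantiated: parity considerations do not by themselves exclude, say, $\{u_e,w_e'\}$ being related to both children of $f$. A correct argument has to run the full four-way case distinction on $d_H(u_e,v_f)$ and $d_H(v_e,u_f)$ relative to $d_H(u_e,u_f)$, exactly as in Lemma~\ref{lem:aux-equiv}; and the example above indicates that even then the observation as stated requires either an extra hypothesis or a third alternative (a single related pair of children on opposite sides of the cut). Your mutual-exclusivity argument (``$=$'' versus ``$\pm 2$'') separates the two distance regimes, but not the relation patterns that the observation actually asserts, which is where the real work lies.
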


All we know about the cut $(V_1, V_2)$ in the next lemma is that a
pair of edges has certain children that are $\theta'$ related. Still,
$(V_1, V_2)$ tells us that certain convex cuts cannot exist.

\begin{lemma}
\label{lemma:deg2}
Let $C$ be the cut-set of an embedded cut $(V_1, V_2)$ such that the
left [right] child of $e_i$ is $\theta'$-related to a child of $e_j$
for some $i, j$ with $j > i$. Then there exists no (embedded) convex
cut with $e_i$ in its cut-set that runs right [left] of $(V_1, V_2)$.
\end{lemma}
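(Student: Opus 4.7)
I argue by contradiction. Assume that $(V_1^\ast, V_2^\ast)$ is an embedded convex cut whose cut-set $C^\ast$ contains $e_i$ and whose embedded curve runs to the right of $(V_1, V_2)$. By Lemma~2 of \cite{Chepoi97a}, $|C^\ast \cap E(F_i)| = 2$; let $e^\ast$ denote the unique second edge, so $e^\ast \in E(F_i)$. The ``runs right'' hypothesis then places $e^\ast$ strictly to the right of $e_{i+1}$ on $\partial F_i$, when standing on $e_i$ and looking into $F_i$.

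First, I apply Theorem~\ref{theorem_subdiv} to the pair $e_i, e^\ast \in C^\ast$. Because $e_i$ and $e^\ast$ are consecutive in the brachiation of $C^\ast$ through $F_i$, Lemma~\ref{lemma:tau1} rules out $e_i\,\tau\,e^\ast$, so some child of $e_i$ must be $\theta'$-related to some child of $e^\ast$. The pivotal intermediate step is then to show, using Observation~\ref{obs:plane} combined with the planar layout inside $F_i$, that the related child of $e_i$ must be the \emph{right} child $e_i^r$ and not the left one. Intuitively, a shortest path in $G'$ realizing $\theta'$ between $e_i^l$ and a child of $e^\ast$ would have to leave $e_i$ on its left and reach $e^\ast$ on the right portion of $\partial F_i$, which by the Djokovi\'c characterisation of $\theta'$ (through the sets $W_{xy}$, $W_{yx}$) forces vertices of $F_i$ onto both sides of the path simultaneously; this contradicts the positional dichotomy recorded in the two cases of Observation~\ref{obs:plane}.

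With this intermediate step in hand, the contradiction follows by combining the two constraints at $e_i$. The standing hypothesis gives $e_i^l \,\theta'$-related to a child of $e_j$ with $e_j$ lying to the left along $(V_1, V_2)$; the step above gives $e_i^r \,\theta'$-related to a child of $e^\ast$ with $e^\ast$ lying to the right. In case (a) of Observation~\ref{obs:plane} applied to the pair $(e_i, e_j)$, both children of $e_i$ are forced into $\theta'$-relations with children of $e_j$; superimposing this with the right-child relation forced by $e^\ast$ yields shortest-path distance equations in $G'$ through the subdivision vertex $w_i$ of $e_i$ that cannot coexist in a planar embedding where $e_j$ and $e^\ast$ lie on opposite sides of $e_i$. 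In case (b) only $e_i^l$ can carry a $\theta'$-relation ``on the $e_j$-side'' (with the left child of $e_j$, same side of the induced cut), and the symmetric demand produced by the $e^\ast$-side is incompatible with the uniqueness statement of case (b). Either way we reach a contradiction, proving the lemma. The main obstacle I anticipate is the intermediate step: rigorously showing that $e_i^l$ cannot be the child implicated in the relation with a child of $e^\ast$, which requires a careful shortest-path and planarity argument rather than a purely combinatorial one.
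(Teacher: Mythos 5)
Your proposal does not close the argument, for two concrete reasons. First, the reduction to the single face $F_i$ is not justified: ``runs right of $(V_1,V_2)$'' only means that \emph{some} part of the embedded candidate cut lies to the right; the candidate may coincide with $C$ inside $F_i$ (so $e^\ast=e_{i+1}$ and nothing is ``strictly to the right'' there) and deviate only much later, possibly far from both $F_i$ and $e_j$. Your argument says nothing about that case. Second, and more fundamentally, the contradiction you aim for in the last paragraph does not exist at the combinatorial level on which you work: $\theta'$ is not transitive, and a child of $e_i$ can simultaneously be $\theta'$-related to children of many different edges. Observation~\ref{obs:plane} constrains, for one \emph{fixed} pair $(e_i,e_j)$, which of the four child pairs may be related; it imposes no restriction coupling the relation of $e_i$'s children to $e_j$ with their relation to a third edge $e^\ast$. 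Hence ``$e_i^l$ related to a child of $e_j$'' and ``$e_i^r$ related to a child of $e^\ast$'' can coexist, and neither case (a) nor case (b) yields the incompatibility you assert. Since convexity is a statement about shortest paths, a refutation must ultimately exhibit a shortest path crossing the candidate cut twice; your proposal never produces one.

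That witness path is exactly the missing idea, and it is how the paper argues: from $e_i^l\,\theta'\,e_j'$ one reads off, in each of the two cases of Observation~\ref{obs:plane}, that $d_{G'}(u_i,v_j)=d_{G'}(v_i,v_j)+2$, so there is a shortest path $P'$ in $G'$ from $u_i$ \emph{through} $e_i$ to $v_i$ and onward to $v_j$ (the case analysis only serves to rule out $d_{G'}(u_i,v_j)=d_{G'}(u_i,u_j)$, which would contradict $e_i^l\,\theta'\,e_j'$). This one path is crossed at $e_i$ by every cut through $e_i$, and is crossed a second time by any such cut that strays to the right of $(V_1,V_2)$, wherever the deviation occurs; hence no such cut is convex. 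I suggest you rebuild your proof around constructing $P'$ from the $\theta'$-hypothesis rather than around a clash of $\theta'$-relations at $e_i$.
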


\begin{proof}
Without loss of generality we assume that the left child of $e_i$,
denoted by $e^l_i$, is $\theta'$-related to a child of $e_j$, denoted
by $e'_j$. We denote the left and right end vertex of $e_i$ [$e_j$] by
$u_i$ and $v_i$ [$u_j$ and $v_j$], respectively. Due to
Observation~\ref{obs:plane}, one of the two following cases must hold.

\begin{enumerate}
\item $d_{G'}(u_i, u_j) = d_{G'}(v_i, v_j) =:k$, and $e'_j$ is the
  right child of $e_j$. This is the case illustrated in
  Figure~\ref{fig:tau}b. The shortest path from $u_i$ to $v_j$ in $G'$
  cannot be shorter than $k+2$, because this would entail $d_{G'}(u_i,
  v_j) = k$ and thus $v_j \in W_{u_i, v_i}$, a contradiction to
  $e^l_i~\theta'e'_j$. Hence, there exists a shortest path $P'$ in
  $G'$ from $u_i$ via $v_i$ to $v_j$ (with length $k+2$). A cut with
  $e_i$ in its cut-set that runs right of $(V_1, V_2)$ is crossed
  twice by $P'$. Hence the cut is not convex.
\item $d_{G'}(u_i, u_j) = d_{G'}(v_i, v_j) + 2$, and $e'_j$ is the
  left child of $e_j$. This is the case illustrated in
  Figure~\ref{fig:tau}c. From $e^l_i~\theta'~e'_j$ follows again that
  there exists a shortest path $P'$ in $G'$ from $u_i$ via $v_i$ to
  $v_j$ (with length $k+2$), and the claim follows as in the item
  above.
\end{enumerate}

\end{proof}

\begin{figure}
\begin{center}
(a) \includegraphics[height=0.23\columnwidth]{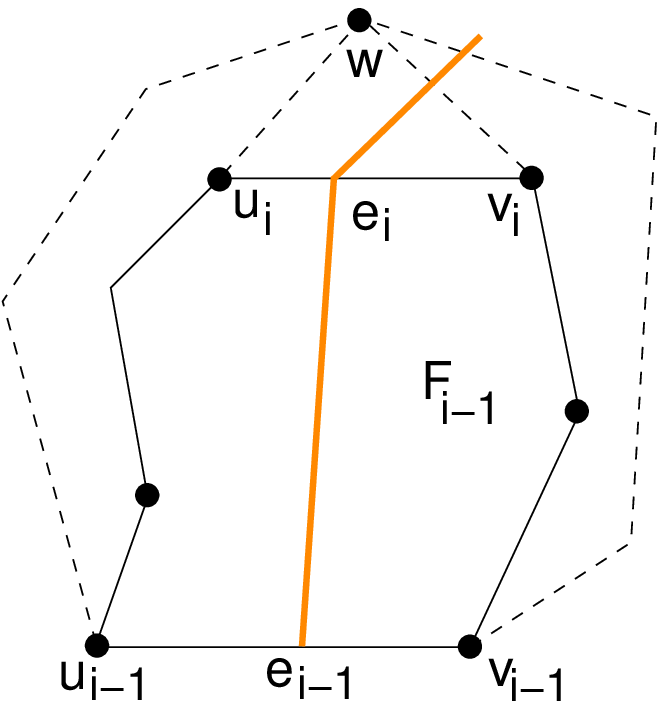}\qquad
(b) \includegraphics[height=0.20\columnwidth]{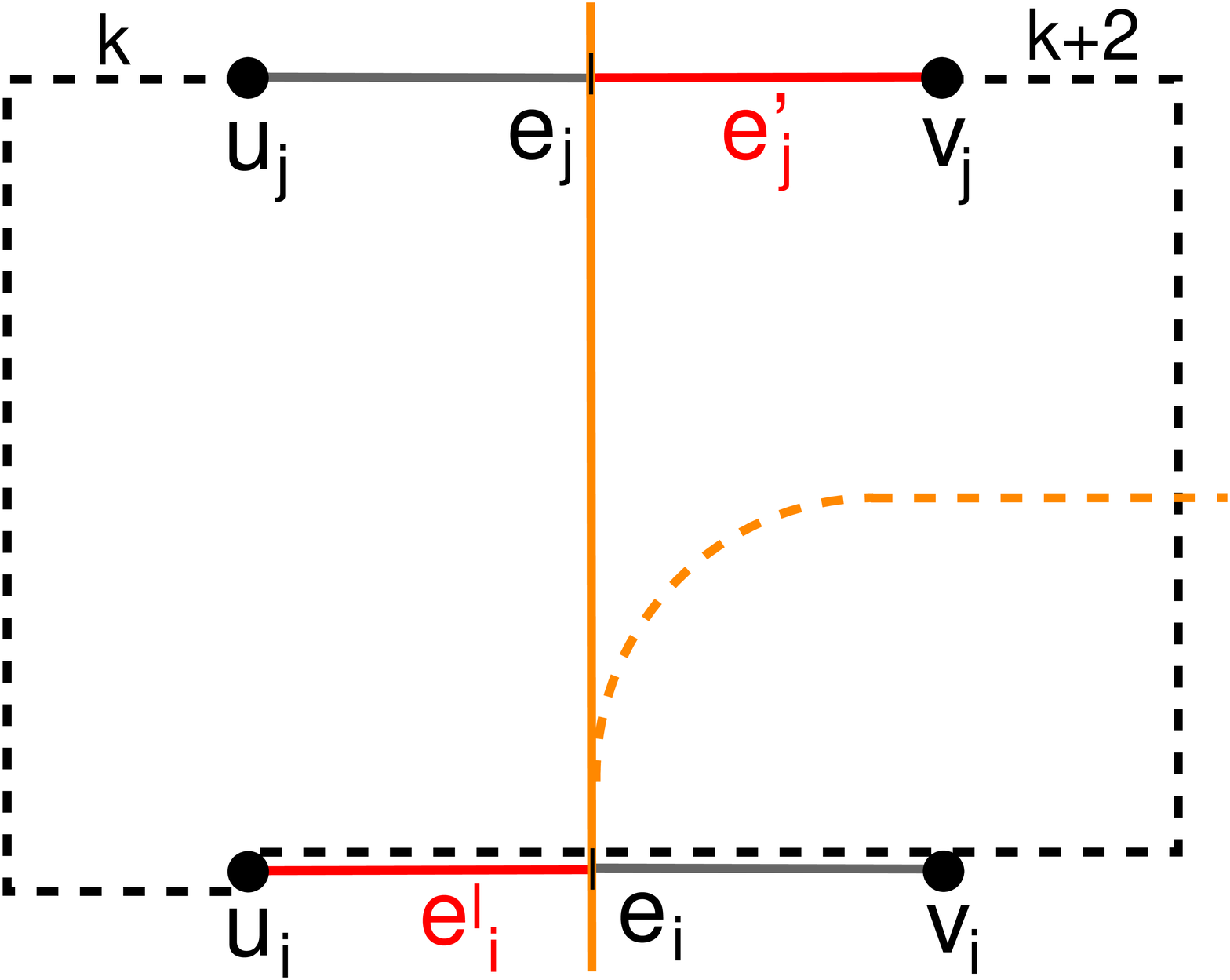}\qquad
(c) \includegraphics[height=0.20\columnwidth]{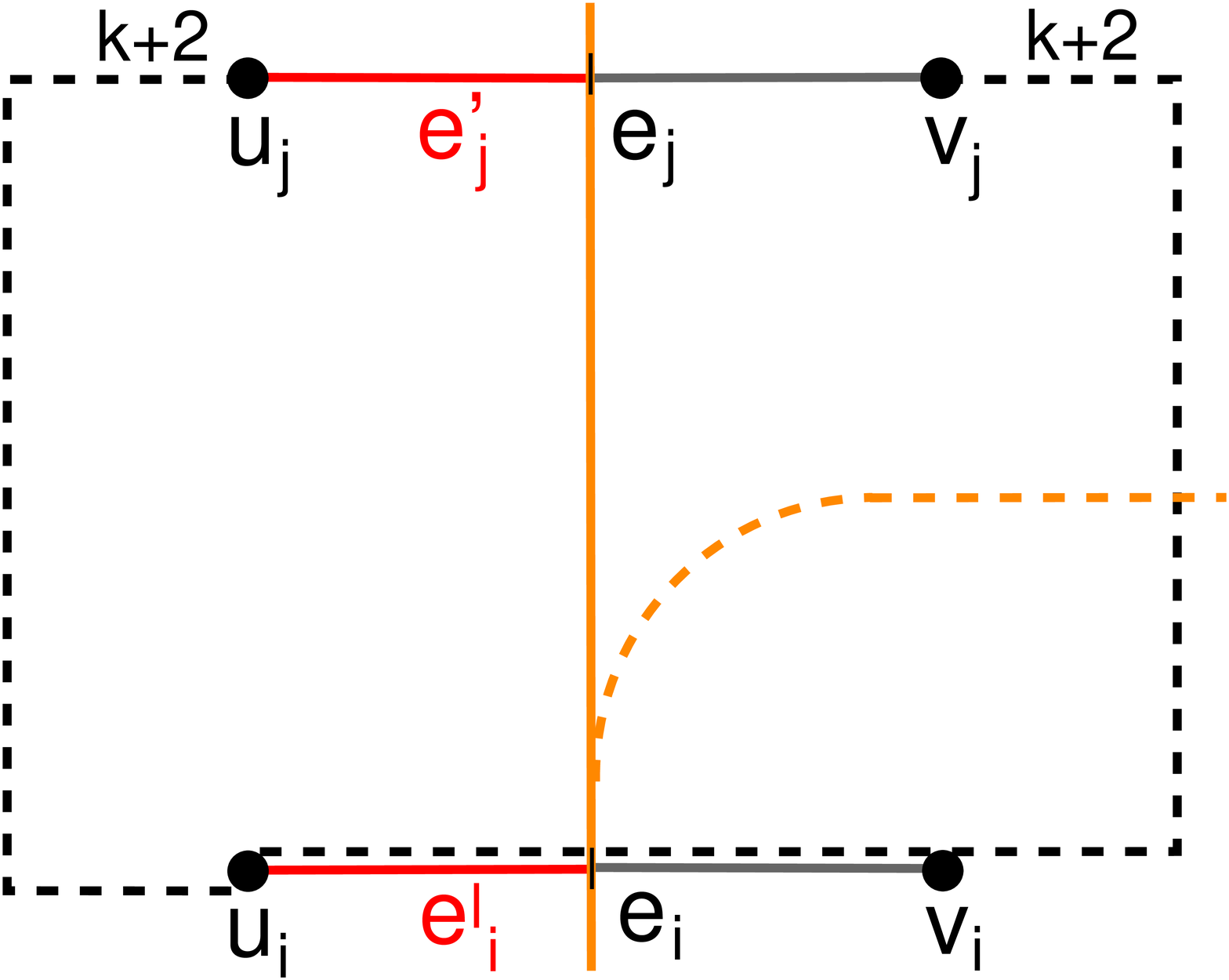}
\caption{(a) Illustration to proof of Lemma~\ref{lemma:tau1}. The
  dotted black edges indicate shortest paths in $G$, and the orange
  zigzag line indicates a non-convex cut. (b, c) Illustrations to
  proof of Lemma~\ref{lemma:deg2}. Curved gray lines indicate paths on
  $E(F_{i-1})$, and gray line segments indicate children of edges on
  $E(F_{i-1})$. The red children are $\theta'$-related. The cut with
  cut-set $C$ is indicated by the solid orange curve, and the cut
  indicated by the dashed orange curve cannot exist.}
\label{fig:tau}
\end{center}
\end{figure}

We will now see that embedded $C_l$ and $C_r$ border all embedded
convex cuts through $e_0$.

\begin{proposition}~\vspace{-0.5cm}\\
\label{prop:frame}
\begin{enumerate}
\item The embedded cut with cut-set $C_l$ runs on the right side of
  the embedded cut with cut-set $C_r$ (except on $C_l \cap C_r$, where
  the embedded cuts touch).
\item Any embedded convex cut runs between the embedded cut with
  cut-set $C_l$ and the embedded cut with cut-set $C_r$, \ie no part
  of the convex cut runs right of $C_l$ or left of $C_r$.
\end{enumerate}
\end{proposition}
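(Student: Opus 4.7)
The plan is to handle the two parts separately: part 2 follows as a direct application of Lemma~\ref{lemma:deg2}, while part 1 requires first analyzing the vertex partitions induced by $C_l$ and $C_r$ and then translating the combinatorial nesting into a geometric statement about the embedded curves.

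For part 2, I invoke Lemma~\ref{lemma:deg2} twice. By the definition of $C_l$, every edge $e_j \in C_l$ with $j \neq 0$ has at least one child that is $\theta'$-related to the left child $e_0^l$ of $e_0$. Writing $C_l$ as a sequence starting with $e_0$ (so $i=0$ in the lemma's notation), the premise of Lemma~\ref{lemma:deg2} is met for every $j > 0$, and the lemma concludes that no embedded convex cut containing $e_0$ in its cut-set can run to the right of $C_l$. A symmetric argument applied to $C_r$, with $e_0^r$ playing the role of the right child, shows that no such convex cut can run to the left of $C_r$. Combining these two conclusions establishes part 2.

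For part 1, I first pin down the vertex partitions induced by the two cuts. Since the subdivision vertex $w_0$ of $e_0$ has exactly $u_0$ and $v_0$ as its neighbors in $G'$, we have $d_{G'}(x, w_0) = 1 + \min(d_{G'}(x, u_0), d_{G'}(x, v_0))$ for every $x \in V$. Combined with $d_{G'}(x, y) = 2\,d_G(x, y)$ for $x, y \in V$, a short computation shows that $x \in V$ lies on the $u_0$-side of $C_l$ iff $d_G(x, u_0) \leq d_G(x, v_0)$, while it lies on the $u_0$-side of $C_r$ iff $d_G(x, u_0) < d_G(x, v_0)$. Hence the $u_0$-side of $C_r$ is contained in the $u_0$-side of $C_l$, the difference consisting precisely of the vertices equidistant from $u_0$ and $v_0$. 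Both curves pass through the midpoint $m_0$ of $e_0$; looking into $F_0$ from $e_0$, $u_0$ is on the left and $v_0$ on the right, so the curve representing $C_l$ (whose $v_0$-side is strictly smaller than that of $C_r$) must enter $F_0$ on the $v_0$-side of the curve representing $C_r$, i.e., on the right.

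The main obstacle is propagating this local right-of relation to all of $D(G)$, which amounts to ruling out transverse crossings of the two curves away from edges in $C_l \cap C_r$. I plan to argue face by face: in each face $F$ traversed by both curves, the segments of each curve inside $F$ are pinned down by which edges of $E(F)$ lie in $C_l$ or $C_r$, and a case analysis based on the nesting of the vertex partitions together with the fact that $|C_l \cap E(F)|$ and $|C_r \cap E(F)|$ are both even should rule out any crossing, since a hypothetical transverse crossing inside $F$ would produce a local quadrant corresponding to the empty intersection of the $v_0$-side of $C_l$ with the $u_0$-side of $C_r$. Once transverse crossings are excluded, the local right-of relation at $m_0$ extends globally, yielding part 1.
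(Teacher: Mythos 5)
Your part~2 is exactly the paper's argument: the paper also disposes of it as the special case $i=0$ of Lemma~\ref{lemma:deg2}, applied once for each child of $e_0$. For part~1 you take a genuinely different route. The paper argues via Observation~\ref{obs:plane}b: for each $e^l \in C_l \setminus C_r$ there is a shortest path in $G'$ running from the left end vertex of $e_0$ through $e_0$ on toward $e^l$, and if $\gamma(C_l)$ strayed to the left of $\gamma(C_r)$ this path would have to cross $C_r$ in a way that contradicts the $\theta'$-relation defining $C_r$. You instead compute the two vertex bipartitions explicitly --- the $u_0$-side of $C_l$ is $\{x : d_G(x,u_0) \leq d_G(x,v_0)\}$ and the $u_0$-side of $C_r$ is $\{x : d_G(x,u_0) < d_G(x,v_0)\}$ --- and your computation is correct (it uses only $d_{G'}(x,w_0) = 1 + 2\min(d_G(x,u_0), d_G(x,v_0))$ and $d_{G'}(x,y) = 2 d_G(x,y)$ for $x,y \in V$). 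The nesting of the partitions, with the difference being exactly the equidistant vertices, is a clean structural fact that the paper never states and that makes the left/right orientation at $e_0$ transparent.

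The one genuine gap is the last step of part~1. You assert that a transverse crossing of $\gamma(C_l)$ and $\gamma(C_r)$ away from $C_l \cap C_r$ ``would produce a local quadrant corresponding to the empty intersection of the $v_0$-side of $C_l$ with the $u_0$-side of $C_r$'' and that a face-by-face case analysis ``should rule out any crossing,'' but you never say why a vertex-free quadrant is impossible; an empty region of the plane is not a contradiction by itself. What closes the argument is precisely Lemma~\ref{lemma:vertex-inside} of the paper (every face of the overlay of two embedded cuts contains a vertex of $G$, because each midpoint of a cut edge on the face boundary has a half-edge, hence an end vertex, inside the face); its proof is independent of Proposition~\ref{prop:frame}, so you may invoke it, but as written your part~1 is a plan whose decisive step is only gestured at. With that lemma supplied, your quadrant argument does go through: the face of $\gamma(C_l) \cup \gamma(C_r)$ adjacent to the hypothetical crossing on the $v_0$-side of $C_l$ and the $u_0$-side of $C_r$ would contain a vertex of $G$, contradicting the nesting you established.
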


\begin{proof}~\vspace{-0.5cm}\\
\begin{enumerate}
\item We use Observation~\ref{obs:plane}b: for any $e^l \in C_l
  \setminus C_r$ there exists a shortest path $P$ in $G'$ from the
  left end vertex of $e_0$ via $e_0$ and the right end vertex of $e_0$
  towards the right end vertex of $e^l$. The assumption that parts of
  $C_l$ run on the left side of $C_r$ lead to a contradiction. Indeed,
  this would entail that there exists a shortest path $P$ as above
  which also crosses $C_r$ via an edge $e^r \in C_r$, \ie $P$ contains
  a shortest path from the left end vertex of $e_0^r$ via $e_0^r$ and
  the right end vertex of $e_0^r$ (which equals the right end vertex
  of $e_0$) and further on via the right end vertex of $e^r$ to the
  left end vertex of $e^r$ --- a contradiction to $e^r \in C_r$.
\item A consequence of a special case of Lemma~\ref{lemma:deg2}, \ie
  the case $i=0$.
\end{enumerate}
\end{proof}

The following proposition reveals that the edges of $G$ that
are not in $C_l \cup C_r$, \ie the edges in $C_{\tau}$, may serve
as unique sequences of stepping stones for convex cuts that move from
$C_l$ to $C_r$ or vice versa.

\begin{proposition}
\label{prop:stepping}
Let $C = (e_0, \dots, e_{\vert C \vert -1})$ be the cut-set of a
convex cut through $e_0$. Then the following holds. If $e_{i-1} \in
C_l$ [$e_{i-1} \in C_r$] and $e_i \in C_{\tau}$, then there exists $j > i$
such that $e_i, \dots, e_{j-1} \in C_{\tau}$ and $e_j \in
C_r$ [$e_j \in C_l$]. Moreover, any cut-set of a convex cut through
$e_0$ that coincides with $C$ on $e_0, e_1, \dots, e_i$ must coincide
with $C$ on $e_0, e_1, \dots, e_j$.
\end{proposition}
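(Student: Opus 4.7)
The plan is to prove existence and uniqueness simultaneously by an inductive ``one-step'' argument: starting from the given transition $e_{i-1}\in C_l$, $e_i\in C_\tau$, I would show that $e_{i+1}$ is forced to lie in $C_\tau \cup C_r$ and that its identity is uniquely determined by the local geometry of face $F_i$. Iterating this yields a uniquely determined sequence of $C_\tau$ edges that, since $E$ is finite and the edges of a cut-set are distinct, must terminate at some $e_j\in C_r$, which gives both assertions at once.

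For the inductive step, suppose we have already shown that $e_i,\dots,e_k\in C_\tau$ for some $k\ge i$, and let $\tilde C$ be any cut-set of a convex cut through $e_0$ agreeing with $C$ on $e_0,\dots,e_k$. Its next edge $\tilde e_{k+1}$ lies on $E(F_k)\setminus\{e_k\}$, and by Lemma~\ref{lemma:tau1} we cannot have $e_k~\tau~\tilde e_{k+1}$, so Theorem~\ref{theorem_subdiv} forces a pair of children of $e_k$ and $\tilde e_{k+1}$ to be $\theta'$-related. I would use Observation~\ref{obs:plane} to classify such a pair into cases (a) and (b); combined with the already-known relation $e_0~\tau~e_k$ and the bounded-face structure of $F_k$ (with its fixed edge count and the two-element-per-face constraint of Lemma~2 in \cite{Chepoi97a}), this pins down a unique candidate edge $\tilde e_{k+1}\in E(F_k)$ together with which of $C_l$, $C_r$, $C_\tau$ it belongs to. Any other choice would violate either the local $\theta'$-constraint or the orientation imposed by Observation~\ref{obs:plane}.

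The hard part, and the main obstacle, is to rule out $\tilde e_{k+1}\in C_l$. Here I would argue as in the proof of Lemma~\ref{lemma:deg2}: the hypothesis $e_{i-1}\in C_l$ gives a child of $e_{i-1}$ that is $\theta'$-related to $e_0^l$, and the successive relations $e_{i-1}\to e_i\to\dots\to e_k$ in $C_\tau$ translate, via Observation~\ref{obs:plane}, into a chain of distance equalities in $G'$ that propagate the orientation of the ``$e_0^l$-side'' along $F_{i-1},F_i,\dots,F_k$. If $\tilde e_{k+1}$ also lay in $C_l$, the left child of $e_k$ would become $\theta'$-related to a child of $\tilde e_{k+1}$ on the same side as the earlier $C_l$-witness; Lemma~\ref{lemma:deg2} then produces a shortest path in $G'$ crossing the embedded cut defined by $\tilde C$ twice, contradicting the convexity of $\tilde C$. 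Hence $\tilde e_{k+1}\in C_\tau\cup C_r$, and by the uniqueness established above it equals the corresponding edge $e_{k+1}$ of $C$. Finiteness of $E$ then forces the process to reach $C_r$ in finitely many steps, yielding the index $j$ and establishing both the existence claim and the ``moreover'' uniqueness claim.
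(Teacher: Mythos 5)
Your face-by-face induction is a genuinely different route from the paper's proof, which does not iterate locally at all: it uses $e_0~\tau~e_i$ to build shortest paths $P_u,P_v$ from the two endpoints of $e_i$ back to $u_0$, proves that the first edges $e^u,e^v$ of these paths lie in $C_r$, and then shows by a three-way case elimination (crossing $C_r$ is excluded by Lemma~\ref{lemma:deg2}; containing $e^u$ contradicts Observation~\ref{obs:plane}a) that $C$ must contain $e^v$, which is the desired $e_j$. Uniqueness is then obtained by a separate triangle argument comparing $d_G(x,v_i)$ with $d_G(x,y_i)$ and pinning down the side of each vertex relative to $u_0$ and $v_0$.

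Your version has three concrete gaps. (1) Existence of $e_j\in C_r$ does not follow from ``each successor lies in $C_\tau\cup C_r$, plus finiteness of $E$'': a non-cyclic cut-set may simply terminate at $E(F_{\infty})$ while still inside $C_\tau$, and nothing in your induction excludes that. You need a positive witness that some edge of $C_r$ belongs to $C$, which is exactly what the paper's shortest-path construction provides. (2) The claim that the local $\theta'$-constraint on $F_k$ ``pins down a unique candidate'' for $\tilde e_{k+1}$ is too strong: the local information yields at most \emph{two} admissible continuations per face (this is precisely the content of Proposition~\ref{prop:deg2}), and discriminating between them requires the global anchor $e_0~\tau~e_k$ via an explicit distance argument, which you do not supply. (3) Your mechanism for excluding $\tilde e_{k+1}\in C_l$ rests on ``propagating the orientation'' along a chain of $\theta'$-relations through the $C_\tau$-stretch; but $\theta'$ is not transitive, and by Lemma~\ref{lemma:theta_tau} the edges in $C_\tau$ have no children $\theta'$-related to children of $e_0$ at all, so there is no such chain to propagate and Lemma~\ref{lemma:deg2} cannot be invoked in the way you describe. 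Each of these points needs its own argument before the induction closes.
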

\
\begin{proof}
Without loss of generality we assume $e_{i-1} \in C_l$.
\begin{enumerate}
\item Let $e_0 = \{u_0, v_0\}$ and $e_i = \{u_i, v_i\}$, let $P_u$
  [$P_v$] be a shortest path from $u_i$ [$v_i$] to $u_0$ in $G$, and
  let $e^u$ [$e^v$] be the first edge on $P_u$ [$P_v$] (see
  Figure~\ref{fig:loops}a). Then $e^u, e^v \in C_r$.

  Without loss of generality we show that $e^u \in C_r$. Let $w_0$
  [$w_i$] denote the vertex of $G'$ that subdivides $e_0$ [$e^u$]. To
  prove $e^u \in C_r$, it suffices to show $\{w_i, u_i\} \theta' \{u_0,
  w_0\}$. Indeed, by definition of $\tau$, we have that the length of
  $P_u$ equals $d_G(u_i, v_0)$. Since the degrees of $w_i$ and $w_0$
  are two, a shortest path from $w_i$ to $w_0$ runs via $u_0$ or via
  $v_0$. In both cases the distance is $2 d_{G'}(u_i, v_0)$. Thus,
  $\{w_i, u_i\}$ is $\theta'$-related to $e_0^r$, \ie $e^u \in C_r$.
\item The following case distinction yields $e^v = e_j$ for some
  $j > i$ (see Figure~\ref{fig:loops}b).

\begin{enumerate}
\item The embedded convex cut with cut-set $C$ crosses embedded
  $C_r$. This case cannot occur due to Lemma~\ref{lemma:deg2}.
\item $C$ contains $e^u$. Then $\{w_i, u_i\}$ is a left child of $e^u$
  \wrt $C$. Since $e_0^r$ is a right child of $e_0$ \wrt $C$, and
  $e_0^r$ is $\theta'$-related to $\{w_i, u_i\}$ (see item 2a),
  Observation~\ref{obs:plane}a yields that $e_0^l$ and the right child
  of $e^u$ are $\theta'$-related, too. This is a contradiction to $e^u
  \in C_r$ (see item 1).
\item The remaining case is that $C$ contains $e^v \in C_r$, \ie $e^v
  = e_j$ for some $j > i$.
\end{enumerate}

\item It remains to show that the extension from $e_i$ to $e_j$ is
  unique (see Figure~\ref{fig:loops}c). Indeed, let $y_i$ be the end
  vertex of $e_j$ that is not $v_i$. The path from $u_i$ to $y_i$ via
  $v_i$ has length two and contains two edges of $C$. Due to the cut
  being convex, there exists an edge $\hat{e}$ from $u_i$ to $y_i$,
  \ie the edges $e_i$, $e_j$ and $\hat{e}$ form a triangle.

  If $d_G(x, v_i) < d_G(x, y_i)$, then $x$ must be on the same side of
  the cut as $v_0$. Indeed, assume that $x$ is on the other side, \ie
  the side of $u_0$. Then, due to $e_0~\tau~e_i$, \ie $d_G(u_0,
  u_i)=d_G(u_0, v_i)=d_G(v_0, u_i)=d_G(v_0, v_i)$, there exists a
  shortest path from $x$ via $v_i$ to $v_0$ that crosses the convex
  cut twice, a contradiction.

  If $d_G(x, v_i) \geq d_G(x, y_i)$, then $x$ must be on the same side
  of the convex cut as $u_0$. Indeed, assume that $x$ is on the other
  side, \ie the side of $v_0$. Then, due to $e_0~\tau~e_i$, there
  exists a path from $x$ via $y_i$ and $u_0$ to $v_0$ that is not
  longer than alternative paths of $x$ via $u_i$ or $v_i$ to
  $v_0$. The path via $y_i$ thus is a shortest path from $x$ to $v_0$
  that crosses the convex cut twice, a contradiction.

  To summarize, the side of any vertex $x$ in the triangle is unique
  \ie the extension from $e_i$ to $e_j$ is unique.
\end{enumerate}
\end{proof}

\begin{figure}[H]
\begin{center}
\includegraphics[height=0.27\columnwidth]{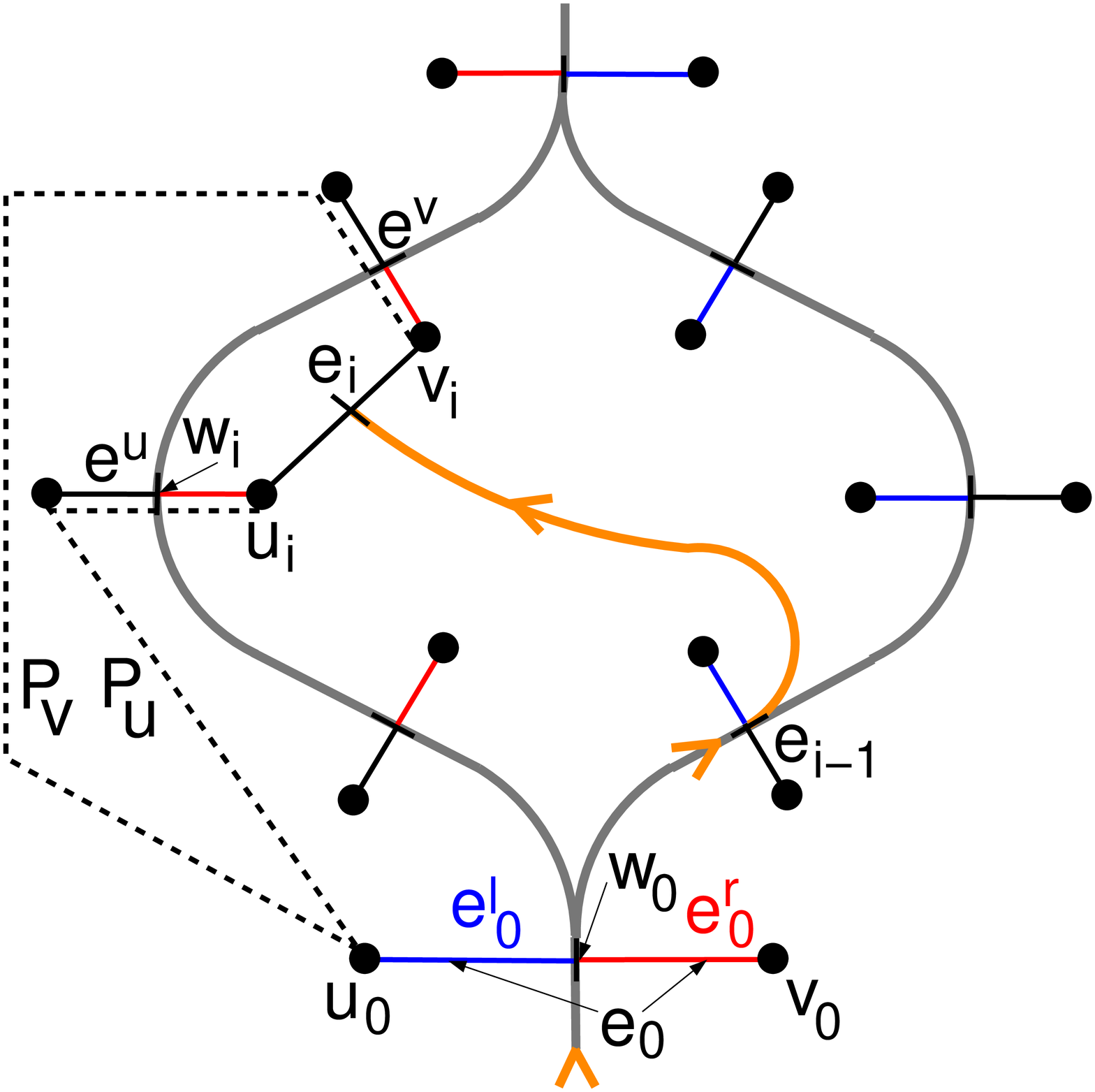}\qquad
\includegraphics[height=0.27\columnwidth]{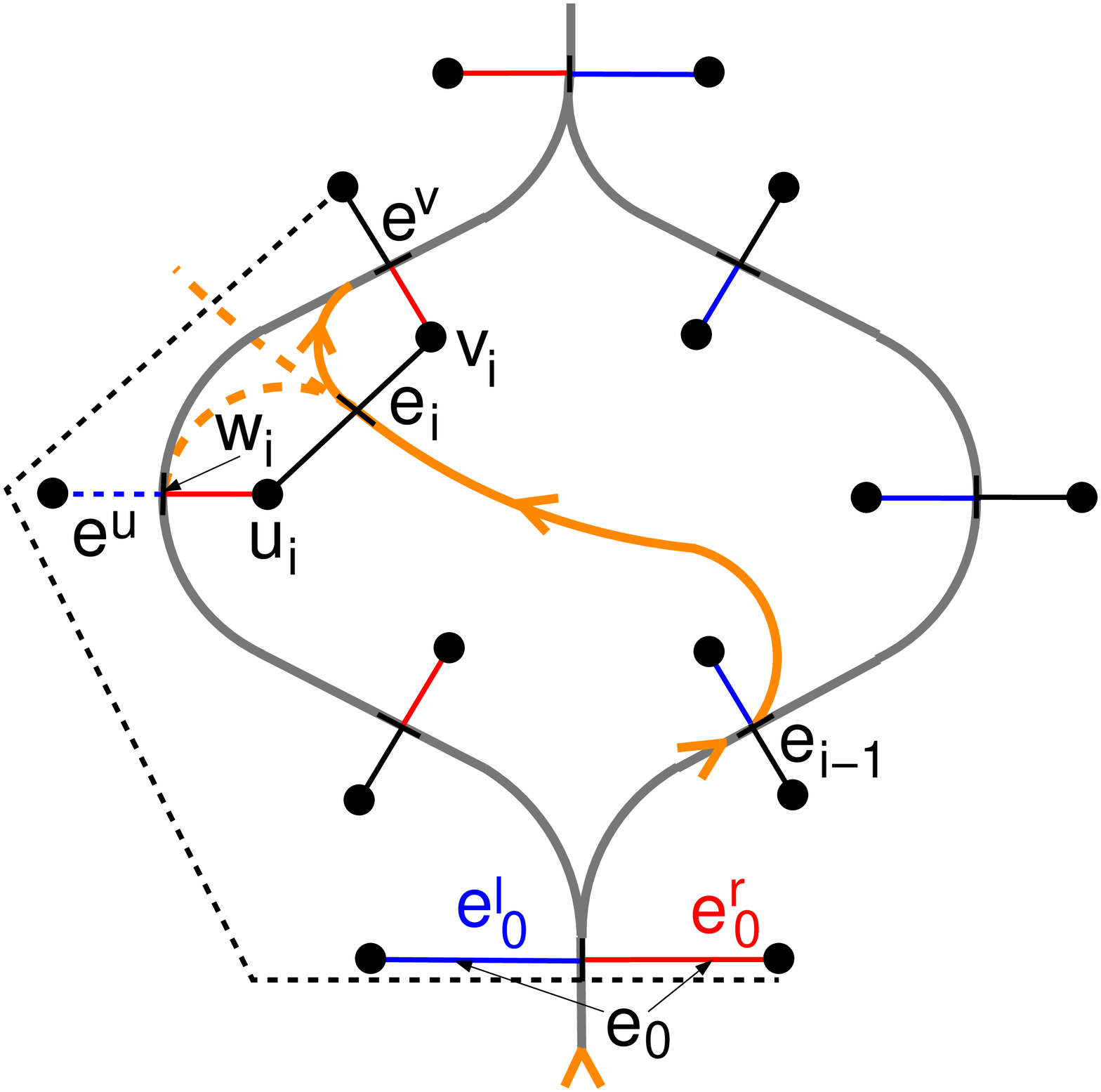}\qquad
\includegraphics[height=0.27\columnwidth]{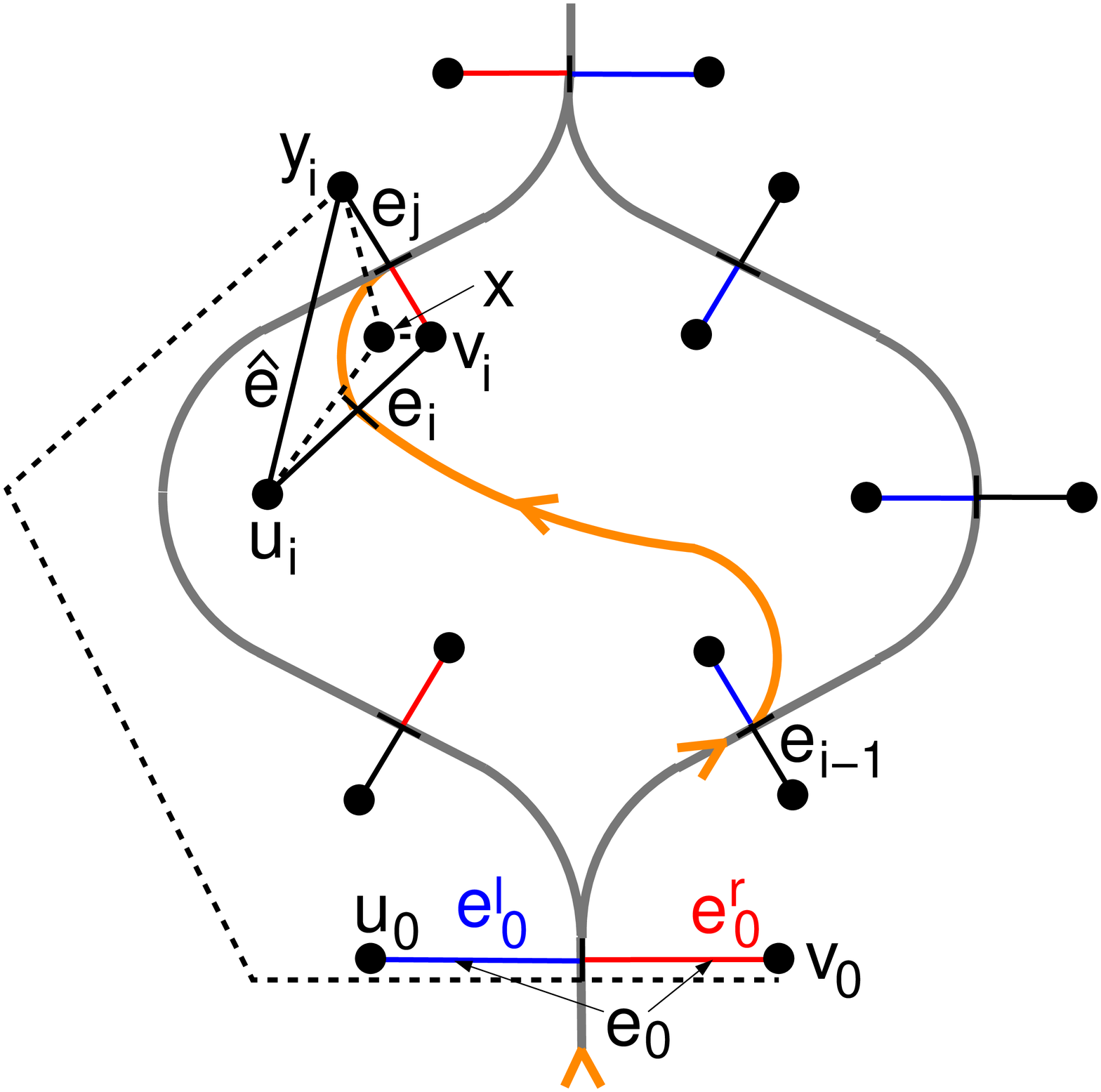}
\caption{Illustrations to the proof of Lemma~\ref{prop:stepping}. The
  cuts defined by $C_l$ and $C_r$ are indicated by the two gray curves
  that bifurcate at $e_0$ and merge at the top. Vertices of $G$ [$G'$]
  are marked as filled circles [bars], and edges of $G$ are shown as
  solid line segments between filled circles, possibly consisting of
  two colors indicating the children. Colors of children indicate
  membership to $C_l'$ and $C_r'$. The orange arrowheads and the
  orange line indicate the cut defined by $C$. The black edge $e_i$ is
  $\tau$-related to $e_0$. (a) The dashed black zigzag lines indicate
  shortest paths $P_u$ and $P_v$ from the left end vertex $u_0$ of
  $e_0$ to the left end vertex $u_i$ of $e_i$, and from $u_0$ to the
  right end vertex $v_i$ of $e_i$, respectively.  (b) Dashed orange
  lines indicate potentially convex cuts that turn out to be
  non-convex because of the shortest paths indicated by the dashed
  black lines (see the proof of Lemma~\ref{prop:stepping}). (c) There
  exists an edge $\hat{e}=\{u_i, y_i\}$. Furthermore, if $d_G(x, v_i)
  \geq d_G(x, y_i)$, then there exists a shortest path from $x$ via
  $y_i$ and $u_0$ to $v_0$.}
\label{fig:loops}
\end{center}
\end{figure}

\subsection{Intersection pattern of embedded convex cuts}
\label{subsec:IPs}
In Section~\ref{subsec:plane_embed} we represented a cut of $G$
through $e_0$ by an embedded simple path or simple cycle $\gamma(C)$
in the line graph $L_G$ of $G$. In this section we study the
intersection pattern of a pair of embedded convex cuts of $G$ through
$e_0$. More formally, if $C$ and $\hat{C}$ are cut-sets of convex cuts
of $G$ through $e_0$, we study the patterns in $\RR^2$ that are formed
by the curves $\gamma(C)$ and $\gamma(\hat{C})$.

The boundary of any face $F_L$ of $\gamma(C) \cup \gamma(\hat{C})$
constitutes a cyclical cut of $G$ (see
Figure~\ref{fig:touch-cross}). Let $v^L \in V^L$ be a vertex on the
boundary of $F_L$. The vertex $v^L$ is the midpoint of an edge $e \in
C$. In particular, one end vertex of $e$ must be contained in the
interior of $F_L$. Thus we have proven the following.

\begin{lemma}
\label{lemma:vertex-inside}
Any face of $\gamma(C) \cup \gamma(\hat{C})$ contains at least one vertex of
$G$. 
\end{lemma}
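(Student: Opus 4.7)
My plan is to argue locally at an arbitrary boundary vertex of $F_L$, tracing along the corresponding edge of $G$ into the face's interior.

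First, I would observe that $F_L$ is a bounded connected component of $D(G) \setminus (\gamma(C) \cup \gamma(\hat{C}))$, so its boundary must contain at least one vertex $v^L \in V^L$ of the arrangement. By Definition~\ref{def:LG} and the embedding rules of Section~\ref{subsec:plane_embed}, every vertex of $\gamma(C)$ or $\gamma(\hat{C})$ is the midpoint of some edge $e = \{u, v\}$ of $G$, and every arc of these curves lies strictly inside a single closed face of $G$, meeting the boundary only at midpoints of edges of $G$.

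The key fact I would then exploit is that the edge $e$ itself is crossed by $\gamma(C) \cup \gamma(\hat{C})$ only at its midpoint $v^L$. Consequently, the two halves of $e$ — the segment from $v^L$ to $u$ and the segment from $v^L$ to $v$ — each lie entirely within a single face of the arrangement. Since $v^L \in \partial F_L$ and the curve $\gamma(C)$ passes through $v^L$ separating the two endpoints of $e$ (they belong to $W_{uv}$ and $W_{vu}$ respectively), the half of $e$ that is locally adjacent to $F_L$ at $v^L$ lies entirely within $F_L$, and in particular its endpoint does. This forces either $u$ or $v$ into the interior of $F_L$, proving the claim.

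The main obstacle I anticipate is the case where $e \in C \cap \hat{C}$, so that both $\gamma(C)$ and $\gamma(\hat{C})$ pass through the same midpoint $v^L$. Then, locally at $v^L$, up to four sectors meet, and the two halves of $e$ lie in two diagonally opposite sectors; the other two sectors are bounded only by curve arcs. If $F_L$ happens to occupy such a ``wedge'' sector at $v^L$, the argument above does not directly produce an endpoint. To handle this, I would use the complementary observation that $F_L$ has multiple boundary vertices (since its boundary is a closed curve in $D(G)$), and show that at least one such vertex $v^{L'}$ must satisfy the nondegenerate configuration — either because not every boundary vertex can be a double-curve vertex, or because the sectors around $v^L$ that contain halves of $e$ extend into adjacent faces whose boundary vertices can be re-examined. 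An alternative, cleaner route would be to first establish the parenthetical claim of the preceding paragraph — that $\partial F_L$ defines a (cyclic) cut of $G$ — and then observe that such a cut, having a nonempty cut-set on its boundary, must separate at least one $G$-vertex on the inside from those outside, which gives the lemma directly.
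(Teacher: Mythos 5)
Your argument is essentially the paper's own proof: pick a vertex $v^L$ of the arrangement on the boundary of $F_L$, note that it is the midpoint of an edge $e$ of $G$ which the curves meet only at $v^L$, and conclude that the half of $e$ locally adjacent to $F_L$ carries an endpoint of $e$ into the interior of $F_L$. The degenerate ``wedge'' case you flag is real but is silently ignored by the paper, and your first proposed fix does go through: since each of $\gamma(C)$ and $\gamma(\hat{C})$ contributes at most one arc to any single bounded face of $G$ (by Chepoi's lemma a convex cut-set meets $E(F)$ in at most two edges), not every boundary vertex of $F_L$ can have both of its incident boundary arcs inside one face of $G$, so a nondegenerate boundary vertex always exists.
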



Lemma~\ref{lemma:tau1} and the case of Lemma~\ref{lemma:deg2} in which
$e_i$ and $e_j$ sit on the boundary of the same face of $G$ yield
Proposition~\ref{prop:deg2} (see also Figures~\ref{fig:tau}b,c). It
states that all embedded convex cuts through $e_0$ which have reached
a face $F$ on the midpoint of an edge on $E(F)$ can cut through $F$ in
at most two ways.

\begin{proposition}
\label{prop:deg2}
Let $C = (e_0, \dots e_{\vert C \vert -1})$ be the cut-set of a convex
cut of $G$ through $e_0$. Then the following holds. For all $e_i$ in
$C$ there exists $f_i \in E(F_{i-1})$ (possibly $e_i = f_i$) such that
$\hat{e}_j \in \{e_i\} \cup \{f_i\}$ for the cut-set $\hat{C} = (e_0,
\hat{e}_1, \dots \hat{e}_{\vert \hat{C} \vert -1})$ of any convex cut
with $\hat{e}_{j-1} = e_{i-1}$ and $\vert \{e_0, \hat{e}_1, \dots
\hat{e}_{j-1}\} \cap F_{i-1}\vert = 1$.
\end{proposition}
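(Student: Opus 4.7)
The plan is to reduce the claim to Lemma~\ref{lemma:tau1} and an application of Lemma~\ref{lemma:deg2} inside the single face $F_{i-1}$. First I would set things up carefully: since $\hat{C}$ is the cut-set of a convex cut and $e_{i-1} = \hat{e}_{j-1}$ lies on $F_{i-1}$, Lemma~2 of~\cite{Chepoi97a} forces $|\hat{C} \cap E(F_{i-1})| = 2$. The hypothesis $|\{e_0, \hat{e}_1, \dots, \hat{e}_{j-1}\} \cap E(F_{i-1})| = 1$ says the second edge of $\hat{C}$ in $E(F_{i-1})$ cannot be earlier than index $j$, and by the definition of the face sequence it must then be exactly $\hat{e}_j$. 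Hence $\hat{e}_j \in E(F_{i-1})$, and the task reduces to bounding how many edges of $E(F_{i-1}) \setminus \{e_{i-1}\}$ can serve as the partner of $e_{i-1}$ on $F_{i-1}$ in some convex cut.

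Next I would extract the local combinatorial constraint. By Theorem~\ref{theorem_subdiv}, since $e_{i-1}$ and $\hat{e}_j$ both belong to the cut-set $\hat{C}$, either $e_{i-1} \,\tau\, \hat{e}_j$ or a child of $e_{i-1}$ is $\theta'$-related to a child of $\hat{e}_j$. But $\hat{e}_{j-1}$ and $\hat{e}_j$ are consecutive in $\hat{C}$, so Lemma~\ref{lemma:tau1} rules out the first option. Thus, letting $e_{i-1}^l$ and $e_{i-1}^r$ denote the left and right children of $e_{i-1}$ (with respect to $F_{i-1}$), every admissible $\hat{e}_j$ must lie in $S_L \cup S_R$, where
\[
S_L = \{e \in E(F_{i-1}) \setminus \{e_{i-1}\} : \text{some child of } e \text{ is } \theta'\text{-related to } e_{i-1}^l\},
\]
and $S_R$ is defined symmetrically.

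The main argument is then a convexity-extremality step inside $F_{i-1}$. For each $e' \in S_L$, the cut that coincides with $\hat{C}$ outside $F_{i-1}$ but uses the curve from $e_{i-1}$ to $e'$ through $F_{i-1}$ satisfies the hypothesis of Lemma~\ref{lemma:deg2} with $e_i := e_{i-1}$ and $e_j := e'$; hence no convex cut containing $e_{i-1}$ may run to the right of it. Ordering the edges of $E(F_{i-1})\setminus\{e_{i-1}\}$ cyclically and comparing pairs, this forces the unique $S_L$-candidate available to a convex extension to be the leftmost member of $S_L$; by the symmetric statement, the only $S_R$-candidate is the rightmost member of $S_R$. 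Therefore at most two edges of $E(F_{i-1})$ can appear as $\hat{e}_j$. Since $e_i$ itself is such an edge (it belongs to the convex cut $C$), defining $f_i$ to be the other of these two extremal candidates (and $f_i := e_i$ when only one candidate exists, or when both coincide as in case (a) of Observation~\ref{obs:plane}) yields the claim $\hat{e}_j \in \{e_i, f_i\}$.

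The main obstacle I anticipate is notational rather than mathematical: making the ``leftmost of $S_L$'' and ``rightmost of $S_R$'' definitions unambiguous in terms of the cyclic ordering of $E(F_{i-1})$, and checking that ``right of'' in Lemma~\ref{lemma:deg2}, whose sign convention refers to $F_i$, matches the orientation one gets when reasoning inside the single face $F_{i-1}$. Once the left/right conventions between Definition~\ref{def:equiv-cuts} and Lemma~\ref{lemma:deg2} are aligned, the extremal argument is essentially a one-line consequence of Lemma~\ref{lemma:deg2}.
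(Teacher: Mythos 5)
Your proposal is correct and follows essentially the same route as the paper, which derives Proposition~\ref{prop:deg2} in a single sentence from Lemma~\ref{lemma:tau1} together with the case of Lemma~\ref{lemma:deg2} in which the two edges lie on the boundary of the same face. Your write-up simply makes explicit the supporting steps the paper leaves implicit (the $|\hat{C}\cap E(F_{i-1})|=2$ fact from Chepoi's lemma, the $\tau$-versus-$\theta'$ dichotomy of Theorem~\ref{theorem_subdiv}, and the ``leftmost of $S_L$ / rightmost of $S_R$'' extremality obtained by applying Lemma~\ref{lemma:deg2} pairwise inside $F_{i-1}$).
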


\begin{figure}
\begin{center}
(a) \includegraphics[height=0.32\columnwidth]{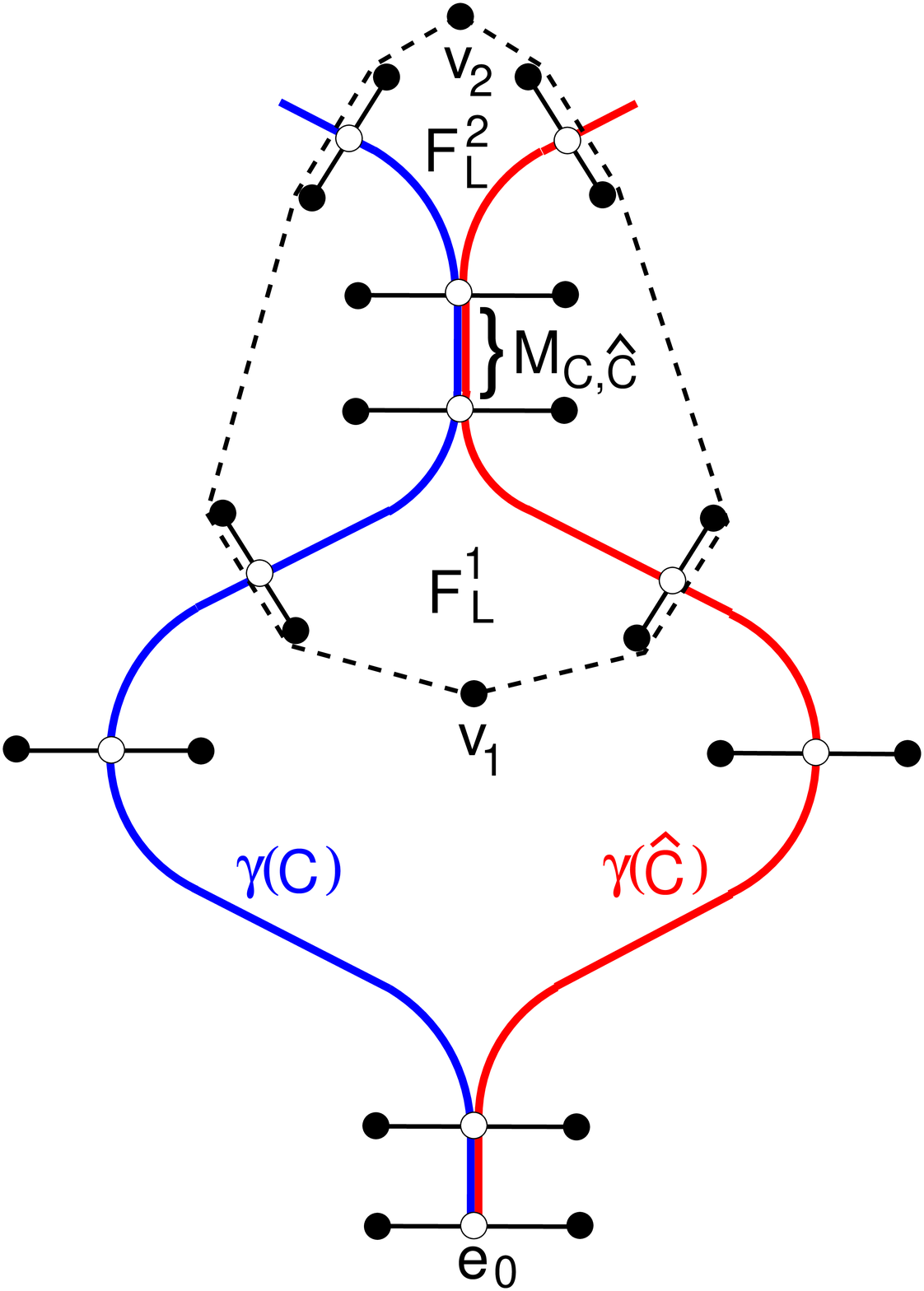}\qquad
(b) \includegraphics[height=0.32\columnwidth]{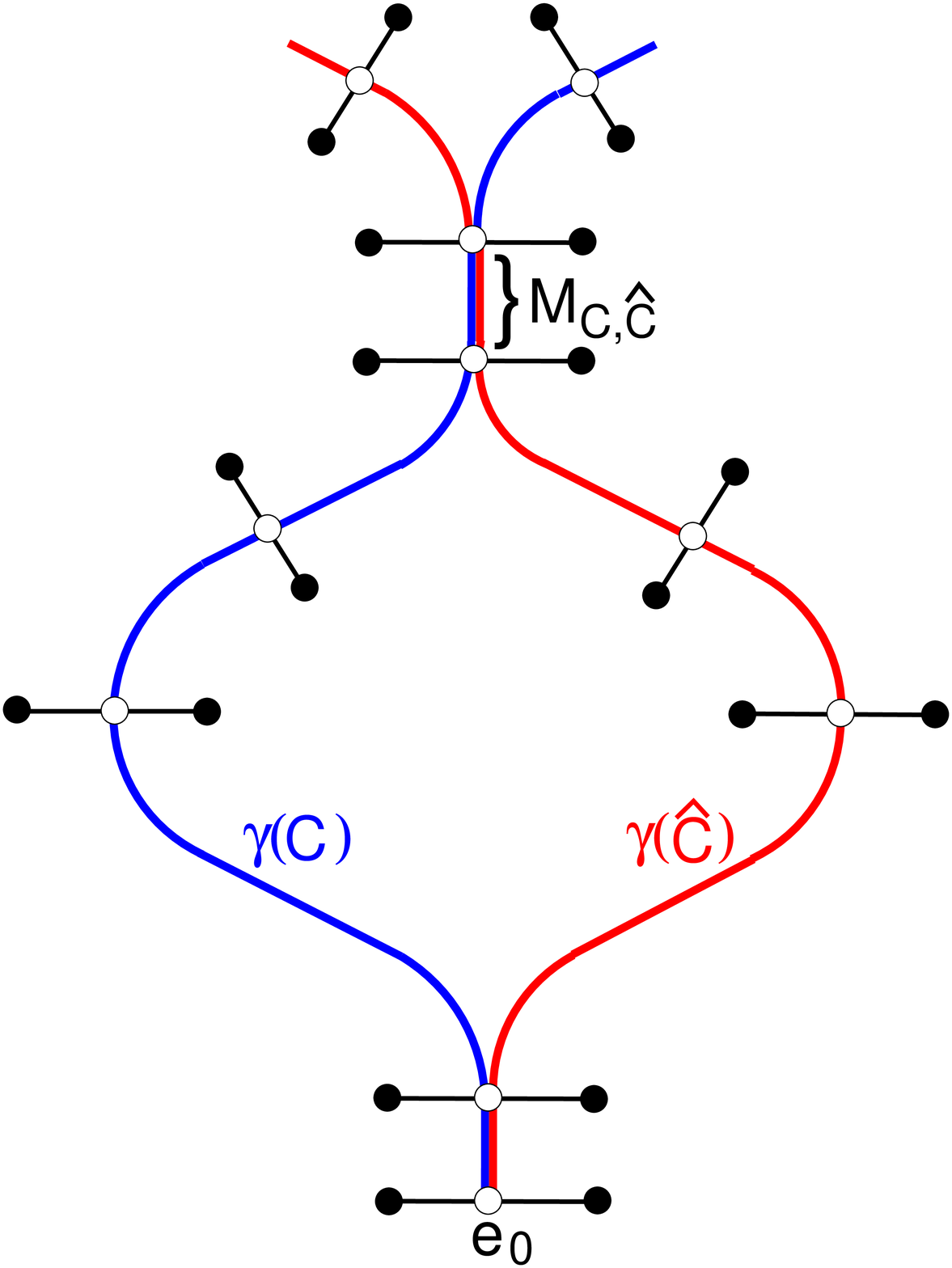}\qquad
(c) \includegraphics[height=0.32\columnwidth]{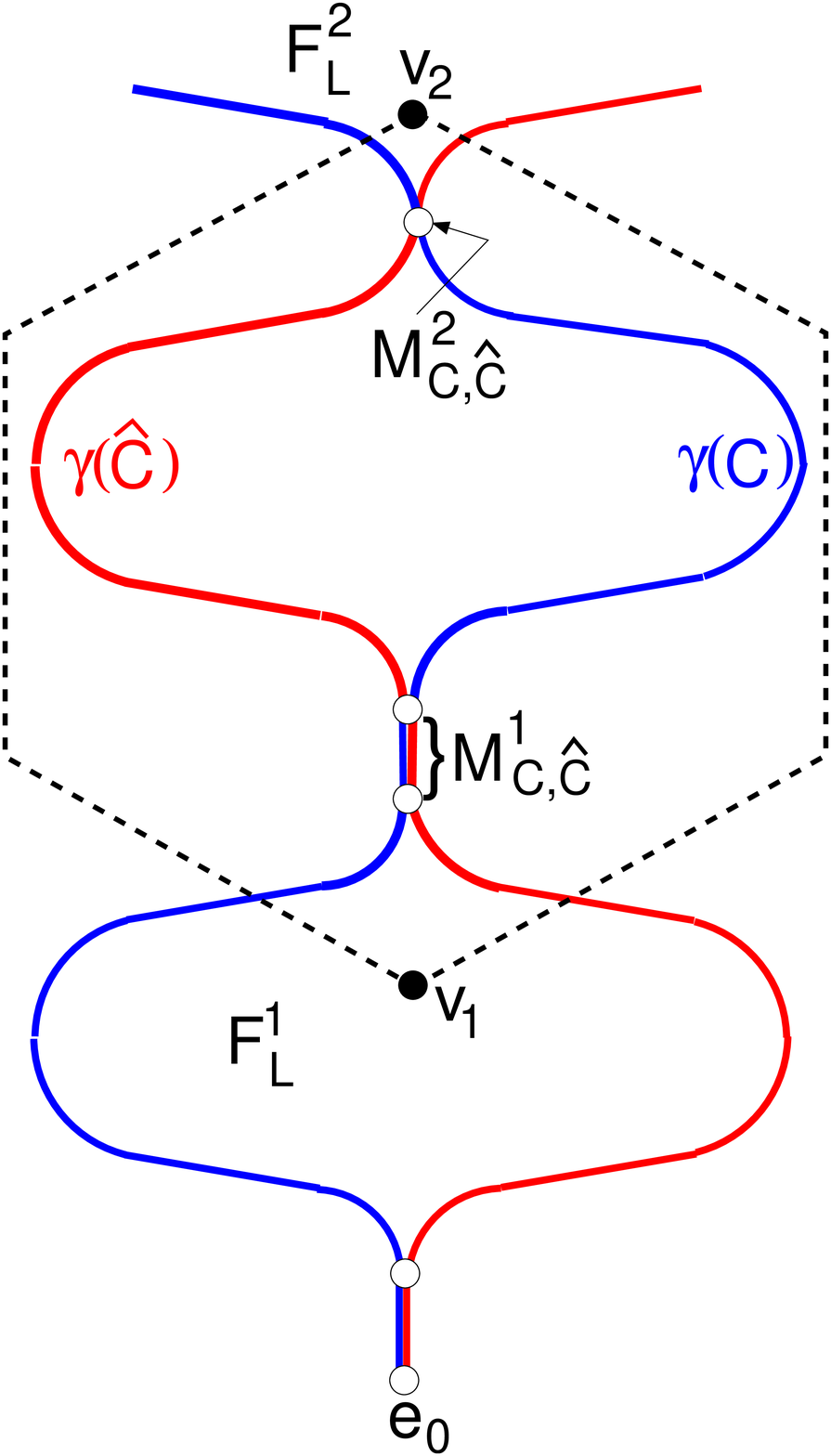}
\caption{(a) Illustration to proof of
  Proposition~\ref{prop:touch-cross}.  The embedded cuts $\gamma(C)$
  and $\gamma(\hat{C})$ are indicated by the blue and red curves,
  respectively. Vertices and edges of $G$ are marked as filled circles
  and solid black lines, respectively. Shortest paths in $G$ are shown
  as dotted black lines. Note that shortest paths in $G$ can use only
  vertices marked as filled circles. Vertices of $S$ are shown as open
  circles. (a, b, c) Illustration to the proof of
  Proposition~\ref{prop:touch-cross}. (a) $\gamma(C)$ touches
  $\gamma(\hat{C})$ on the maximal common curve $M_{C, \hat{C}}$ of
  $\gamma(C)$ and $\gamma(\hat{C})$. (b) $\gamma(C)$ crosses
  $\gamma(\hat{C})$ on $M_{C, \hat{C}}$. (c) $\gamma(C)$ and
  $\gamma(\hat{C})$ have crossings $M^1_{C, \hat{C}}$ and $M^2_{C,
    \hat{C}}$.}
\label{fig:touch-cross}
\end{center}
\end{figure}

\begin{definition}[$\gamma(C)$ touches $\gamma(\hat{C})$, $\gamma(C)$ crosses $\gamma(\hat{C})$, overlap, crossing $M_{C, \hat{C}}$]
\label{def:touch-cross}
Let $C$ and $\hat{C}$ be cut-sets of cuts of $G$ through $e_0$. We say
that $\gamma(C)$ touches [crosses] $\gamma(\hat{C})$ on the maximal common
curve $M_{C, \hat{C}}$ of $\gamma(C)$ and $\gamma(\hat{C})$ if the part of
$\gamma(C)$ directly before $M_{C, \hat{C}}$ is on the same side [on the
  other side] of $\gamma(\hat{C})$ as the part of $\gamma(C)$ directly after
$M_{C,\hat{C}}$. The curve $M_{C, \hat{C}}$ is called an overlap of
$\gamma(C)$ and $\gamma(\hat{C})$. If $\gamma(C)$ crosses $\gamma(\hat{C})$ on
$M_{C, \hat{C}}$, we refer to $M_{C, \hat{C}}$ as the crossing of
$\gamma(C)$ and $\gamma(\hat{C})$.
\end{definition}

For examples of touching and crossing cuts see
Figure~\ref{fig:touch-cross}. The following proposition describes the
intersection pattern of a pair of embedded convex cuts of $G$ through
$e_0$.

\begin{proposition}
\label{prop:touch-cross}
Let $C, \hat{C}$ be cut-sets of convex cuts of $G$ through $e_0$. Then
$\gamma(C)$ cannot touch $\gamma(\hat{C})$, and $\gamma(C)$ and $\gamma(\hat{C})$
can have at most one crossing.
\end{proposition}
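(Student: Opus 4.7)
The plan is to prove both parts by contradiction, leveraging the fact that the intersection of two shortest-path-convex subsets of $V$ is again convex. I label the two convex cuts as $(V_1, V_2)$ and $(\hat{V}_1, \hat{V}_2)$ so that $u_0 \in V_1 \cap \hat{V}_1$ and $v_0 \in V_2 \cap \hat{V}_2$, where $e_0 = \{u_0, v_0\}$. Any shortest $G$-path between two vertices of $V_i \cap \hat{V}_j$ lies both in $V_i$ (by convexity of $V_i$) and in $\hat{V}_j$ (by convexity of $\hat{V}_j$), so each nonempty $V_i \cap \hat{V}_j$ is itself convex and induces a connected subgraph of $G$.

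Next I consider the arrangement $\mathcal{A} := \gamma(C) \cup \gamma(\hat{C})$ in $D(G)$. Each face of $\mathcal{A}$ lies on a fixed side of $\gamma(C)$ and on a fixed side of $\gamma(\hat{C})$, so the vertices of $G$ contained in a single face all belong to the same class $V_i \cap \hat{V}_j$. By the embedding rules of Section~\ref{subsec:plane_embed}, any edge $\{x,y\}$ of $G$ with both endpoints in $V_i \cap \hat{V}_j$ lies in neither $C$ nor $\hat{C}$, and its interior (embedded inside at most two faces of $G$) is disjoint from every arc of $\mathcal{A}$, since those arcs meet the boundary of their host $G$-face only at midpoints of the edges they use. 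Hence every edge of $G[V_i \cap \hat{V}_j]$ keeps its endpoints in a single face of $\mathcal{A}$; combined with the connectivity above, this forces every nonempty $V_i \cap \hat{V}_j$ into a single face of $\mathcal{A}$.

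Now assume for contradiction that either (a) $\gamma(C)$ touches $\gamma(\hat{C})$ on a maximal overlap $M$, or (b) the curves admit two distinct crossings $p_1, p_2$. I will exhibit two distinct bounded faces $R^-, R^+$ of $\mathcal{A}$ lying in the same mixed class $V_i \cap \hat{V}_j$ with $i \neq j$. In case (a), let $q^-$ (resp.\ $q^+$) be the common point of $\gamma(C), \gamma(\hat{C})$ immediately preceding (resp.\ following) $M$; such common points exist because the midpoint $m_0$ of $e_0$ is shared by both curves (or, in the cyclic case, both curves close up through $m_0$). The two subarcs of the curves from $q^-$ to the initial point of $M$ bound $R^-$, and the two subarcs from the terminal point of $M$ to $q^+$ bound $R^+$. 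Definition~\ref{def:touch-cross} places $\gamma(C)$ on the same side of $\gamma(\hat{C})$ on both sides of $M$, so $R^-$ and $R^+$ lie on the same side of each curve and hence in the same mixed class. In case (b), analogous subarcs around $p_1$ and around $p_2$ bound $R^-$ and $R^+$; since each transverse crossing reverses the side of $\gamma(C)$ relative to $\gamma(\hat{C})$, the two crossings cancel and $R^-, R^+$ again share a mixed class. In both cases Lemma~\ref{lemma:vertex-inside} supplies vertices $v \in R^-$ and $w \in R^+$ in the same $V_i \cap \hat{V}_j$ but in distinct faces of $\mathcal{A}$, contradicting the single-face containment from the previous paragraph.

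The hard part will be the topological verification that the claimed $R^-, R^+$ are genuine bounded faces of $\mathcal{A}$ whose boundary circuits consist entirely of arcs of $\gamma(C) \cup \gamma(\hat{C})$, so that Lemma~\ref{lemma:vertex-inside} can be invoked. This needs a case split on whether $q^-, q^+$ are interior shared points or are boundary endpoints of the curves on $\partial D(G)$, and on whether $C, \hat{C}$ are cyclic or non-cyclic; when the two curves do not share a second endpoint on $\partial D(G)$, one has to supplement $\mathcal{A}$ with the relevant arc of $\partial D(G)$ between the two terminal endpoints to close up $R^+$ and then verify that it still contains a vertex of $G$ (because at least one edge of $C$ or $\hat{C}$ on the boundary arc contributes an endpoint in the interior of $R^+$). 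The structural principle throughout is simple: each transverse crossing contributes exactly one side-swap between the two curves while each touching contributes none, so any even number of such events between $R^-$ and $R^+$ always places them into the same mixed class.
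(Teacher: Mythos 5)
Your proof is correct and follows essentially the same route as the paper's: both arguments locate, via Lemma~\ref{lemma:vertex-inside}, a vertex of $G$ in each of the two regions of $\gamma(C)\cup\gamma(\hat{C})$ lying before and after the touch (respectively before the first and after the second crossing) and derive a contradiction with convexity. The only cosmetic difference is that you package the contradiction as ``the convex, hence connected, class $V_i\cap\hat{V}_j$ cannot meet two distinct faces of the arrangement,'' whereas the paper says directly that a shortest path between the two exhibited vertices would have to cross one of the two curves twice; these are equivalent, and your version is, if anything, more explicit about the topological bookkeeping the paper leaves implicit.
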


\begin{proof}~\vspace{-0.5cm}\\
\begin{itemize}
\item Assume that $\gamma(C)$ touches $\gamma(\hat{C})$. Let $F^1_L$ and
  $F^2_L$ be the faces formed by the parts of $\gamma(C)$ and
  $\gamma(\hat{C})$ before and after $M_{C, \hat{C}}$, respectively (see
  Figure~\ref{fig:touch-cross}a). Lemma~\ref{lemma:vertex-inside}
  yields that there exist vertices $v_1 \in V \cap F^1_L$ and $v_2 \in
  V \cap F^2_L$. Any shortest path from $v_1$ to $v_2$ either crosses
  $\gamma(C)$ or $\gamma(\hat{C})$ twice, a contradiction to $C$ and
  $\hat{C}$ being cut-sets of convex cuts.
\item Assume that $\gamma(C)$ and $\gamma(\hat{C})$ have crossings $M^1_{C,
  \hat{C}} \neq M^2_{C, \hat{C}}$, and that there is no crossing
  between $M^1_{C, \hat{C}}$ and $M^2_{C, \hat{C}}$. Let $F^1_L$ and
  $F^2_L$ be the faces formed by the parts of $\gamma(C)$ and
  $\gamma(\hat{C})$ before $M^1_{C, \hat{C}}$ and after $M^2_{C,
    \hat{C}}$, respectively (see
  Figure~\ref{fig:touch-cross}c). Lemma~\ref{lemma:vertex-inside}
  yields that there exist vertices $v_1 \in V \cap F^1_L$ and $v_2 \in
  V \cap F^2_L$. As in the previous item, any shortest path from $v_1$
  to $v_2$ either crosses $\gamma(C)$ or $\gamma(\hat{C})$ twice, a
  contradiction to $C$ and $\hat{C}$ being cut-sets of convex cuts.
\end{itemize}
\end{proof}

\subsection{Upper bound on number of convex cuts through $e_0$}
\label{subsec:numbers}
To find an upper bound on the number of convex cuts of $G$ through
$e_0$ we start by assuming that there exists at least one such cut with
cut-set $\hat{C}$. The necessary conditions for convex cuts in
Section~\ref{subsec:plane_necessary} and
Proposition~\ref{prop:touch-cross} impose constraints on the other
candidates for convex cuts through $e_0$. In particular,
Proposition~\ref{prop:touch-cross} implies the following. The first
overlap of $\gamma(C)$ and $\gamma(\hat{C})$ is always the one that
contains the midpoint of $e_0$. It always exists. If there is a second
overlap, and this second overlap is not a crossing, it must be the
last overlap, and it must contain the midpoint of $e_{\vert C \vert
  -1}$. This follows from the fact that $\gamma(C)$ and
$\gamma(\hat{C})$ cannot touch. If the second overlap exists and
constitutes a crossing, there may or may not be another overlap. If
there exists such a third overlap, it must be the last one, and it
must contain $e_{\vert C \vert -1}$ (since $\gamma(C)$ and
$\gamma(\hat{C})$ cannot touch, and they cannot cross twice). Thus, we
have proven the following.

\begin{proposition}[At most three overlaps]
\label{prop:3over}
Let $C$ and $\hat{C}$ be cut-sets of convex cuts of $G$ through
$e_0$. Then $\gamma(C)$ and $\gamma(\hat{C})$ can have at most three
overlaps.
\end{proposition}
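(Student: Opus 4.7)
The plan is to enumerate the maximal common sub-curves (overlaps) of $\gamma(C)$ and $\gamma(\hat{C})$ sequentially as one traverses $\gamma(C)$ from the midpoint of $e_0$, and to use Proposition~\ref{prop:touch-cross} together with a case split on whether each overlap sits in the interior of both curves or at an endpoint of at least one of them.

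First I would note that because both cuts pass through $e_0$, the midpoint of $e_0$ lies on both $\gamma(C)$ and $\gamma(\hat{C})$, so the overlap $M_0$ containing this midpoint always exists; this accounts for one overlap. Any further overlap $M$ I would classify as either \emph{interior}, meaning both curves extend on both sides of $M$, or \emph{terminal}, meaning that at least one of $\gamma(C)$ or $\gamma(\hat{C})$ has an endpoint on $M$.

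For interior overlaps, Definition~\ref{def:touch-cross} applies, so $M$ is either a touching or a crossing of $\gamma(C)$ and $\gamma(\hat{C})$. Proposition~\ref{prop:touch-cross} excludes touchings and permits at most one crossing, so at most one interior overlap can occur beyond $M_0$. For terminal overlaps, the only possible endpoints besides the shared one at $e_0$ are the midpoint of $e_{\vert C \vert -1}$ for $\gamma(C)$ and the analogous endpoint of $\gamma(\hat{C})$ (in the non-cyclic case; in the cyclic case no terminal overlap exists at all). A short argument, again appealing to Proposition~\ref{prop:touch-cross} to rule out an implied touching between two distinct terminal overlaps, shows that at most one terminal overlap can arise after $M_0$. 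Summing the three contributions---one for $M_0$, at most one interior crossing, at most one terminal overlap at the far end---yields the bound of three.

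The main obstacle, as I see it, is not conceptual but bookkeeping: Proposition~\ref{prop:touch-cross} is stated for overlaps at which both curves genuinely extend on both sides, so the case analysis must confirm carefully that the overlaps hosting the shared endpoint at $e_0$ and any remote endpoint together contribute at most two, while the single genuinely interior possibility contributes at most one. Everything beyond that reduces to the already established facts that embedded convex cuts of $G$ through $e_0$ cannot touch and can cross at most once.
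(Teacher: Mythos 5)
Your proposal is correct and follows essentially the same route as the paper: the paper likewise counts the guaranteed overlap containing the midpoint of $e_0$, invokes Proposition~\ref{prop:touch-cross} to forbid touchings and allow at most one crossing, and observes that any remaining overlap must be the last one and contain the midpoint of $e_{\vert C \vert -1}$, giving the bound of three. Your reorganization into ``interior'' versus ``terminal'' overlaps is only a cosmetic repackaging of the paper's sequential enumeration, and both arguments leave the same endpoint bookkeeping at the same level of informality.
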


\begin{definition}[Fork and corresponding join of $\gamma(C)$ and $\gamma(\hat{C})$, detour on $\gamma(C)$]
\label{def:fork-join}
Let $C$ and $\hat{C}$ be cut-sets of cuts of $G$ through
$e_0$. Furthermore, let $M_1$ and $M_2$ be two consecutive overlaps of
$\gamma(C)$ and $\gamma(\hat{C})$, \ie there is no overlap of $\gamma(C)$ and
$\gamma(\hat{C})$ of $M_1$ and $M_2$. Then the last point of $M_1$
and the first point of $M_2$ are called fork and corresponding join
of $\gamma(C)$ and $\gamma(\hat{C})$. The sub-path of $\gamma(C)$ between
a fork and a corresponding join is called detour on $\gamma(C)$ around
$\gamma(\hat{C})$.
\end{definition}

\begin{proposition}[Unique detours]
\label{prop:detours}
Let $C$, $C^*$ and $\hat{C}$ be cut-sets of convex cuts of $G$ through
$e_0$, let $p^f$ and $p^j$ be a fork and a corresponding join of
$\gamma(C)$ and $\gamma(\hat{C})$, as well as of $\gamma(C^*)$ and
$\gamma(\hat{C})$. If $\{p^f\}$ is not a crossing, then the corresponding
detours on $\gamma(C)$ and on $\gamma(C^*)$ coincide.
\end{proposition}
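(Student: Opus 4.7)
The plan is a contradiction argument that combines Proposition~\ref{prop:deg2}, which bounds how a convex cut can leave a given face, with Proposition~\ref{prop:touch-cross}, which forbids two convex cuts through $e_0$ from touching and allows at most one crossing. First I want to establish that the three curves $\gamma(C), \gamma(C^*), \gamma(\hat{C})$ coincide from $p_0$ (the midpoint of $e_0$) up to $p^f$, and further through one common face. Because $\{p^f\}$ is not a crossing and touches are ruled out by Proposition~\ref{prop:touch-cross}, the overlap of $(\gamma(C),\gamma(\hat{C}))$ that ends at $p^f$ cannot be a middle overlap and must therefore be the start overlap through $p_0$; the same reasoning applies to $(\gamma(C^*),\gamma(\hat{C}))$. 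Consequently the three curves agree from $p_0$ to $p^f$, so $p^f$ is the midpoint of a common edge $e^f$ through which all three enter a common face $F^\star$ of $G$. Applying Proposition~\ref{prop:deg2} in $F^\star$ with entry $e^f$ yields at most two possible exits for any convex cut through $e_0$; $\gamma(\hat{C})$ takes one, and since both $\gamma(C)$ and $\gamma(C^*)$ fork from $\gamma(\hat{C})$ at $p^f$, both are forced onto the other exit, so they coincide throughout $F^\star$.

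Assume for contradiction that the two detours differ, and let $q$ be the first point past $F^\star$ where $\gamma(C)$ and $\gamma(C^*)$ diverge. A symmetric argument at $p^j$, using that by the definition of a join both $\gamma(C)$ and $\gamma(C^*)$ coincide with $\gamma(\hat{C})$ on a positive-length initial segment of their respective overlaps starting at $p^j$, shows that $\gamma(C)$ and $\gamma(C^*)$ share a second common subcurve beginning at $p^j$. Between $q$ and $p^j$ the two curves therefore enclose a face $F_L$ of the plane graph $\gamma(C)\cup\gamma(C^*)$, and by Lemma~\ref{lemma:vertex-inside} applied to this pair, $F_L$ contains a vertex $v$ of $G$. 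Mimicking the shortest-path construction used in the proof of Proposition~\ref{prop:touch-cross}, I then locate a second vertex $v'$ in a suitable face on the opposite side of $\gamma(C)$ or $\gamma(C^*)$ — near the boundary close to $p^f$ or $p^j$, by another application of Lemma~\ref{lemma:vertex-inside} — and show that every shortest path from $v$ to $v'$ in $G$ is forced to cross $\gamma(C)$ or $\gamma(C^*)$ twice, contradicting convexity of $C$ or $C^*$ respectively. Hence no such $q$ exists, and the detours coincide.

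The main obstacle is this last shortest-path step. Proposition~\ref{prop:touch-cross} alone does not immediately rule out the configuration in which $\gamma(C)$ and $\gamma(C^*)$ split at the end $q$ of their start overlap and reconverge at the beginning $p^j$ of an end-type overlap, since Definition~\ref{def:touch-cross} only applies the touch/cross dichotomy to overlaps that have both a ``before'' and an ``after'' portion; neither the split at $q$ nor the reconvergence at $p^j$ is formally a middle overlap in that sense. The contradiction must therefore be extracted directly, by adapting the technique of Proposition~\ref{prop:touch-cross}'s own proof to the two ``outer'' faces of $\gamma(C)\cup\gamma(C^*)$ adjacent to the split-and-reconvergence pattern, and verifying carefully that the resulting shortest path is forced to cross one of the two curves twice.
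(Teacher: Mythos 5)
Your first half is sound and matches the paper: the hypothesis that $\{p^f\}$ is not a crossing forces $p^f$ to be the midpoint of an edge $e^f$ of $G$, and Proposition~\ref{prop:deg2} applied in the face entered through $e^f$ forces $\gamma(C)$ and $\gamma(C^*)$ onto the same first detour edge, since $\gamma(\hat{C})$ already occupies one of the at most two admissible exits. (Your side claim that the overlap ending at $p^f$ must be the \emph{start} overlap is not needed and is not quite right --- it could be a positive-length crossing overlap --- but nothing depends on it.)

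The genuine gap is exactly where you flag it: ruling out divergence of $\gamma(C)$ and $\gamma(C^*)$ after the first detour edge. The shortest-path argument you propose cannot be made to work by "adapting" Proposition~\ref{prop:touch-cross}, because the split-at-$q$ / reconverge-at-$p^j$ configuration produces only \emph{one} bounded face of $\gamma(C)\cup\gamma(C^*)$ (a single lens between the two detours); the two curves coincide before $q$ and after $p^j$, so there is no second bounded face on the opposite side from which to draw your vertex $v'$. A shortest path from a vertex inside the lens to any vertex outside it need only cross each of $\gamma(C)$, $\gamma(C^*)$ once, so no double crossing is forced and no contradiction with convexity arises. The paper closes this step with entirely different machinery: it considers the cut-sets $C^f_l$ and $C^f_r$ of edges whose children are $\theta'$-related to the left and right child of the \emph{fork edge} $e^f$ (not of $e_0$), observes via Lemma~\ref{lemma:tau1} and Proposition~\ref{prop:deg2} that both detours start on, say, $C^f_l$ while $\gamma(\hat{C})$ continues on $C^f_r$, and then invokes Proposition~\ref{prop:frame} and the uniqueness of the $\tau$-stepping-stone transition in Proposition~\ref{prop:stepping} (all relative to $e^f$) to show that if the two detours diverged, one of them would have to switch from $C^f_l$ to $C^f_r$ and could then not rejoin $\gamma(\hat{C})$ at $p^j$ simultaneously with the other without an illegal crossing. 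Without some argument of this kind --- re-anchoring the frame and stepping-stone results at $e^f$ rather than at $e_0$ --- your proof does not go through.
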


\begin{proof}~\vspace{-0.5cm}\\
\begin{enumerate}
\item $p^f$ is the midpoint of an edge of $G$. Indeed, an edge of
  $\gamma(C)$, $\gamma(C^*)$ or $\hat{\gamma(C)}$ takes the form $\{f_{i-1},
  f_i\}$, where $f_{i-1}, f_i$ are edges of $G$. Here the indices
  reflect the order of the corresponding cut-sets (see
  Notation~\ref{not:C}). Moreover, there exists a face $F_{i-1}$ of
  $G$ with $f_{i-1}, f_i \in F_{i-1}$. The {\em embedded} edge
  $\{f_{i-1}, f_i\}$ is a curve from the midpoint of $f_{i-1}$ through
  the interior of $F_{i-1}$ to the midpoint of $f_i$. The embedding is
  such that two embedded edges that cross in a face of $G$ must cross
  at a single point (see Section~\ref{subsec:plane_embed}). Thus, the
  condition that the join $\{p^f\}$ is not a crossing implies that
  $p^f$ is the midpoint of an edge of $G$.
\item The first edge of the detour on $\gamma(C)$ around
  $\gamma(\hat{C})$ from $p^f$ to $p^j$ must coincide with the first
  edge of the detour on $\gamma(C^*)$ around
  $\gamma(\hat{C})$. Indeed, let $e^f$ be the edge with midpoint
  $p^f$. Let the successors of $e^f$ in the cut-sets $C$, $C^*$ and
  $\hat{C}$ be denoted by $s$, $s^*$ and $\hat{s}$,
  respectively. Since $p^f$ and $p^j$ are a fork and a corresponding
  join of $\gamma(C)$ and $\gamma(\hat{C})$, as well as of
  $\gamma(C^*)$ and $\gamma(\hat{C})$, the edges $\{e^f, s\}$, $\{e^f,
  s^*\}$ and $\{e^f, \hat{s}\}$ all go through the same face of
  $G$. Proposition~\ref{prop:deg2} yields that at least two of the
  three embedded edges must coincide. Thus, the first edge of the
  detour on $\gamma(C)$ around $\gamma(\hat{C})$ from $p^f$ and $p^j$
  must coincide with the first edge of the detour on $\gamma(C^*)$
  around $\gamma(\hat{C})$.

\item Analogous to Definition~\ref{def:equiv-cuts}, let $C^f_l$
  [$C^f_r$] be the set of edges of $G$ that have a child which is
  $\theta'$-related to the left [right] child of $e^f$. Without loss
  of generality we assume that $\gamma(C)$ runs right of
  $\gamma(\hat{C})$. Lemmas~\ref{lemma:tau1} and
  Proposition~\ref{prop:deg2} then yield that (i) the edge in $C$
  directly after $e^f$, denoted by $e$ is contained in $C^f_l$ and
  (ii) the edge in $\hat{C}$ directly after $e^f$, denoted by
  $\hat{e}$, is contained in $C^f_r$.

  We assume the opposite of the claim, \ie that $\gamma(C)$ and
  $\gamma(C^*)$ fork at or behind $e$ and join at or before
  $p^j$. Then Proposition~\ref{prop:frame} and
  Proposition~\ref{prop:stepping} imply that one of the embedded cuts
  continues on $C^f_l$, while the other one switches from $C^f_l$ to
  $C^f_r$. If the switching embedded cut hits $p^j$ as soon as it
  reaches $C^f_r$, the other embedded cut has missed $p^j$, a
  contradiction to $p^j \in \gamma(C)$ and $p^j \in \gamma(C^*)$. If
  the switching embedded cut does not hit $p^j$ as soon as it reaches
  $C^f_r$, the embedded cut $\gamma(\hat{C})$ has already switched
  from $C^f_r$ to $C^f_l$, and must thus have hit $\gamma(C)$ or
  $\gamma(C^*)$ before $p^j$, a contradiction.
\end{enumerate}
\end{proof}

\begin{proposition}
\label{prop:upper_bound}
An upper bound for the number of convex cuts of $G$ through $e_0$ is
$\vert E \vert^4$.
\end{proposition}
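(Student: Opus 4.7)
The plan is to fix one reference convex cut $\hat{C}$ through $e_0$ (if no convex cut through $e_0$ exists, the bound is vacuous) and encode every other convex cut $C$ through $e_0$ by a bounded amount of discrete data relative to $\hat{C}$. By Proposition~\ref{prop:3over}, $\gamma(C)$ shares at most three overlaps with $\gamma(\hat{C})$, so $\gamma(C)$ performs at most two detours around $\gamma(\hat{C})$. Each such detour is delimited by a fork $p^f$ and its corresponding join $p^j$ in the sense of Definition~\ref{def:fork-join}.

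To each detour I would associate a pair of edges of $G$: the fork edge $\phi$ (the edge of $G$ whose midpoint is $p^f$, in the non-crossing case, or a canonical edge of $G$ incident to the face containing $p^f$ in the crossing case) and the join edge $\psi$ (analogously for $p^j$). By Proposition~\ref{prop:detours}, whenever $p^f$ is not a crossing, the detour on $\gamma(C)$ from $p^f$ to $p^j$ is uniquely determined by the pair $(\phi,\psi)$ together with $\hat{C}$. Consequently $C$ itself is uniquely determined by the at most two pairs $(\phi_1,\psi_1)$ and $(\phi_2,\psi_2)$ it produces, one per detour. Since each such pair lies in $E\times E$, the number of possible pairs is at most $|E|^2$, and the number of compatible choices for $C$ is at most $(|E|^2)^2=|E|^4$. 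The degenerate cases of zero or one detour only contribute at most $1+|E|^2$ further cuts, which is absorbed into the bound.

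The main obstacle is to rigorously handle the case in which a fork $p^f$ (or a join $p^j$) happens to be an isolated crossing of $\gamma(C)$ with $\gamma(\hat{C})$, so that Proposition~\ref{prop:detours} does not apply verbatim. In that situation I would combine Proposition~\ref{prop:deg2}, which bounds the local continuation through the face carrying the crossing to at most two options, with Propositions~\ref{prop:frame} and~\ref{prop:stepping}, which constrain a convex cut that enters $C_{\tau}$ to be pinned down uniquely by its entry and exit edges. Together these show that even in the crossing case the detour is determined up to the choice of at most one further edge of $G$, so the count of at most $|E|^4$ convex cuts through $e_0$ is preserved.
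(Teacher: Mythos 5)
Your argument follows essentially the same route as the paper's: fix a reference convex cut $\hat{C}$, bound the number of detours via Propositions~\ref{prop:3over} and~\ref{prop:touch-cross}, and use Proposition~\ref{prop:detours} to encode each detour by its fork--join pair of edges of $G$, yielding the $\vert E \vert^4$ bound. The paper merely organizes the count a bit differently (splitting into cuts that do or do not cross $\gamma(\hat{C})$, and in the crossing case parametrizing by the endpoints of the crossing sub-path so that the fork or join adjacent to the crossing is pinned down), but the key lemmas and the quartic bound are the same.
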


\begin{proof}
We may assume that there exists a convex cut through $e_0$ with
cut-set $\hat{C}$.
\begin{enumerate}
\item {\em Number of the convex cuts $\gamma(C)$ of $G$ through $e_0$
  that do not cross $\hat{C}$.}  Proposition~\ref{prop:touch-cross}
  yields that there can be at most one fork and corresponding join of
  $\gamma(\hat{C})$ and $\gamma(C)$. The number of $\gamma(C)$ is thus
  bounded by the number of detours around
  $\gamma(\hat{C})$. Proposition~\ref{prop:detours} yields that the
  number of detours cannot surmount the number of forks of
  $\gamma(\hat{C})$ and $\gamma(C)$ times the number of corresponding
  joins of $\gamma(\hat{C})$ and $\gamma(C)$, \ie at most $\vert E
  \vert (\vert E \vert -1) /2$.
\item {\em Number of the convex cuts $\gamma(C)$ of $G$ through $e_0$
  that cross $\hat{C}$.} We first select a sub-path $M_{\hat{C}}$ of
  $\gamma(\hat{C})$ and determine the number of $\gamma(C)$ that cross
  $\gamma(\hat{C})$ on $M_{\hat{C}}$. Let $\gamma(C)$ be such a
  path. $\gamma(C)$ joins $\gamma(\hat{C})$ at the first point of
  $M_{\hat{C}}$, denoted by $p^j$. Using
  Proposition~\ref{prop:touch-cross} we get that $\gamma(C)$ coincides
  with $\gamma(\hat{C})$ between $e_0$ and the last point of
  $M_{\hat{C}}$, with the exception of at most one detour before
  $M_{\hat{C}}$. We already know that $p^j$ is the join of a
  detour. Using Proposition~\ref{prop:detours} we get that the number
  of detours before $M_{\hat{C}}$ is less than $\vert E \vert$. The
  same holds for the number of detours behind $M_{\hat{C}}$. Thus, the
  number of $\gamma(C)$ that cross $\gamma(\hat{C})$ on $M_{\hat{C}}$
  is less than $\vert E \vert^2$. The number of non-empty sub-paths
  $M_{\hat{C}}$ of $\gamma(\hat{C})$, in turn, amounts to $\vert E
  \vert (\vert E \vert -1) /2$. Hence, the number of $\gamma(C)$ that
  cross $\gamma(\hat{C})$ is less than $(\vert E \vert^4 - \vert E
  \vert^3) / 2$.
\end{enumerate}
The total number of convex cuts of $G$ through $e_0$ thus cannot
surmount $\vert E \vert^4$.
\end{proof}

\subsection{Algorithm for finding all convex cuts}
\label{subsec:algo}
We search for convex cuts of $G$ using a subgraph $S$ of $L_G=(V^L,
E^L)$ (for $L_G$ see Definition~\ref{def:LG}).

\begin{definition}[Search graph $S=(V_S, E_S)$] We set\\
$E_S=\{\{e, f\} \in E^L \mid \{e, f\} \in E(F) \mbox{~for some face
    $F$ of $G$ and $e'~\theta'~f'$ for children $e'$ of $e$ and $f'$
    of $f$}\}$. The search graph $S$ is the subgraph of $L_G$ that is
  induced by $E_S$.
\end{definition}

\begin{definition}
We say that $v_S, w_S \in V_S$ are compatible, if (i) a child of
$v_S$ is $\theta'$ related to a child of $w_S$ or (ii) $v_S~\tau~w_S$.
\end{definition}

Theorem~\ref{theorem_subdiv} and Proposition~\ref{lemma:tau1} yield
the following characterization of convex cuts in terms of the search
graph $S$.

\begin{lemma}
\label{charaS} Let $C$ be a non-cyclic [cyclic] cut-set of a cut of
$G$ through $e_0$. Then the cut is convex if and only if $\gamma(C)$
is a maximal path [cycle] in $S$ such that any pair of vertices $v_S
\neq w_S$ on the path [cycle] is compatible.
\end{lemma}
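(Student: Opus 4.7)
The plan is to prove both directions by direct invocation of Theorem~\ref{theorem_subdiv} together with Lemma~\ref{lemma:tau1}. The statement is essentially a repackaging of those two results in the language of the search graph $S$ and the compatibility relation.

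For the forward direction, I would start from the assumption that the cut with cut-set $C$ is convex. Theorem~\ref{theorem_subdiv} immediately gives that every pair $e, f \in C$ satisfies either $e~\tau~f$ or has $\theta'$-related children, so every pair of vertices on $\gamma(C)$ is compatible. It then remains to verify that $\gamma(C)$ is a path or cycle in $S$: for consecutive $e_{i-1}, e_i$, Notation~\ref{not:C} supplies a common bounded face $F_{i-1}$ so that $\{e_{i-1}, e_i\}$ is already an edge of $L_G$, while Lemma~\ref{lemma:tau1} rules out $e_{i-1}~\tau~e_i$; together with pairwise compatibility this forces $\theta'$-related children, which is exactly the defining condition of $E_S$. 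Hence every edge of $\gamma(C)$ lies in $E_S$.

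The backward direction is shorter. Given that $C$ is already assumed to be a cut-set and that every pair of vertices on $\gamma(C)$ is compatible, Theorem~\ref{theorem_subdiv} yields convexity of the cut directly; the graph $S$, the path structure, and maximality play no role in this half of the argument.

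The hard part will be formalizing the maximality claim in the forward direction. In the cyclic case it is automatic because $\gamma(C)$ is already a closed cycle that exhausts $C$. In the non-cyclic case the endpoints $e_0$ and $e_{|C|-1}$ border $F_\infty$, and one has to rule out that some neighbor $e \notin C$ of, say, $e_{|C|-1}$ in $L_G$ could extend $\gamma(C)$ inside $E_S$. For this I would combine the ``at most two choices per bounded face'' statement of Proposition~\ref{prop:deg2} with the observation that $C$ already records the unique edge by which the convex cut leaves every bounded face it enters, so any would-be extension of $\gamma(C)$ would have to pass through $F_\infty$---but the cut has reached $F_\infty$ precisely at $e_{|C|-1}$ and cannot re-enter $G$ through a bounded face without duplicating an edge already in $C$. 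Apart from this bookkeeping about $F_\infty$, the proof is a direct translation between Theorem~\ref{theorem_subdiv} and the definitions of $S$ and of compatibility.
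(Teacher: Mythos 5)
Your proposal is correct and takes essentially the same route as the paper, which in fact records no proof of this lemma beyond the remark that Theorem~\ref{theorem_subdiv} and Lemma~\ref{lemma:tau1} ``yield'' it---precisely the two results you invoke to translate between convexity and pairwise compatibility, with Notation~\ref{not:C} supplying the shared bounded faces that put the edges of $\gamma(C)$ into $E_S$. Your extra bookkeeping for the maximality claim (via Proposition~\ref{prop:deg2} and the position of $e_0$ and $e_{\vert C \vert -1}$ on $E(F_{\infty})$) addresses a point the paper leaves entirely implicit, and while it is sketched rather than fully formalized, it identifies the right ingredients.
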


If the cut-set $C$ of a cut of $G$ is non-cyclic, we have $C \cap
E(F_{\infty}) = \{e_0, e_{\vert C \vert-1}\}$, and if $C$ is cyclic,
we have $C \cap E(F_{\infty}) = \emptyset$.

The two matrices defined next will allow us to check in constant time
whether two vertices of $S$ are compatible. We build a $\vert E \vert
\times \vert E \vert$ matrix $A_{\tau}$ with boolean entries such that
$A_{\tau}(i,j)$ is true if and only if edge $i$ is $\tau$-related to
edge $j$. Likewise, we build a $(2 \vert E \vert) \times 2(\vert E
\vert)$ matrix $A_{\theta'}$ with boolean entries such that
$A_{\theta'}(i,j)$ is true if and only if edge $i$ in $G'$ is
$\theta'$-related to edge $j$ in $G'$.

Our algorithm for finding (the cut-sets of) all convex cuts of $G$
consists of two steps: find the non-cyclic cut-sets starting at each
$e_0 \in E(F_{\infty})$ and then find the cyclic ones starting at each
$e_0 \notin E(F_{\infty})$. In both steps we carry along and extend
paths $(e_0, \dots, e_k)$ of $S$ as long as all its vertices are pairwise
compatible. If $e_0 \in E(F_{\infty})$, there exists only one bounded
face $F_0$ whose boundary contains $e_0$ and the candidates for
$e_1$. If $e_0 \notin E(F_{\infty})$, there exist two such faces, and
we arbitrarily declare one of them to be $F_0$.

If, after starting at $e_0 \in E(F_{\infty})$, a path we carry along
has reached $E(F_{\infty})$ again, we have found a non-cyclic oriented
cut-set of a convex cut and store it (recall that the vertices of the
path are pairwise compatible). When we are done with $e_0$, \ie when
none of the paths that we carry along can be extended, we insert $e_0$
into a tabu list for further searches (we have already found all
convex cuts through $e_0$). The tabu list ensures that we do not end
up with two copies of a convex cut, \ie one for each orientation. Any
non-cyclic cut-set of a convex cut is found by our algorithm since it
carries along one orientation of any pairwise compatible non-cyclic
cut-set starting at $e_0$.

If, after starting at $e_0 \notin E(F_{\infty})$, a path we carry
along has reached $e_0$ again, we have found a cyclic cut-set of a
convex cut and store it. Conversely, any cyclic cut-set of a convex
cut is found by our algorithm since it carries along one orientation
of any pairwise compatible cyclic cut-set starting at $e_0$. Here,
the orientation is given by the choice of $F_0$ (see above). Again, we
insert $e_0$ into a tabu list.

\begin{algorithm}[H]
\caption{Finding the cut-sets $C$ of all convex cuts of a plane graph $G$}
\label{algo:plane}
\begin{algorithmic}[1]

\State Build the search graph and the matrices $A_{\tau}$ and
$A_{\theta'}$.

\State For any start vertex $e_0$ of $S$ with $e_0 \in E(F_{\infty})$
perform a breadth-first-traversal (BFT) starting at $e_0$. The first
path carried along is $(e_0)$. For any new vertex $v_S$ of $S$ that is
visited by the BFT and that is not in the tabu list, and for any path
and cycle carried along, use the matrices $A_{\tau}$ and $A_{\theta'}$
to check whether $v_s$ is compatible with the path or the
cycle. Whenever $E(F_{\infty})$ is reached, store $C$ and put $e_0$
into the tabu list.

\State For any $e_0 \notin E(F_{\infty})$ declare one of the two
bounded faces with $e_0$ on their boundaries to be $F_0$. Proceed as
in the case $e_0 \in E(F_{\infty})$, except that (i) when at $e_0$,
the BFT is restricted such that only edges in $E(F_0)$ are found and
(ii) $C$ is stored only if $e_0$ is reached. Finally, put $e_0$ into
the tabu list.
\end{algorithmic}
\end{algorithm}

\begin{theorem}
\label{thm:algo-time-plane}
Algorithm~\ref{algo:plane} finds all convex cuts of $G=(V,E)$ using
$\bigO(\vert V \vert^7)$ time and $\bigO(\vert V \vert^5)$ space.
\end{theorem}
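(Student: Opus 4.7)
The plan is to separate correctness from complexity analysis; correctness follows from Lemma~\ref{charaS} together with careful bookkeeping of the tabu list, while the complexity is driven by Proposition~\ref{prop:upper_bound}. First I would argue that the BFT in Steps~2 and~3 enumerates exactly the cut-sets of convex cuts. Lemma~\ref{charaS} says that a cut through $e_0$ is convex if and only if $\gamma(C)$ is a maximal pairwise-compatible path or cycle in $S$. Since pairwise compatibility is closed under taking prefixes, the paths carried at any moment of a BFT are exactly the pairwise-compatible prefixes reachable from $e_0$ in $S$, so every convex cut through $e_0$ is produced. The tabu list prevents duplicates: for a non-cyclic cut-set through $e_0$ the reverse orientation cannot be reported later because $e_0$ is then tabu, and for a cyclic cut-set a BFT that has already tabu'd some edge of the cycle cannot complete the cycle at that edge.

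For the preprocessing in Step~1, $G$ plane gives $\vert E \vert = \bigO(\vert V \vert)$, so $G'$ and $L_G$ each have $\bigO(\vert V \vert)$ vertices and edges. Running BFS from every vertex of $G$ and of $G'$ produces all pairwise distances in $\bigO(\vert V \vert^2)$ time; the matrices $A_\tau$ and $A_{\theta'}$ are then filled by a single scan over $E \times E$ and $E' \times E'$, each of cardinality $\bigO(\vert V \vert^2)$. Building $S$ requires iterating over all edge pairs lying on a common face, also $\bigO(\vert V \vert^2)$. Hence preprocessing is negligible relative to the main loop.

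For the main loop the decisive accounting step is to bound the number of pairwise-compatible prefixes ever instantiated by the BFT rooted at a fixed $e_0$. Every such prefix extends to a maximal pairwise-compatible path, which by Lemma~\ref{charaS} corresponds to a convex cut through $e_0$. By Proposition~\ref{prop:upper_bound} there are at most $\bigO(\vert V \vert^4)$ such convex cuts, and each has length $\bigO(\vert V \vert)$, so the total number of carried prefixes is $\bigO(\vert V \vert^5)$. Extending one prefix by a new $S$-vertex requires $\bigO(\vert V \vert)$ compatibility tests, each a constant-time lookup in $A_\tau$ or $A_{\theta'}$, so the work per root is $\bigO(\vert V \vert^6)$; summing over the $\bigO(\vert V \vert)$ choices of $e_0$ yields the time bound $\bigO(\vert V \vert^7)$. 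The space bound is $\bigO(\vert V \vert^5)$ because at most that many prefixes live concurrently, and the output storage, being at most $\bigO(\vert V \vert^4)$ distinct cuts of size $\bigO(\vert V \vert)$, matches this.

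The step I expect to be the main obstacle is the bound on the number of carried prefixes: the naive count, namely all paths in $S$, is exponential in general, so one really must use the fact that \emph{every} pairwise-compatible prefix extends to a maximal pairwise-compatible path, and then feed this into Proposition~\ref{prop:upper_bound}. A secondary obstacle is verifying that the tabu list neither suppresses any convex cut nor admits duplicates; this requires arguing that, at the moment a convex cut is first discovered, its starting edge in the algorithm is not yet tabu, while the same cut cannot be rediscovered from any of its other edges later because at least one of its edges—namely the original starting edge—is then already tabu.
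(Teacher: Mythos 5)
Your proposal follows essentially the same route as the paper: correctness via Lemma~\ref{charaS} together with the tabu list, quadratic-time preprocessing of $A_{\tau}$ and $A_{\theta'}$, and a count of the paths carried along per starting edge bounded through Proposition~\ref{prop:upper_bound}, yielding $\bigO(\vert V\vert^7)$ time and $\bigO(\vert V\vert^5)$ space. The obstacle you flag --- that every carried pairwise-compatible prefix must extend to the cut-set of a convex cut rather than dead-ending, so that Proposition~\ref{prop:upper_bound} really bounds the number of carried paths --- is precisely the step the paper itself asserts without further elaboration, so your accounting matches theirs.
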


\begin{proof}
Algorithm~\ref{algo:plane} is correct due to Lemma~\ref{charaS} and
the fact that we find any maximal path and cycle in $S$ with pairwise
compatible vertices exactly once.

To build the matrix $A_{\tau}$, we iterate over all vertices $v_S$ of
$S$ and identify all vertices of $S$ that are $\tau$-related to
$v_S$. For a given vertex $v_S$ this can be done in $\bigO(\vert V
\vert)$ time. Indeed, if $e = \{u, v\}$ is the edge in $G$ that equals
$v_S$, we can compute the distances of any $w \in V$ to $u$ and $v$ in
$\bigO(\vert V \vert)$, \eg by using BFT. For any $f \in E$ we can
then determine in constant time whether $e~\tau~f$.

To build the matrix $A_{\theta'}$, we proceed as above, except that we
use $G'$ instead of $G$. The running time for computing $A_{\tau}$ and
$A_{\theta'}$ is $\bigO(\vert E \vert^2)$, and $A_{\tau}$
and $A_{\theta'}$ take $\bigO(\vert E \vert^2)$ space.

Any time the BFT reaches a new vertex $v_S$ of $S$, the paths and
cycles carried along need to be checked for compatibility with
$v_s$. Using the matrices $A_{\tau}$ and $A_{\theta'}$, this takes
$\bigO(\vert E \vert)$ time per path. According to
Proposition~\ref{prop:upper_bound} the number of convex cuts of $G$
through a starting edge $e_0$ is bounded by $\vert E \vert^4$. This is
also the maximal number of paths that we carry along and that need to
be checked. Hence, processing $v_S$ takes time $\bigO(\vert E
\vert)^5$. Finding all convex cuts through a starting edge can then be
done in $\bigO(|E|^6)$ time and, since there are $\vert E \vert$
starting edges, total running time is $\bigO(\vert E \vert^7)$. Storing the
$\vert E \vert^4$ paths and cycles takes $\bigO(\vert E \vert^5)$
space.

The time and space requirements of (constructing) the search graphs
and the tabu lists are below the time and space requirements specified
so far. The claim now follows from $\bigO(\vert E \vert) = \bigO(\vert
V \vert)$ which, in turn, is a consequence of $G$ being plane.
\end{proof}


%
%
%

\section{Conclusions}
\label{sec:conclusions}
We have presented an algorithm for finding all convex cuts of a plane
graph in polynomial time. To the best of our knowledge, it is the
first polynomial-time algorithm for this task. We have also presented
an algorithm that computes all convex cuts of a not necessarily plane
but bipartite graph in cubic time.

Both algorithms are based on binary, symmetric, but generally not
transitive relations on edges. In the case of a plane graph $G$ we
employed two relations: (i) the \djoko on the edges of a subdivision
of $G$ and (ii) another relation on the edges of $G$. In case of a
bipartite graph it was sufficient to employ the \djoko on the graph's
edges.

To prove that the number of convex cuts of a plane graph is not
exponential, we employed results on the intersection pattern of convex
cuts that are based on a specific embedding of the cuts. Thus, a
connection to the first part of the paper arises, where we defined a
sub-class of plane graphs via the intersection patterns of certain
embedded cuts (which all turned out to be convex). In particular, the
transition from the sub-class to general plane graphs is reflected by
a generalization of the intersection patterns of convex cuts from
arrangements of pseudolines to patterns where forks and joins of
convex cuts are possible.

The characterization of convex cuts of general graphs, as given by
Theorem~\ref{theorem_subdiv}, was instrumental in finding all convex
cuts of a bipartite or a plane graph in polynomial time. We reckon
that this new characterization of convex cuts of general graphs also
helps when devising polynomial-time algorithms for finding convex cuts
in graphs from other classes.

\paragraph*{Acknowledgments.}
We thank our colleagues Andreas Gemsa, Peter Sanders, and Christian Schulz for helpful
discussions on the topic.

\bibliographystyle{abbrv}
\bibliography{roland,paper,paper2}

\begin{thebibliography}{10}

\bibitem{Artigas20111968}
D.~Artigas, S.~Dantas, M.~Dourado, and J.~Szwarcfiter.
\newblock Partitioning a graph into convex sets.
\newblock {\em Discrete Mathematics}, 311(17):1968 -- 1977, 2011.

\bibitem{Balister2009509}
P.~Balister, S.~Gerke, G.~Gutin, A.~Johnstone, J.~Reddington, E.~Scott,
  A.~Soleimanfallah, and A.~Yeo.
\newblock Algorithms for generating convex sets in acyclic digraphs.
\newblock {\em Journal of Discrete Algorithms}, 7(4):509 -- 518, 2009.

\bibitem{bichot2011graph}
C.~Bichot and P.~Siarry.
\newblock {\em Graph Partitioning}.
\newblock Wiley, 2011.

\bibitem{Bjoerner99a}
A.~{Bj\"{o}rner}, M.~{Las Vergnas }, B.~{Sturmfels}, N.~{White}, and
  G.~{Ziegler}.
\newblock {\em Oriented Matroids}.
\newblock Encyclopedia of Mathematics and its Applications, 46. Cambridge
  University Press, second edition, 1999.

\bibitem{BulucMSSS13recent}
A.~Bulu\c{c}, H.~Meyerhenke, I.~Safro, P.~Sanders, and C.~Schulz.
\newblock Recent advances in graph partitioning.
\newblock eprint arXiv:1311.3144, \url{http://arxiv.org/abs/1311.3144}, Dec
  2013.

\bibitem{Chepoi97a}
V.~{Chepoi}, M.~{Deza}, and V.~{Grishukhin}.
\newblock {Clin d'{\oe}il on $L_1$-embeddable planar graphs}.
\newblock {\em Discrete Applied Mathematics}, 80:3--19, 1997.

\bibitem{DBLP:conf/ipps/DellingGRW11}
D.~Delling, A.~V. Goldberg, I.~Razenshteyn, and R.~F.~F. Werneck.
\newblock Graph partitioning with natural cuts.
\newblock In {\em Proc. 25th IEEE Intl. Parallel and Distributed Processing
  Symposium (IPDPS'11)}, pages 1135--1146, 2011.

\bibitem{Diestel2006a}
R.~Diestel.
\newblock {\em Graph Theory}.
\newblock Graduate texts in mathematics. Springer, 2006.

\bibitem{Djokovic73a}
D.~{Djokovi\'{c}}.
\newblock {Distance-Preserving Subgraphs of Hypercubes}.
\newblock {\em Journal of Combinatorial Theory B}, 14:263--267, 1973.

\bibitem{DouradoPRS12convexity}
M.~Dourado, F.~Protti, D.~Rautenbach, and J.~Szwarcfiter.
\newblock On the convexity number of graphs.
\newblock {\em Graphs and Combinatorics}, 28:333--345, 2012.

\bibitem{Eppstein2006a}
D.~{Eppstein}.
\newblock {Cubic Partial Cubes from Simplicial Arrangements}.
\newblock {\em The electronic journal of combinatorics}, 13:$\#$R79, 2006.

\bibitem{Glantz2013c}
R.~{Glantz} and H.~{Meyerhenke}.
\newblock Finding all convex cuts of a plane graph in cubic time.
\newblock In {\em Proc. 8th International Conference on Algorithms and
  Complexity (CIAC'13)}, pages 246--263, 2013.

\bibitem{MeyerhenkeMS09new}
H.~Meyerhenke, B.~Monien, and T.~Sauerwald.
\newblock A new diffusion-based multilevel algorithm for computing graph
  partitions.
\newblock {\em Journal of Parallel and Distributed Computing}, 69(9):750--761,
  2009.
\newblock Best Paper Awards and Panel Summary: IPDPS 2008.

\bibitem{Ovchinnikov2008a}
S.~{Ovchinnikov}.
\newblock {Partial cubes: Structures, characterizations, and constructions}.
\newblock {\em Discrete Mathematics}, 308:5597--5621, 2008.

\bibitem{Peterin2008a}
I.~Peterin.
\newblock A characterization of planar partial cubes.
\newblock {\em Discrete Mathematics}, 308(24):6596--6600, 2008.

\bibitem{Thomassen92a}
C.~{Thomassen}.
\newblock {The Jordan-Sch\"{o}nflies Theorem and the Classification of
  Surfaces}.
\newblock {\em The American Mathematical Monthly}, 99:116--130, 1992.

\bibitem{Wilkeit90a}
E.~{Wilkeit}.
\newblock {Isometric Embeddings in Hamming Graphs}.
\newblock {\em Journal of Combinatorial Theory B}, 14:179--197, 1990.

\end{thebibliography}

\end{document}